\def\BibTeX{{\rm B\kern-.05em{\sc i\kern-.025em b}\kern-.08emT\kern-.1667em\lower.7ex\hbox{E}\kern-.125emX}}
\xpatchcmd\algorithmic{\leftmargin\labelwidth}{\leftmargin+\labelsep}{}{}
\algrenewcommand\algorithmicrequire{\textbf{Input:}}
\algrenewcommand\algorithmicensure{\textbf{Output:}}
\newlist{todolist}{itemize}{2}
\setlist[todolist]{label=$\square$}
\newlist{hybrid}{enumerate}{1}
\setlist[hybrid]{align=left, itemsep=2pt, topsep=8pt, leftmargin=12pt, label={$\text{Hyb}_{\arabic*}$}, ref={\arabic*}}
\newlist{algos}{itemize}{2}
\setlist[algos]{align=left,itemsep=2pt,left=0pt,label=•}
\newlist{protocol_steps}{enumerate}{2}
\setlist[protocol_steps]{align=left,itemsep=0pt,topsep=0pt,left=0pt,label={\arabic*.},ref={\arabic*}}
\newlist{requirements}{enumerate}{1}
\setlist[requirements]{label={\textbf{{RQ}$\mathbf{\arabic*}$}},leftmargin=0pt,align=left,labelsep=5pt,labelwidth=!,ref={\arabic*}}
\tikzset{
    state/.style={
        circle,
        draw=black,
        minimum width=1cm
    },
    client/.style={
        circle,
        draw=black,
        minimum width=0.5cm
    },
    party/.style={
        circle,
        draw=black,
        minimum width=0.5cm
    },
    lock/.style={
        fill=white,
        pos=0.65
    },
    whitebg/.style={
        midway,
        fill=white
    },
    protocol/.style={
        fill=white,
        draw=black
    },
    mpc_party/.style={
        circle,
        draw=black,
        minimum width=0.1cm
    },
}
\newcommand{\defeq}{\vcentcolon=}
\newcommand{\eqdef}{=\vcentcolon}
\DeclareMathAlphabet{\mathmybb}{U}{bbold}{m}{n}
\theoremstyle{definition}
\newtheorem{definition}{Definition}[section]
\newtheorem{protocol}[definition]{Protocol}
\newtheorem{theorem}[definition]{Theorem}
\newtheorem{lemma}[definition]{Lemma}
\setlist[itemize]{topsep=0pt, noitemsep}
\setlist[enumerate]{noitemsep, topsep=0pt}
\definecolor{Gray}{gray}{0.65}
\definecolor{LightGray}{gray}{0.9}
\definecolor{myred}{RGB}{178,34,34}
\definecolor{mygreen}{RGB}{107,142,35}
\definecolor{wifigray}{RGB}{200,200,200}
\definecolor{wifigreen}{RGB}{0,100,0}
\newacronym{mpc}{MPC}{Multi-Party Computation}
\newacronym{fhe}{FHE}{Fully Homomorphic Encryption}
\newacronym{smpc}{SMPC}{Secure Multi-Party Computation}
\newacronym{2pc}{2PC}{2-party computation}
\newacronym{3pc}{3PC}{3-party computation}
\newacronym{4pc}{4PC}{4-party computation}
\newacronym{5pc}{5PC}{5-party computation}
\newacronym{sgd}{SGD}{stochastic gradient descent}
\newacronym{pca}{PCA}{principle component analysis}
\newacronym{god}{GOD}{guaranteed output delivery}
\newacronym{knn}{kNN}{k-Nearest-Neighbors}
\newacronym{mad}{MAD}{Median-Absolute-Deviation}
\newacronym{ml}{ML}{Machine Learning}
\newacronym{cl}{CL}{collaborative learning}
\newacronym{abb}{ABB}{arithmetic black-box}
\newacronym{soc}{SOC}{secure outsourced computation}
\newacronym{zkpok}{ZKPoK}{zero-knowledge proof-of-knowledge}
\newacronym{zkpot}{PoT}{proof-of-training}
\newacronym{zkpoi}{PoI}{proof-of-inference}
\newacronym{zkpoc}{PoC}{proof-of-consistency}
\newacronym{poc}{PoC}{Proof-of-Consistency}
\newacronym{pol}{PoL}{Proof-of-Learning}
\newacronym{mac}{MAC}{message authentication code}
\newacronym{lsss}{LSSS}{linear secret sharing scheme}
\newacronym{idabort}{ID-Abort}{Identifiable Abort}
\newacronym{edabit}{edaBit}{extended doubly authenticated bit}
\newacronym{msm}{MSM}{multi-scalar multiplication}
\newacronym{lime}{LIME}{Local Interpretable Model-Agnostic Explanations}
\newacronym{shap}{SHAP}{Shapley Additive Explanations}
\newacronym{dos}{DoS}{Denial-of-Service}
\newacronym{pki}{PKI}{public-key infrastructure}
\newacronym{sis}{SIS}{short integer solution}
\newacronym{ppml}{PPML}{Privacy-Preserving Machine Learning}
\newacronym{rtt}{RTT}{round-trip time}
\newacronym{lan}{LAN}{local area network}
\newacronym{wan}{WAN}{wide area network}
\newacronym{pvc}{PVC}{Pedersen Vector Commitment}
\newacronym{vss}{VSS}{Verifiable Secret Sharing}
\newglossaryentry{training}{name=training,description={}}
\newglossaryentry{deployment}{name=deployment,description={}}
\newglossaryentry{auditing}{name=auditing,description={}}
\newcommand{\role}[1]{#1}
\newglossaryentry{r:datasource}{name={\role{data source}},description={},plural={\role{data sources}}}
\newglossaryentry{r:inputparty}{name={\role{data owner}},description={},plural={\role{data owners}}}
\newglossaryentry{r:tcomputer}{name={\role{training computer}},description={},plural={\role{training computers}}}
\newglossaryentry{r:modelowner}{name={\role{model owner}},description={},plural={\role{model owners}}}
\newglossaryentry{r:icomputer}{name={\role{inference computer}},description={},plural={\role{inference computers}}}
\newglossaryentry{r:acomputer}{name={\role{audit computer}},description={},plural={\role{audit computers}}}
\newglossaryentry{r:auditrequester}{name={\role{client}},description={}}
\newglossaryentry{r:client}{name={\role{client}},description={}}
\newglossaryentry{r:participants}{name={party},description={},plural={parties}}
\newglossaryentry{audittrigger}{name={audit trigger},description={}}
\newglossaryentry{worldpublic}{name={\textit{super public}},description={}}
\newglossaryentry{public}{name={\textit{public}},description={}}
\newglossaryentry{private-entity}{name={\textit{private-entity}},description={}}
\newglossaryentry{secret}{name={\textit{secret}},description={}}
\newglossaryentry{req:privacy}{name={secrecy},description={}}
\newglossaryentry{req:authenticity}{name={input integrity},description={}}
\newglossaryentry{req:resilience}{name={resilience},description={}}
\newglossaryentry{consistencycheck}{name={consistency check},description={}}
\newglossaryentry{sc:adult}{name=\underline{(W1:Adult)},description={}}
\newglossaryentry{sc:mnist}{name=\underline{(W2:MNIST)},description={}}
\newglossaryentry{sc:cifar}{name=\underline{(W3:CIFAR-10)},description={}}
\newglossaryentry{sc:qnli}{name=\underline{(W4:QNLI)},description={}}
\newglossaryentry{f:knnshapley}{name={KNN-Shapley},description={}}
\newglossaryentry{f:robustness}{name={Robustness},description={}}
\newglossaryentry{f:fairness}{name={Fairness},description={}}
\newglossaryentry{f:shap}{name={Kernel-SHAP},description={}}
\newglossaryentry{f:camel}{name={Camel},description={}}
\newcommand{\ssGeneric}[1]{\ensuremath{\left[\,#1\,\right]}}
\newcommand{\ssIdeal}[1]{\ensuremath{[\![\,#1\,]\!]}}
\newcommand{\ssA}[1]{\ensuremath{\ssGeneric{#1}_{\sring_{\smod}}}}
\newcommand{\ssB}[1]{\ensuremath{\ssGeneric{#1}_{\sring_{\smod^\prime}}}}
\newcommand{\ssBit}[1]{\ensuremath{\ssGeneric{#1}_{\sring_{2}}}}
\newcommand{\ssField}[1]{\ensuremath{\ssGeneric{#1}_{\sfield}}}
\newcommand{\todoCameraReady}[1]{\textcolor{olive}{Camera ready submission: #1}}
\renewcommand{\todoCameraReady}[1]{}
\newcommand{\evalnum}[1]{#1}
\newcommand{\blue}[1]{{\color{blue}{#1}}}
\newcommand{\green}[1]{{\color{olive}{#1}}}
\newcommand{\shorten}[1]{}
\newcommand{\lsec}[1]{\label{sec:#1}}
\newcommand{\lfig}[1]{\label{fig:#1}}
\newcommand{\ltab}[1]{\label{tab:#1}}
\newcommand{\rsec}[1]{\S\ref{sec:#1}}
\newcommand{\rfig}[1]{Fig.~\ref{fig:#1}}
\newcommand{\rFig}[1]{Figure~\ref{fig:#1}}
\newcommand{\rtab}[1]{Table~\ref{tab:#1}}
\newcommand{\ldef}[1]{\label{def:#1}}
\newcommand{\rdef}[1]{Definition~\ref{def:#1}}
\newcommand{\llem}[1]{\label{lem:#1}}
\newcommand{\rlem}[1]{Lemma~\ref{lem:#1}}
\newcommand{\cfref}[1]{cf. #1}
\newcommand{\lstep}[1]{\label{step:#1}}
\newcommand{\rstep}[1]{Step~\ref{step:#1}}
\newcommand{\smodel}{\ensuremath{M}\xspace}
\newcommand{\scommitment}{\ensuremath{c}\xspace}
\newcommand{\scommitmentModel}{\ensuremath{\scommitment_{\smodel}}\xspace}
\newcommand{\scomtrainsets}{\ensuremath{\scommitment_{\strainset_1},\ldots,\scommitment_{\strainset_{\sNumParties}}}\xspace}
\newcommand{\scomtrainsetsconcat}{\ensuremath{\scommitment_{\strainset_1}\concat\ldots\concat\scommitment_{\strainset_{\sNumParties}}}\xspace}
\newcommand{\scomtrainseti}{\ensuremath{\scommitment_{\strainset_i}}\xspace}
\newcommand{\scommitmentX}{\ensuremath{\scommitment_{\sX}}\xspace}
\newcommand{\scommitmentY}{\ensuremath{\scommitment_{y}}\xspace}
\newcommand{\strainrand}{\ensuremath{J}\xspace}
\newcommand{\strainrandSpace}{\ensuremath{\mathcal{J}}\xspace}
\newcommand{\scomtrainrand}{\ensuremath{\scommitment_{\strainrand}}}
\newcommand{\ssig}{\ensuremath{\sigma}}
\newcommand{\sproofTrain}{\ensuremath{\pi_{\text{T}}}\xspace}
\newcommand{\sproofInference}{\ensuremath{\pi_{\text{I}}}\xspace}
\newcommand{\ssigTrain}{\ensuremath{\ssig_{\text{T}}}\xspace}
\newcommand{\ssigTrainingComputer}{\ensuremath{\ssig_{\sPartyServer}}\xspace}
\newcommand{\ssigInferenceComputer}{\ensuremath{\ssig_{\sPartyInferenceComputer}}\xspace}
\newcommand{\ssigReceipt}{\ssig_{\text{I}}\xspace}
\newcommand{\scomMessageSpace}{\ensuremath{\mathcal{M}}\xspace}
\newcommand{\scomRandomnessSpace}{\ensuremath{\mathcal{R}}\xspace}
\newcommand{\scomCommitmentSpace}{\ensuremath{\mathcal{C}}\xspace}
\newcommand{\scomRandomnessModel}{\ensuremath{r_{\smodel}}\xspace}
\newcommand{\scomRandomnessData}{\ensuremath{r_{\strainset}}\xspace}
\newcommand{\scomRandomnessDataI}{\ensuremath{r_{\strainset_i}}\xspace}
\newcommand{\scomRandomnessDatas}{\ensuremath{r_{\strainset_1},\ldots,r_{\strainset_{\sNumParties}}}\xspace}
\newcommand{\scomRandomnessTrainrand}{\ensuremath{r_{\strainrand}}\xspace}
\newcommand{\scomRandomnessX}{\ensuremath{r_{\sX}}\xspace}
\newcommand{\scomRandomnessY}{\ensuremath{r_{y}}\xspace}
\newcommand{\sComSetup}{\textsf{COM.Setup}\xspace}
\newcommand{\sComCommit}{\textsf{COM.Commit}\xspace}
\newcommand{\sComVerify}{\textsf{COM.Verify}\xspace}
\newcommand{\sPot}{\textsf{POT}\xspace}
\newcommand{\sPotSetup}{\textsf{POT.Setup}\xspace}
\newcommand{\sPotProve}{\textsf{POT.Prove}\xspace}
\newcommand{\sPotVerify}{\textsf{POT.Verify}\xspace}
\newcommand{\sPoI}{\textsf{POI}\xspace}
\newcommand{\sPoiSetup}{\textsf{POI.Setup}\xspace}
\newcommand{\sPoiProve}{\textsf{POI.Prove}\xspace}
\newcommand{\sPoiVerify}{\textsf{POI.Verify}\xspace}
\newcommand{\sSigSetup}{\textsf{SIG.Setup}\xspace}
\newcommand{\sSigSign}{\textsf{SIG.Sign}\xspace}
\newcommand{\sSigDistSign}{\textsf{SIG.DistSign}\xspace}
\newcommand{\sSigVerify}{\textsf{SIG.Verify}\xspace}
\newcommand{\sPoCLabel}{\textsf{PoC}\xspace}
\newcommand{\sPoCSetup}{\textsf{\sPoCLabel.Setup}\xspace}
\newcommand{\sPoCCommit}{\textsf{\sPoCLabel.Commit}\xspace}
\newcommand{\sPoCEvalNoLabel}{\textsf{Check}\xspace}
\newcommand{\sPoCEval}{\textsf{\sPoCLabel.\sPoCEvalNoLabel}\xspace}
\newcommand{\sPoCEvalID}{\ensuremath{\textsf{\sPoCLabel.Check}_\textsf{[ID]}}}
\newcommand{\sCCLabel}{\textsf{CC}\xspace}
\newcommand{\sCCSetup}{\textsf{\sCCLabel.Setup}\xspace}
\newcommand{\sCCCommit}{\textsf{\sCCLabel.Commit}\xspace}
\newcommand{\sCCEval}{\textsf{\sCCLabel.Check}\xspace}
\newcommand{\PCScheme}{\textsf{PC}}
\newcommand{\PCSetup}{\textsf{\PCScheme.Setup}}
\newcommand{\PCCommit}{\textsf{\PCScheme.Commit}}
\newcommand{\PCProve}{\textsf{\PCScheme.Prove}}
\newcommand{\PCCheck}{\textsf{\PCScheme.Check}}
\newcommand{\sInputs}{\ensuremath{\mathbf{x}}\xspace}
\newcommand{\sRandomness}{\ensuremath{r}\xspace}
\newcommand{\sProtocolRandomness}{\ensuremath{\omega}\xspace}
\newcommand{\PPT}{\textsf{PPT}\xspace}
\newcommand{\pp}{\ensuremath{\textsf{pp}}\xspace}
\newcommand{\pppot}{\ensuremath{\textsf{pp}_\textsf{pot}}\xspace}
\newcommand{\pppoi}{\ensuremath{\textsf{pp}_\textsf{poi}}\xspace}
\newcommand{\pppoc}{\ensuremath{\textsf{pp}_\textsf{poc}}\xspace}
\newcommand{\pkIPI}{\ensuremath{\pk_{\sPartyClient_i}}\xspace}
\newcommand{\skIPI}{\ensuremath{\sk_{\sPartyClient_i}}\xspace}
\newcommand{\pkTC}{\ensuremath{\pk_\sPartyServer}\xspace}
\newcommand{\skTCJ}{\ensuremath{\sk_{\sPartyServer_j}}\xspace}
\newcommand{\pkMH}{\ensuremath{\pk_{\sPartyModelHolder}}\xspace}
\newcommand{\pkMHK}{\ensuremath{\pk_{\sPartyModelHolder_k}}\xspace}
\newcommand{\skMHK}{\ensuremath{\sk_{\sPartyModelHolder_k}}\xspace}
\newcommand{\pkIC}{\ensuremath{\pk_{\sPartyInferenceComputer}}\xspace}
\newcommand{\skICJ}{\ensuremath{\sk_{\sPartyInferenceComputer_j}}\xspace}
\newcommand{\sX}{\ensuremath{x}\xspace}
\newcommand{\sXFeatureSize}{\ensuremath{l}\xspace}
\newcommand{\sY}{\ensuremath{y}\xspace}
\newcommand{\spredictionX}{\ensuremath{\tilde{x}}\xspace}
\newcommand{\spredictionY}{\ensuremath{\tilde{y}}\xspace}
\newcommand{\sdomainX}{\mathcal{X}}
\newcommand{\sdomainY}{\mathcal{Y}}
\newcommand{\sNumPartiesGenericParty}{\ensuremath{{N}}\xspace}
\newcommand{\sNumParties}{\ensuremath{{N_{\sPartyClient}}}\xspace}
\newcommand{\sNumServers}{\ensuremath{{N_{\sPartyServer}}}\xspace}
\newcommand{\sNumInferenceComputers}{\ensuremath{{N_{\sPartyInferenceComputer}}}\xspace}
\newcommand{\sNumAuditComputers}{\ensuremath{{N_{\sPartyAuditComputer}}}\xspace}
\newcommand{\sNumAuditors}{\ensuremath{{N_{\sPartyAuditor}}}\xspace}
\newcommand{\sNumModelHolders}{\ensuremath{{N_{\sPartyModelHolder}}}\xspace}
\newcommand{\sModelSize}{\ensuremath{m}\xspace}
\newcommand{\strainset}{D\xspace}
\newcommand{\strainsetInputs}{\mathbf{D}\xspace}
\newcommand{\strainsets}{\ensuremath{\strainset_1, \ldots, \strainset_{\sNumParties}}}
\newcommand{\sauditfunctionoutput}{\ensuremath{o}\xspace}
\newcommand{\spredictionInputs}{\ensuremath{\mathbf{x}}\xspace}
\newcommand{\sauditInputs}{\ensuremath{\mathbf{a}}\xspace}
\newcommand{\real}{\textnormal{\texttt{Real}}}
\newcommand{\ideal}{\textnormal{\texttt{Ideal}}}
\newcommand{\advAux}{\ensuremath{\textsf{aux}_{\sadversary}}}
\newcommand{\aux}{\ensuremath{\textsf{aux}}}
\newcommand{\Cset}{\mathcal{C}}
\newcommand{\sNumSamples}{d\xspace}
\newcommand{\sfeature}{\phi}
\newcommand{\sAlgorithm}{\ensuremath{\mathcal{T}\xspace}}
\newcommand{\sprotocolCC}{\ensuremath{\Pi_\text{cc}}\xspace}
\newcommand{\sprotocolCCID}{\ensuremath{\Pi_\text{cc [ID]}}\xspace}
\newcommand{\sfairDistance}{d_\sdomainX}
\newcommand{\sexplainmodel}{g\xspace}
\newcommand{\sproximity}{\ensuremath{\vec{z}}\xspace}
\newcommand{\sabb}{\ensuremath{\mathcal{F}_{\text{ABB}}}\xspace}
\newcommand{\sabbid}{\ensuremath{\mathcal{F}_{\text{ABB [ID]}}}\xspace}
\newcommand{\sidealrand}{\ensuremath{\mathcal{F}_\text{RAND}}\xspace}
\newcommand{\sabbec}{\ensuremath{\mathcal{F}_{\text{ABB}}^{\text{[EC]}}}\xspace}
\newcommand{\sabbATrain}{\ensuremath{\sabb\textsf{.Train}_\sAlgorithm}\xspace}
\newcommand{\sabbAInf}{\ensuremath{\sabb\textsf{.Predict}}\xspace}
\newcommand{\sabbAAudit}{\ensuremath{\sabb\textsf{.Audit}}\xspace}
\newcommand{\sabbAAuditID}{\ensuremath{\sabbid\textsf{.Audit}}\xspace}
\newcommand{\sabbARand}{\ensuremath{\sabb\textsf{.RAND}}\xspace}
\newcommand{\sideal}{\ensuremath{\mathcal{F}}\xspace}
\newcommand{\sidealfun}{\ensuremath{\mathcal{F}_{\text{Arc}}}\xspace}
\newcommand{\sidealinputdata}{\texttt{InputData}\xspace}
\newcommand{\sidealoutput}{\texttt{Output}\xspace}
\newcommand{\sidealaudittrigger}{\texttt{Audit}\xspace}
\newcommand{\sidealdeliver}{\texttt{Deliver}\xspace}
\newcommand{\sidealabort}{\texttt{Abort}\xspace}
\newcommand{\sidealinference}{\texttt{Predict}\xspace}
\newcommand{\sidealpki}{\ensuremath{\mathcal{F}_{\text{PKI}}}}
\newcommand{\sidealcrs}{\ensuremath{\mathcal{F}_{\text{CRS}}}}
\newcommand{\sideallist}{\ensuremath{L_\text{P}}\xspace}
\newcommand{\sidealmodelstore}{\ensuremath{L_\text{M}}\xspace}
\newcommand{\sidealmalicious}{\texttt{Malicious}\xspace}
\newcommand{\sidealbroadcast}{\ensuremath{\mathcal{F}_{\text{BC}}}\xspace}
\newcommand{\sidealmodelid}{\ensuremath{\textsf{id}_\smodel}\xspace}
\newcommand{\sadversary}{\ensuremath{\mathcal{A}}\xspace}
\newcommand{\sresponse}{\ensuremath{a}}
\newcommand{\ssim}{\ensuremath{\mathcal{S}}\xspace}
\newcommand{\sauditfunction}{\ensuremath{f_\texttt{audit}}}
\newcommand{\sauditfunctionSpace}{\ensuremath{F_\texttt{audit}}}
\newcommand{\spartyinput}{D}
\newcommand{\sample}{\ensuremath{\overset{{\scriptscriptstyle\$}}{\leftarrow}}}
\newcommand{\sfield}{\ensuremath{\mathbb{F}}\xspace}
\newcommand{\sgroup}{\ensuremath{\mathbb{G}}\xspace}
\newcommand{\sring}{\ensuremath{\mathbb{Z}}\xspace}
\newcommand{\smod}{\ensuremath{M}\xspace}
\newcommand{\sprover}{\ensuremath{\mathbb{P}}\xspace}
\newcommand{\sverifier}{\ensuremath{\mathbb{V}}\xspace}
\newcommand{\sinputSamples}{\ensuremath{\mathbf{x}}}
\newcommand{\sinputSample}{\ensuremath{x}}
\newcommand{\concat}{\ensuremath{\mathbin\Vert}}
\newcommand{\sdecomp}{\texttt{bitdec}\xspace}
\newcommand{\scomp}{\texttt{bitcom}\xspace}
\newcommand{\spolynomial}{\ensuremath{g}\xspace}
\newcommand{\sgenerator}{\ensuremath{h}\xspace}
\newcommand{\sPartySet}{\ensuremath{\mathcal{P}}\xspace}
\newcommand{\sPartyAll}{\ensuremath{\texttt{P}}\xspace}
\newcommand{\sPartyClient}{\ensuremath{\texttt{DH}}\xspace}
\newcommand{\sPartyServer}{\ensuremath{\texttt{TC}}\xspace}
\newcommand{\sPartyInferenceComputer}{\ensuremath{\texttt{IC}}\xspace}
\newcommand{\sPartyModelHolder}{\ensuremath{\texttt{M}}\xspace}
\newcommand{\sPartyAuditor}{\ensuremath{\texttt{C}}\xspace}
\newcommand{\sPartyAuditComputer}{\ensuremath{\texttt{AC}}\xspace}
\newcommand{\sPartyCorrupted}{\ensuremath{M_\sPartyAll}\xspace}
\newcommand{\sprotocol}{\ensuremath{\Pi}\xspace}
\newcommand{\sprotocolcamel}{\ensuremath{\sprotocol_\text{Arc}}\xspace}
\newcommand{\cPlainInf}[1]{\blue{#1}}
\newcommand{\cPlainTrain}[1]{\green{#1}}
\renewcommand{\secparam}{\ensuremath{\smash{1^\lambda}}} %
\newcommand{\citeextended}{\ifdefined\isnotextended~\cite{arc_full}\fi\xspace}
\newglossaryentry{a:ped}{name={$\sPoCLabel_\textsf{PED}$},description={}}
\newglossaryentry{a:sha3}{name={$\sPoCLabel_\textsf{SHA3}$},description={}}
\newcommand{\figcaptionvspace}{\vspace{-0.5em}} %
\renewcommand{\figcaptionvspace}{}
\newcommand{\subsecspacingtop}{}
\newcommand{\subsecspacingbot}{\vspace{0pt}}
\newcommand{\fakeparspacingtop}{\vspace{0.35em}}
\renewcommand{\fakeparspacingtop}{}
\newcommand{\figcaptionvspacebot}{}
\definecolor{darkslategray}{rgb}{0.18, 0.31, 0.31}
\newcommand{\fakeparagraph}[1]{\vskip 0pt\noindent\textbf{#1 }}
\newcommand{\indentparagraph}[1]{\textbf{#1 }}
\newcommand{\oursystem}{Arc\xspace}
\renewcommand{\emph}[1]{\textit{#1}}
   \let\xxwrite\write
   \protected\def\write{\immediate\xxwrite}%
   {\tiny XX\BODY XX}}
    \newenvironment{myhideenv}{}{}
\def\isnotanon{1}
\newcommand{\wifi}[1][4]{%
    \colorlet{col1}{wifigray}%
    \colorlet{col2}{wifigray}%
    \colorlet{col3}{wifigray}%
    \colorlet{col4}{wifigray}%
    \ifnum0<#1
    \colorlet{col1}{wifigreen}%
    \fi%
    \ifnum1<#1
    \colorlet{col2}{wifigreen}%
    \fi%
    \ifnum2<#1
    \colorlet{col3}{wifigreen}%
    \fi%
    \ifnum3<#1
    \colorlet{col4}{wifigreen}%
    \fi%
    \begin{tikzpicture}[rounded corners=0.05ex]
        \fill[col1] (0ex,0ex) rectangle ++(0.5ex,0.9ex);
        \fill[col2] (0.7ex,0ex) rectangle ++(0.5ex,1.05ex);
        \fill[col3] (1.4ex,0ex) rectangle ++(0.5ex,1.20ex);
        \fill[col4] (2.1ex,0ex) rectangle ++(0.5ex,1.35ex);
    \end{tikzpicture}%
}
\begin{document}

\title{
	\Large \bf Holding Secrets Accountable: Auditing Privacy-Preserving Machine Learning}

\patchcmd{\maketitle}
{\@maketitle}
{\@maketitle}%

{}
{}

\ifdefined\isnotanon
\author{{\rm Hidde Lycklama\textsuperscript{1}, Alexander Viand\textsuperscript{2}, Nicolas K\"uchler\textsuperscript{1}, Christian Knabenhans\textsuperscript{3}, Anwar Hithnawi\textsuperscript{1}}  \\
\\
{\textsuperscript{1}\textit{ETH Zurich} \ \textsuperscript{2}\textit{Intel Labs} \ \textsuperscript{3}\textit{EPFL}}}
\else
\author{Paper \#234, 14 pages + references + appendix}
\fi

\date{}

\maketitle

\begin{abstract}
Recent advancements in privacy-preserving machine learning are paving the way to extend the benefits of ML to highly sensitive data that, until now, has been hard to utilize due to privacy concerns and regulatory constraints.
Simultaneously, there is a growing emphasis on enhancing the transparency and accountability of ML,
including the ability to audit deployments for aspects such as fairness, accuracy and compliance.
Although ML auditing and privacy-preserving machine learning have been extensively researched, they have largely been studied in isolation. 
However, the integration of these two areas is becoming increasingly important.
In this work, we introduce \oursystem, an MPC framework designed for auditing privacy-preserving machine learning.
\oursystem cryptographically ties together the training, inference, and auditing phases to allow robust and private auditing.
At the core of our framework is a new protocol for efficiently verifying inputs against succinct commitments.
We evaluate the performance of our framework when instantiated with our consistency protocol and compare it to hashing-based and homomorphic-commitment-based approaches, demonstrating that it is up to \evalnum{$10^4\times$} faster and up to \evalnum{$10^6\times$} more concise.

\end{abstract}

\subsecspacingtop
\section{Introduction}
\subsecspacingbot
\lsec{intro}

Mounting concerns regarding security and privacy in \gls{ml} have spurred interest in \gls{ppml}~\cite{NSTC2016AI,Biden2023EO}. 
These developments aim to 
address concerns related to user data, whether during inference or training, as well as securing ML models, as organizations seek to maintain a competitive advantage by keeping them confidential. 
Consequently, secure inference and secure training frameworks have emerged to address various security and privacy concerns inherent in using and training machine learning models~\cite{Kim2023-fhe_imagenet, Juvekar2018-bb,Koti2021-Swift,Dalskov2021-rc,Mohassel2018-gy,Wagh2021-Falcon}.
The majority of these frameworks rely on secure computation techniques~\cite{Keller2020-mpspdz,Keller2022-quantizedtraining,Barak2020-quantizedinference}, which offer security guarantees by hiding the data and/or the model during computation.
While these techniques are effective in achieving the intended security goals, they also introduce new challenges due to their inherent opacity.
To achieve secrecy, these technologies conceal the processes of training and inference, making it difficult to audit \gls{ppml} pipelines for fairness, transparency, accountability, and other (often legally mandated) objectives.
However, a variety of scenarios require both the privacy guarantees of \gls{ppml} and the ability to audit the \gls{ml} pipeline.
For example, banks interested in collaborating on training better risk assessment algorithms (e.g., for creditworthiness) will need to rely on \gls{ppml} techniques to protect both customer privacy and commercial information.
At the same time, they must also fulfill legal requirements for auditability such as those of the recently enacted EU AI Act~\cite{EU2021_AIACT}, which specifically calls out creditworthiness evaluation~\cite{EUAIAbriefing}.
While auditability may seem to be in direct conflict with the privacy requirements of \gls{ppml},
secure computation can, in principle, offer a way forward for verifying these properties while preserving privacy.
However, realizing this in an efficient and robust manner is challenging.

\paragraph{Private \gls{ml} Auditing.}
\gls{ml} auditing involves the examination and verification of machine learning models, algorithms, and data to ensure accountability and desired properties such as fairness, transparency, and accuracy during deployment.
Approaches to \gls{ml}  auditing
can be divided into a priori and post-hoc auditing mechanisms~\cite{Birhane2024-iy}.
The former focus on pre-deployment verification techniques that act as predefined sets of verifications on the model, data, or training process~\cite{Jia2021-PoL,Choi2023-zb}, such as 
model and data validation tests~\cite{Chang2023-mk,Lycklama2023-cx} or robustness~\cite{Drews2019-cj} and fairness~\cite{Bastani2018-hr} verification.
While covering important use cases of auditing, these remain limited to known prior issues, which, in the case of ML, are hard to exhaustively address given the black box nature of ML.
Post-hoc audits, which are triggered in response to detecting undesirable behavior or other triggers, are therefore essential to ensure accountability in real-world deployments of ML~\cite{Shan2021-ad,Hendrycks2021-hx,Shamsabadi2022-us}. 
For example, individuals may seek an explanation of a decision to mitigate potential harm or to investigate its fairness~\cite{Lundberg2017-unifiedmodelpredictions,Ribeiro2016-lime}.
Recent efforts have examined the realization of a priori-auditing techniques in secure settings using a variety of ad-hoc techniques.
This includes work for verifying robustness, verifiable fairness, and model and data validation techniques~\cite{Jovanovic2022-tz,Kilbertus2018-zl,Segal2020-tu,Chang2023-mk}.
Post-hoc audits, however, have received scant attention in the secure setting.
Due to their on-demand nature, they present a unique set of challenges that a priori audits do not face.
In this paper, we, therefore, focus primarily on achieving secure post-hoc audits for \gls{ppml}.

\fakeparspacingtop
\fakeparagraph{Secure Post-Hoc Audits.}
In current practice, auditing of \gls{ppml} systems generally requires assuming a trusted third party (which can be granted access to training data, models, and predictions) that applies traditional auditing solutions. 
However, in addition to undermining the privacy-preserving nature of \gls{ppml}, even a trusted auditor is not sufficient to achieve robust audits.
Specifically, parties might inadvertently or maliciously alter their inputs to the auditing phase so that they no longer match their original inputs to the \gls{ppml} system, distorting the results of the auditing phase.
Instead, the auditor would need visibility into the entire training and inference process to ensure the consistency of the audit.
One might consider realizing such a trusted auditor cryptographically, by relying on (maliciously secure) \gls{mpc}  for the entire pipeline. %
In practice, however, it is generally not feasible to continuously run large MPC deployments with many parties (e.g., different clients receiving inferences and/or different auditing parties).
This is because MPC, in general, scales extremely poorly in the number of involved parties, %
and due to the complexities of maintaining (and periodically refreshing) a large amount of secret state over extended periods~\cite{Herzberg1995-pss,Schultz2010-mpss,Maram2019-ao}.
Note that \gls{ppml} systems usually sidestep these issues, as training and inference can be realized as distinct phases.
As a result, the (usually significant) resources utilized for training do not need to be maintained in order to perform inferences. 
A practical approach to cryptographic auditing for \gls{ppml}, therefore, needs to maintain this decoupling while nevertheless ensuring consistent audits.

\vspace{1em}
\fakeparagraph{Contribution.}
This paper presents \oursystem, a new framework for privacy-preserving auditing of PPML systems.
\oursystem is highly modular and supports a wide range of efficient PPML approaches and auditing functions, and is the first framework to efficiently implement post-hoc auditing for PPML.
Our framework ties together training, inference, and auditing while maintaining consistency via the use of concise cryptographic \emph{receipts}.
\oursystem supports a wide range of PPML approaches, including mixed \mbox{secure/plaintext} settings which are common in practical deployments but pose significant challenges for auditing.
The overhead of our framework is primarily determined by the efficiency of the underlying consistency mechanism.
We first describe (and prove secure) our auditing protocol using a black-box definition of the consistency layer.
We then present a highly efficient instantiation that makes \oursystem practical for a wide range of PPML deployment scenarios.
Finally, we evaluate the performance of our framework when instantiated with our consistency protocol and compare it to hashing-based and homomorphic-commitment-based approaches, demonstrating that it is up to \evalnum{$10^4\times$} faster and up to \evalnum{$10^6\times$} more concise.

In the following, we discuss background and related work in~\rsec{background}.
We present the requirements of \gls{ppml} auditing systems and the design of our \gls{ppml} auditing framework in~\rsec{overview}.
In~\rsec{design:inputconsistency}, we formalize the \gls{poc} and present our consistency check protocol.
In~\rsec{functions}, we discuss how to realize common auditing functions under \gls{mpc}.
Finally, in~\rsec{eval}, we evaluate our framework and compare against related work.

\section{Background \& Related Work}
\subsecspacingbot
\lsec{background}

We briefly introduce relevant background for \gls{ppml} and \gls{ml} auditing, and then discuss related work.

\paragraph{\acrlong{ppml}.}
\gls{ppml} enables parties to securely train and deploy sensitive ML models in environments that involve untrusted or potentially compromised entities. 
There has been significant progress in \gls{ppml} in recent years, leveraging advanced cryptographic techniques to ensure data privacy and model integrity~\cite{Kim2023-fhe_imagenet, Juvekar2018-bb,Koti2021-Swift,Byali2020-Flash,Koti2022-Tetrad,Dalskov2021-rc,Mohassel2018-gy,Wagh2021-Falcon,Barak2020-quantizedinference}.
Approaches that rely on \gls{mpc} typically offer the best performance by distributing trust among $n$ parties. These parties collaboratively execute training or 
inference computations, all while preserving the privacy of each party's inputs. 
Protocols are categorized based on the number of parties ($t$) an adversary can corrupt without breaching security, with distinctions made between a majority of honest 
parties ($t < \frac{n}{2}$) and a dishonest majority ($t < n$).
Moreover, protocols are designed to withstand different adversarial behaviors, ranging from passive corruption,
where compromised parties may collude to learn information while following the protocol honestly, to active corruption, allowing adversaries to deviate from the protocol 
arbitrarily.
As of today, the most efficient \gls{mpc} protocols for \gls{ppml} rely on homomorphic secret sharing over a field $\sfield_p$ or ring $\sring_{2^k}$~\cite{Barak2020-quantizedinference}.
This allows them to perform integer arithmetic by adding and scaling shares using the homomorphism of the scheme. 
Communication among parties is only required during the multiplication of shares.
\gls{ppml} frameworks frequently also offer higher-order primitives essential for machine learning, such as dot products, comparisons, bit extraction, exponentiation, and truncation~\cite{Koti2021-Swift,Dalskov2021-rc,Mohassel2018-gy,Wagh2021-Falcon,Barak2020-quantizedinference,Keller2022-quantizedtraining}.
Different functionalities might be implemented most efficiently in different fields or rings, in which case we can use share conversion to switch between them, e.g., ring and field-based MPC. 
\ifdefined\isnotextended
In Appendix~\ref{sec:share_conversion} in the extended version of the paper~\cite{arc_full}, we discuss this technique in more detail.
\else
In Appendix~\ref{sec:share_conversion}, we discuss this technique in more detail.
\fi

\paragraph{ML Auditing.}

Auditing of \gls{ml} systems is an emerging field focused on enhancing the accountability of \gls{ml} algorithms.
Auditing involves verifying the compliance of organizations\textquotesingle\xspace \gls{ml} models with safety and legal standards, e.g., ensuring they do not infringe on copyright laws. Here, we refer to auditing techniques that analyze an algorithm to 
offer further insights or assurances regarding the model and its predictions. 
This includes efforts to enhance transparency by explaining predictions, ensuring fairness, 
or providing accountability for the contributions of different parties.
Depending on the technique, these algorithms may require access to the training data, the model, the prediction, or a combination thereof. 
Techniques that involve only the training data and the model can often be conducted a priori as part of an internal quality assurance process. 
However, such a priori techniques are inherently limited:
important classes of auditing techniques fundamentally require, or only become practical with, access to the prediction sample.
For example, audits for explainability/accountability or fairness, respectively.
In addition, due to the nature of ML, we can frequently only identify a prediction as unwanted after the fact, potentially even only after significant time has passed.
As a result, we require the ability to perform \emph{post-hoc} auditing in real-world ML deployments.

A wide range of approaches for post-hoc auditing have been proposed in the literature, many of which build upon similar techniques.
For example, many algorithms in this space rely on perturbing input data or prediction features to assess the impact of such changes on the model's behavior, 
effectively treating the model as a black box. 
These methods find application in a variety of contexts, such as providing explanations for predictions~\cite{Lundberg2017-unifiedmodelpredictions,Ribeiro2016-lime}, 
investigating the model's training data for biased or poisoned samples~\cite{Wang2019-neuralcleanse,Lycklama2022-CamelMLSafety,Shan2021-ad}, 
or ensuring fairness by analyzing model predictions under hypothetical scenarios where specific input features are altered~\cite{Mukherjee2020-indivfairnessdata}. 
As evaluating these methods can be resource-intensive, alternative techniques employ propagation-based methods, which are more computationally efficient by assuming knowledge of the model's internal structure. 
These methods attribute importance to model neurons, input features, or training samples based on gradients or activations~\cite{Sundararajan2017-integratedgradients,Bach2015-lrp}.
These techniques share foundational computational operations with training and inference processes, such as forward passes through the neural network and 
backpropagation. 
This similarity in computational models implies that the protocols developed for training and inference can be repurposed, to some extent, for auditing purposes.
In~\rsec{functions}, we provide a detailed description of the algorithmic aspects of the auditing functions supported in our framework.

\begin{table}
    \centering
    \setlength{\tabcolsep}{3pt}
    \begin{tabular}{lcccccccc}
        \toprule
        & Mal. Sec. & \textsf{T} & \textsf{M} & \textsf{I} & \textsf{Co} & \textsf{Ba} & \textsf{St} \\
        \midrule
        Phoenix~\cite{Jovanovic2022-tz}  & $\times$ & \Circle & \CIRCLE & \CIRCLE & \wifi[2] & \wifi[3] & -- \\
        Agrawal et al.~\cite{Agrawal2021-rg} & $\times$ & \Circle & \CIRCLE & \Circle & \wifi[2] & \wifi[2] & \wifi[4] \\
        Kilbertus et al.~\cite{Kilbertus2018-zl} & $\checkmark$ & \Circle & \CIRCLE & \CIRCLE & \wifi[1] & \wifi[1] & \wifi[4] \\
        Segal et al.~\cite{Segal2020-tu}  & $\checkmark$ & \Circle & \CIRCLE & \Circle & \wifi[1] & \wifi[1] & \wifi[4]  \\
        Holmes~\cite{Chang2023-mk}  & $\checkmark$ & \CIRCLE & \Circle & \Circle & \wifi[4] & \wifi[4] & -- \\
        Cerebro~\cite{Zheng2021-tb} & $\checkmark$ & \CIRCLE & \Circle & \Circle & \wifi[1] & \wifi[3] & \wifi[1] \\
        Ours~(\rsec{overview}) & $\checkmark$ & \CIRCLE & \CIRCLE & \CIRCLE & \wifi[4] & \wifi[4] & \wifi[4] \\
        \bottomrule
    \end{tabular}
    \vspace{0.25em}
    \caption{Related work covers different subsets of the \gls{ppml} pipeline by allowing to audit combinations of the training data (\textsf{T}), the model (\textsf{M}) and the inference (\textsf{I}), and have different overheads for compute (\textsf{Co}), bandwidth (\textsf{Ba}) and storage (\textsf{St}). 
    }
    \ltab{related_work}
\end{table}

\paragraph{Related Work.}
While this is, to the best of our knowledge, the first framework for (post-hoc) PPML auditing, our work is closely related to efforts aimed at enhancing the reliability of PPML systems. 
Thus, we briefly discuss the most relevant related work here.
Prior research primarily focuses on narrow aspects, enhancing isolated components and instantiations of the \gls{ppml} pipeline as shown in~\rtab{related_work}.
Phoenix integrates randomized smoothing techniques into \gls{fhe}-based \gls{ml} inference to guarantee robust and fair model predictions~\cite{Jovanovic2022-tz}.
Holmes improves the quality of \gls{mpc} training to conduct distribution tests on training data via efficient interactive zero-knowledge proofs before training starts~\cite{Chang2023-mk}.
These works apply and optimize reliability techniques to \gls{ppml} inference and training but do not allow for retroactive auditing of predictions or training data.

Another line of work enables retroactive verification of certain properties, but only for specific components of the PPML pipeline in isolation.
Fairness certification allows clients to verify that their private predictions were generated by a certified model~\cite{Kilbertus2018-zl,Segal2020-tu,Agrawal2019-bu}.
This is usually achieved by having a regulator sign a hash-based commitment of the model.
Cerebro~\cite{Zheng2021-tb} extends \gls{mpc} training by enabling an auditor to conduct post-hoc computation on parties' inputs through a consistency check involving cryptographic commitments. 
However, their system only allows auditing of parties' datasets individually, which significantly limits the scope of auditing. %
Additionally, the commitment techniques they employ to ensure the integrity of the training data do not scale to a complete \gls{ppml} system handling large amounts of training data and potentially many clients.

\section{\oursystem Design}
\subsecspacingbot
\lsec{overview}

\begin{figure*}[t]
	\centering
	\includegraphics[width=1.0\textwidth]{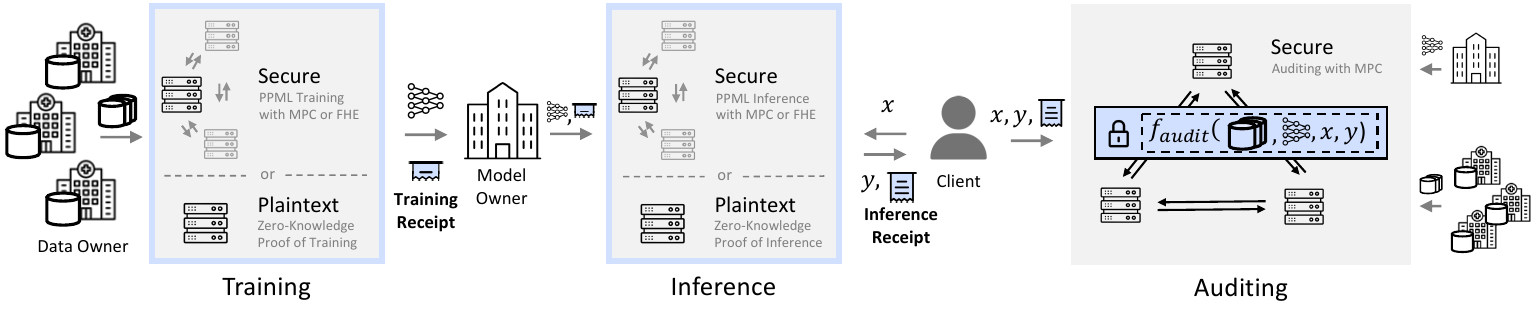}
\caption{Overview of \oursystem, which augments existing \acrshort{ppml} pipelines with an \acrshort{mpc} auditing phase to execute auditing functions.
\Glspl{r:auditrequester} receive a receipt that can later be used to verify the consistency of the training data, model and prediction under audit.}
    \lfig{scenario}
\end{figure*}

\oursystem enables private and secure auditing for existing \gls{ppml} systems by providing end-to-end consistency of data, model, and predictions while preserving the benefits of distinct training, inference, and auditing phases.
Our framework is highly modular and supports a wide range of inference and training approaches, including deployments that mix secure computation and plaintext computation.
For example, many scenarios permit the release of a differentially private model after a secure training phase, or consider secure inference for a centrally trained model.
Our framework allows the execution of arbitrary auditing functions over the original training data, model, and predictions using secure multi-party computation.
Notably, these audits can be conducted post-hoc, i.e., long after the training or inference phases.
In~\rfig{scenario}, we illustrate how \oursystem extends a typical PPML pipeline. 
Specifically, \oursystem augments the training phase with a consistency layer that generates a concise cryptographic receipt linking the model received by the \glspl{r:modelowner} and the training data provided by the \glspl{r:inputparty}. 
During inference, \oursystem{}’s consistency layer extends this receipt to include the \gls{r:auditrequester}'s prediction sample and result. 
Using this receipt, the \gls{r:auditrequester} can later request audit functions to be executed  
and verify that the \glspl{r:inputparty} and \glspl{r:modelowner} provided their original inputs.
This design allows our framework to scale independently of the number of \glspl{r:auditrequester} and predictions, as the only additional state necessary is the (concise) prediction held by the \gls{r:auditrequester}.

\fakeparagraph{Threat Model.}
We consider an actively malicious adversary that can (statically) compromise parties across the training, inference, and auditing phases.
The adversary can observe and modify all inputs, states and network traffic of the parties it controls.
We assume that at least one party that provides inputs (i.e., training data, model, or prediction sample) or receives outputs is honest. %
Note that because not all parties are involved in each phase, it is possible that all parties interacting in a phase are malicious.
If a phase involves secure computation, we assume at least one of the computational parties is honest.
Note that, certain instantiations of secure computation might impose additional constraints on the adversary.
For example, \oursystem can be used with MPC protocols that assume an honest majority of computing parties,
which are frequently significantly more efficient than their dishonest majority counterparts.

\begin{figure*}[!t]
    \begin{boxedminipage}{\textwidth}
    \begin{center}
        \textbf{Functionality $\sidealfun$}
    \end{center}
    The functionality is parameterized by a learning algorithm $\sAlgorithm$, a set of allowed auditing functions $\sauditfunctionSpace$, \sNumParties \glspl{r:inputparty} $\sPartyClient  = \{\sPartyClient_1,\ldots,\sPartyClient_{\sNumParties}\}$,
    \sNumModelHolders \glspl{r:modelowner} $\sPartyModelHolder = \{\sPartyModelHolder_1,\ldots,\sPartyModelHolder_{\sNumModelHolders}\}$,
    \sNumAuditors \glspl{r:auditrequester} $\sPartyAuditor = \{\sPartyAuditor_1,\ldots,\sPartyAuditor_{\sNumAuditors} \}$, 
    \sNumServers \glspl{r:tcomputer} $\sPartyServer = \{\sPartyServer_1,\ldots,\sPartyServer_{\sNumServers}\}$,
    \sNumInferenceComputers \glspl{r:icomputer} $\sPartyInferenceComputer = \{\sPartyInferenceComputer_1,\ldots,\sPartyInferenceComputer_{\sNumInferenceComputers}\}$,
    \sNumAuditComputers \glspl{r:acomputer} $\sPartyAuditComputer = \{\sPartyAuditComputer_1,\ldots,\sPartyAuditComputer_{\sNumAuditComputers}\}$.
    We denote the set of all parties as $\sPartySet = \sPartyClient \cup \sPartyModelHolder \cup \sPartyAuditor \cup \sPartyServer \cup \sPartyInferenceComputer \cup \sPartyAuditComputer$,
    The functionality is reactive and its state consists of a set \sidealmodelstore of models and corresponding datasets, and a set \sideallist of inference samples and corresponding predictions, %
    Operations only relevant to plaintext training are marked in \cPlainTrain{olive} and those only relevant to plaintext inference are marked in \cPlainInf{blue}.
    \\

    \begin{itemize}[itemsep=1em,label={},leftmargin=0em]
        \item \input{figures/new_functionality/camel_ideal_functionality_train}
        \item \input{figures/new_functionality/camel_ideal_functionality_inference}
        \item \input{figures/new_functionality/camel_ideal_functionality_audit}
    \end{itemize}

    \end{boxedminipage}
    \caption{\oursystem's Ideal Functionality.}
    \lfig{camel:idealfunctionality}
\end{figure*}

\subsection{Modeling Cryptographic Auditing}
A privacy-preserving auditing system must achieve secrecy, correctness, and soundness.
For \emph{secrecy}, the system must preserve the privacy guarantees of the underlying \gls{ppml} systems, except for what can be inferred from the output of auditing\footnote{Special care should be taken when choosing auditing functions to ensure their output presents an acceptable privacy-utility trade-off.}.
However, to prevent unexpected leakage from malicious audit requests, the system must also be restricted to serving only valid auditing requests, i.e., those corresponding to actual predictions made by the system.
Moreover, in order to allow us to rely on the results of auditing, the system must be \emph{correct \& sound}, i.e.,  the audit must be correctly computed even in the presence of malicious parties.
In particular, the system must ensure the audit is performed on the original training data and model corresponding to the prediction that is being audited.
The system must also have the ability to detect malicious disruptions of the audit process.
Specifically, we want to prevent malicious parties from surreptitiously aborting the audit computation and therefore require \gls{idabort} security for the auditing phase.
Otherwise, malicious actors could prevent auditing without fear of repercussion, fundamentally undermining the concept of auditability.
While \emph{publicly identifiable} abort (i.e., everyone, including input parties, learns the identity) would potentially be desirable in the \gls{soc} setting, protocols to achieve this introduce prohibitive overhead~\cite{Cunningham2017-spdzidabort,Ozdemir2022-aw} and we therefore only require traditional \gls{idabort}.
For training and inference, in comparison, we require only security with abort, as is common in practical protocols for \gls{ppml} training and inference~\cite{Mohassel2018-gy,Wagh2021-Falcon,Chida2018-ra,Goyal2021-ei,Patra2020-Blaze,Chandran2019-xg,Zheng2019-qi}.

As we prove our protocol secure in the real-ideal paradigm~\cite{Canetti2000-plainmodel}, we begin by modeling the ideal functionality that a privacy-preserving auditing protocol for \gls{ppml} should achieve based on the requirements set out above.
Despite the separation of the training, inference and auditing phases, we model the intended behavior as a single reactive ideal functionality \sidealfun (\cfref{\rfig{camel:idealfunctionality}}) as this directly implies input consistency.
We annotate parts only relevant in the plaintext training and/or inference settings in which the adversary can generate local models and predictions (in \cPlainTrain{olive} and \cPlainInf{blue}, respectively).

\fakeparagraph{Training \& Inference.}
The functionality allows training models on the input data (assuming the adversary does not choose to abort the computation), adding the resulting model to a list \sidealmodelstore.
We assume that there is an out-of-band communication channel for learning which models exist and assume the adversary learns about any models that have been trained, even if all parties involved in the training were honest.
In the functionality, we model this by leaking the model identifier to the adversary after training.
In the case of plaintext training, the functionality additionally leaks the input data and model to the adversary if it controls any \gls{r:tcomputer}.
Inference proceeds similarly, storing the inference result in the list \sideallist.
However, we must take care to also model the special case that occurs if an adversary has control over all \glspl{r:inputparty}, all training computers and at least one \gls{r:modelowner}.
In this case, the adversary could locally generate a valid combination of data and model $((\spartyinput_1, \cdots, \spartyinput_N), \smodel)$.
In the ideal world, we model this by extending the functionality to accept such locally trained models and, if consistent with the training data, append them to \sidealmodelstore before proceeding with the inference.

\fakeparagraph{Auditing.}
Any \gls{r:auditrequester} that has received a prediction, can request audits of that prediction using any auditing function from a set of allowed auditing functions.
During auditing, the lists \sidealmodelstore and \sideallist are used to verify that an auditing request is consistent.
Similarly to the special case we had to consider during inference, 
we must model the ability to locally generate valid inferences if the adversary controls all \glspl{r:icomputer} and a \gls{r:modelowner}. 
In this case, the functionality appends such locally generated predictions to \sideallist (if internally consistent) before verifying the auditing request.
During auditing, the adversary can still choose to abort the computation, but as we require auditing to achieve identifiable abort, the adversary must reveal the identity of at least one malicious party to the functionality.
In the case of inconsistencies in the audit inputs, the functionality also aborts and identifies the (uniquely determined) party at fault to the \glspl{r:acomputer}.

\begin{figure*}
    \captionlistentry{}
    \lfig{protocol:camel}
    \vspace{-0.5em}
    \procb{\text{Figure \ref*{fig:protocol:camel}}: Protocol $\sprotocolcamel$}{
        \parbox{\textwidth}{
            $\sprotocolcamel$ is a protocol between 
            \sNumParties \glspl{r:inputparty} $\sPartyClient  = \{\sPartyClient_1,\ldots,\sPartyClient_{\sNumParties}\}$,
            \sNumModelHolders \glspl{r:modelowner} $\sPartyModelHolder = \{\sPartyModelHolder_1,\ldots,\sPartyModelHolder_{\sNumModelHolders}\}$,
            \sNumAuditors \glspl{r:auditrequester} $\sPartyAuditor = \{\sPartyAuditor_1,\ldots,\sPartyAuditor_{\sNumAuditors} \}$, 
            \sNumServers \glspl{r:tcomputer} $\sPartyServer = \{\sPartyServer_1,\ldots,\sPartyServer_{\sNumServers}\}$,
            \sNumInferenceComputers \glspl{r:icomputer} $\sPartyInferenceComputer = \{\sPartyInferenceComputer_1,\ldots,\sPartyInferenceComputer_{\sNumInferenceComputers}\}$, and
            \sNumAuditComputers \glspl{r:acomputer} $\sPartyAuditComputer = \{\sPartyAuditComputer_1,\ldots,\sPartyAuditComputer_{\sNumAuditComputers}\}$.
            $\sprotocolcamel$ is parameterized by
            a learning algorithm $\sAlgorithm$, 
            a set of allowed auditing functions $\sauditfunctionSpace$,
            a \acrlong{zkpoc} $\sPoCLabel$ as in~\rdef{poc}\citeextended,
            a signature scheme $\textsf{SIG}$ as in~\rdef{signatures}\citeextended.
            Protocol parts only relevant to plaintext training are marked in \cPlainTrain{olive}, and those only relevant to plaintext inference are marked in \cPlainInf{blue}.
            \cPlainTrain{In the case of plaintext training, \oursystem is also parameterized by a proof of training $\sPot$ as in~\rdef{zkp:training}\citeextended} and,
            \cPlainInf{in the case of plaintext inference, a  proof of inference $\sPoI$ as in~\rdef{zkp:inference}\citeextended}.
            $\sprotocolcamel$ assumes access to an MPC protocol represented by instances of $\sabb$ and, in the case of \cPlainTrain{plaintext training} or \cPlainInf{plaintext inference}, a distributed randomness source $\sidealrand$.
            $\sprotocolcamel$ also assumes access to a broadcast channel $\sidealbroadcast$ and an MPC protocol $\sabbid$ with \gls{idabort}, also used by \sPoCLabel internally.
            \vspace{0.25em}
            \fakeparagraph{Input:} Each $\sPartyClient_i$ holds their training dataset $\strainset_i \in \sfield^{\sXFeatureSize \times \sNumSamples_i}$ consisting of a vector of $\sNumSamples_i$ input feature vectors of size \sXFeatureSize.
            Each \gls{r:auditrequester} $\sPartyAuditor_j$ holds a list of prediction samples $[\sX]$ where $\sX \in \sfield_p^\sXFeatureSize$ and a set of audit inputs which is a subset of $[\sX]$.
            \vspace{0.25em}
            \fakeparagraph{Initialize:}
            All parties except the \glspl{r:auditrequester} receive signing keys from $\sidealpki$.
            All parties receive all corresponding verification keys from $\sidealpki$.
            The parties also receive public setup parameters for \gls{zkpoc} $\pppoc \leftarrow \sPoCSetup(\secparam, \sNumSamples)$ 
            where $\sNumSamples$ is the maximum of all $\sNumSamples_i$ and \sModelSize,
            \cPlainTrain{and (in the case of plaintext training) $\pppot \leftarrow \sPotSetup(\secparam)$}
            \cPlainInf{and  (in the case of plaintext inference) $\pppoi \leftarrow \sPoiSetup(\secparam)$}
            .
            \vspace{0.25em}
            \fakeparagraph{Training:} The protocol proceeds as follows with \glspl{r:tcomputer} \sPartyServer, \glspl{r:inputparty} $\sPartyClient$ and \glspl{r:modelowner} $\sPartyModelHolder$, using a new instance of $\sabb$:
            \begin{protocol_steps}[label=\textsf{T}.\arabic*]
                \item \lstep{train:inputparty} Each \gls{r:inputparty} $\sPartyClient_i$ samples a random decommitment value $\scomRandomnessDataI \sample \scomRandomnessSpace$ and:
                \begin{itemize}[wide=0pt]
                    \item Inputs $\strainset_i$ to $\sabb$ \cPlainTrain{or sends $(\strainset_i, \scomRandomnessDataI)$ to all $\sPartyServer$}.
                    \item Computes a commitment to the training dataset $\scomtrainseti = \sPoCCommit(\pppoc, \strainset_i, \scomRandomnessDataI)$ and
                    sends $(\scomtrainseti)$ to all $\sPartyServer$.
                    \item \lstep{train:dist:inputparty:check} Executes 
                    \gls{poc} $\sPoCEval(\pppoc, \scomtrainseti, \ssIdeal{\strainset_i}; \strainset_i, \scomRandomnessDataI)$ with all \sPartyServer \cPlainTrain{ or each $\sPartyServer$ verifies that $\scomtrainseti = \sPoCCommit(\pppoc, \strainset_i, \scomRandomnessDataI)$ for all $\sPartyClient_i$}.
                \end{itemize}
                \item \lstep{train:tcomputer} Each \gls{r:tcomputer} $\sPartyServer_j$:
                \begin{itemize}[wide=0pt]
                    \item Samples $\ssIdeal{\scomRandomnessModel}, \ssIdeal{\scomRandomnessTrainrand}, \ssIdeal{\strainrand}$ using \sabbARand \cPlainTrain{ or all \sPartyServer and \sPartyModelHolder receive \scomRandomnessModel, \scomRandomnessTrainrand, \strainrand from \sidealrand}.
                    \item Invoke $\sabbATrain(\ssIdeal{\strainset_1}, \ldots, \ssIdeal{\strainset_\sNumParties}, \ssIdeal{\strainrand})$
                    to compute the model $\ssIdeal{\smodel}$ 
                    \cPlainTrain{ or compute $\smodel \leftarrow \sAlgorithm(\strainset_1, \ldots, \strainset_\sNumParties, \strainrand)$}.
                    \item Using \sabb, commit to the model $\ssIdeal{\scommitmentModel} = \sPoCCommit(\pppoc, \ssIdeal{\smodel}, \ssIdeal{\scomRandomnessModel})$, randomness $\ssIdeal{\scomtrainrand} = \sPoCCommit(\pppoc, \ssIdeal{\strainrand}, \ssIdeal{\scomRandomnessTrainrand})$ and open $\scommitmentModel, \scomtrainrand$ to all \sPartyServer, \sPartyClient and \sPartyModelHolder
                    \cPlainTrain{ or compute $\scommitmentModel \leftarrow \sPoCCommit(\pppoc, \smodel, \scomRandomnessModel)$ and $\scomtrainrand \leftarrow \sPoCCommit(\pppoc, \strainrand, \scomRandomnessTrainrand)$}.
                    \item Compute $\ssIdeal{\ssigTrainingComputer} \leftarrow \sSigDistSign(\skTCJ, \scomtrainsetsconcat\concat\scommitmentModel\concat \scomtrainrand)$ and open \ssigTrainingComputer to \sPartyModelHolder, \sPartyClient using \sabb \\ \cPlainTrain{ or $\sproofTrain \leftarrow \sPotProve(\pppot, (\scomtrainsets, \scommitmentModel, \scomtrainrand); \strainsets, \smodel, \strainrand, \scomRandomnessDatas, \scomRandomnessModel, \scomRandomnessTrainrand)$}. 
                    \item Send $(\scomtrainsets)$ and open $\ssIdeal{\smodel}, \ssIdeal{\scomRandomnessModel}$  to \sPartyModelHolder using \sabb \cPlainTrain{ or send $(\scomtrainsets, \scommitmentModel, \scomtrainrand, \sproofTrain, \smodel, \scomRandomnessModel, \scomRandomnessTrainrand, \strainrand)$ to all $\sPartyModelHolder$}.
                    \item Send $(\scomtrainsets, \cPlainTrain{\scommitmentModel, \scomtrainrand, \sproofTrain})$ to all \glspl{r:inputparty} $\sPartyClient$.
                \end{itemize}
                \item \lstep{train:ip:sign} Each $\sPartyClient_i$ checks that it received the same $(\scomtrainsets, \cPlainTrain{\scommitmentModel, \scomtrainrand, \sproofTrain})$ from all \sPartyServer, $\sSigVerify(\pkTC, \scomtrainsetsconcat \concat \scommitmentModel \concat \scomtrainrand, \ssigTrainingComputer)$ \cPlainTrain{ or \sPotVerify(\pppot, \scomtrainsets, \scommitmentModel, \scomtrainrand, \sproofTrain)}, and its \scomtrainseti is contained in \scomtrainsets and aborts otherwise.
                Then, each computes $\ssigTrain^i \leftarrow \sSigSign(\skIPI, \scomtrainsetsconcat \concat \scommitmentModel \concat \scomtrainrand)$ and sends $\ssigTrain^i$ to all $\sPartyModelHolder$.
                \item \lstep{train:modelowner} Each \gls{r:modelowner} $\sPartyModelHolder_k$ checks each of the following and aborts if any fail:
                \begin{itemize}[wide=0pt]
                    \item Verify that the $(\scomtrainsets, \ssigTrainingComputer)$ \cPlainTrain{(and $\scommitmentModel, \scomtrainrand, \smodel, \scomRandomnessModel, \scomRandomnessTrainrand, \strainrand, \sproofTrain$)} received from each $\sPartyServer$ are consistent with each other. 
                    \item \cPlainTrain{$\scommitmentModel = \sPoCCommit(\pppoc, \smodel, \scomRandomnessModel)$ and $\scomtrainrand = \sPoCCommit(\pppoc, \strainrand, \scomRandomnessTrainrand)$.}
                    \item The list of signatures $\sSigVerify(\pkIPI, \scomtrainsetsconcat \concat \scommitmentModel \concat \scomtrainrand, \ssigTrain^i)$ for each $\sPartyClient_i$.
                    \item $\sSigVerify(\pkTC, \scomtrainsetsconcat \concat \scommitmentModel \concat \scomtrainrand, \ssigTrainingComputer)$ \cPlainTrain{ or \sPotVerify(\pppot, \scomtrainsets, \scommitmentModel, \scomtrainrand, \sproofTrain)}.
                \end{itemize}
            \end{protocol_steps}
            \vspace{0.25em}
            \fakeparagraph{Inference:} The protocol proceeds as follows between \glspl{r:icomputer} \sPartyInferenceComputer, \gls{r:auditrequester} $\sPartyAuditor_i$ and \gls{r:modelowner} $\sPartyModelHolder_k$, using a new instance of $\sabb$:
            \begin{protocol_steps}[label=\textsf{I}.\arabic*]
                \item \lstep{camel:inf:input:client} $\sPartyAuditor_i$ sends $\scommitmentModel^\prime$ (identifying the requested model) to all $\sPartyInferenceComputer$, and inputs a prediction sample $\sX$ to \sabb \cPlainInf{ or sends \sX to all $\sPartyInferenceComputer$}, then:
                \begin{itemize}[wide=0pt]
                    \item \lstep{camel:inf:input:modelowner} All  $\sPartyInferenceComputer$ ask $\sPartyModelHolder_k$ to send $(\scommitment, \ssigTrain, \ssigTrainingComputer\cPlainTrain{\text{ or }\sproofTrain})$ where $\scommitment = (\scomtrainsets, \scommitmentModel, \scomtrainrand)$ to $\sPartyInferenceComputer$, and input the model $\smodel$ to \sabb \cPlainInf{ or send $\smodel,\scomRandomnessModel$ to $\sPartyInferenceComputer$}.
                    \item \lstep{camel:inf:input:checkcomm} The \glspl{r:icomputer} abort if $\scommitmentModel \neq \scommitmentModel^\prime$.
                    \item \lstep{camel:inf:input:cc} \lstep{inf:poc} $\sPartyModelHolder_k$ executes \sPoCEval(\pppoc, \scommitmentModel, \ssIdeal{\smodel}; \smodel, \scomRandomnessModel) with all \sPartyInferenceComputer \cPlainInf{ or each $\sPartyInferenceComputer$ checks $\scommitmentModel = \sPoCCommit(\pppoc, \smodel, \scomRandomnessModel)$} and aborts if it fails.
                \end{itemize}
                \item \lstep{camel:inf:icomputer} Each \gls{r:icomputer} $\sPartyInferenceComputer_j$:
                \begin{itemize}[wide=0pt]
                    \item Computes $\ssIdeal{y}$ by invoking $\sabbAInf(\ssIdeal{\smodel}, \ssIdeal{\sX})$ \cPlainInf{ or computes $y \leftarrow \smodel(\sX)$}.
                    \item Samples $\ssIdeal{\scomRandomnessX}, \ssIdeal{\scomRandomnessY}$ using \sabbARand \cPlainInf{ or all \sPartyInferenceComputer and $\sPartyClient_i$ receive \scomRandomnessX, \scomRandomnessY \sidealrand}.
                    \item Computes $\ssIdeal{\scommitmentX} = \sPoCCommit(\pppoc, \ssIdeal{\sX}, \ssIdeal{\scomRandomnessX})$ and $\ssIdeal{\scommitmentY} = \sPoCCommit(\pppoc, \ssIdeal{y}, \ssIdeal{\scomRandomnessY})$ and opens \scommitmentX and \scommitmentY to \sPartyInferenceComputer, $\sPartyAuditor_i$ and $\sPartyModelHolder_k$ using \sabb 
                    \cPlainInf{ or computes $\scommitmentX = \sPoCCommit(\pppoc, \sX, \scomRandomnessX)$ and $\scommitmentY = \sPoCCommit(\pppoc, \sY, \scomRandomnessY)$ and send $(\scommitmentX, \scommitmentY)$ to $\sPartyAuditor_i$ and $\sPartyModelHolder_k$ }.
                    \item Computes $\ssIdeal{\ssigInferenceComputer} {\leftarrow} \sSigDistSign(\skICJ, \scommitment \concat \scommitmentX \concat \scommitmentY)$ \& opens \ssigInferenceComputer to $\sPartyAuditor_i,\sPartyModelHolder_k$ using \sabb \cPlainInf{ or $\sproofInference \leftarrow \sPoiProve(\pppoi, \scommitmentModel, \scommitmentX, \scommitmentY; \smodel, \sX, \sY, \scomRandomnessModel, \scomRandomnessX, \scomRandomnessY)$}.
                    \item Sends $(\scommitment, \ssigTrain, \ssigTrainingComputer\cPlainTrain{\text{ or }\sproofTrain})$ to $\sPartyAuditor_i$ and open $(\ssIdeal{y}, \ssIdeal{\scomRandomnessX}, \ssIdeal{\scomRandomnessY})$ with \sabb to $\sPartyAuditor_i$, \cPlainInf{ or send $(\scommitment, \ssigTrain, \ssigTrainingComputer$}\cPlainTrain{$\text{ or }\sproofTrain,$}\cPlainInf{$ \sproofInference, \sY, \scomRandomnessX, \scomRandomnessY)$ to $\sPartyAuditor_i$}.
                    Sends $(\scommitment, \cPlainInf{\sproofInference})$ to $\sPartyModelHolder_k$.
                \end{itemize}
                \item \lstep{camel:inf:modelownerverify} The \gls{r:modelowner} checks that it receives the same $(\scommitment, \scommitmentX, \scommitmentY)$, that $\sSigVerify(\pkIC, \scommitment \concat \scommitmentX \concat \scommitmentY, \ssigInferenceComputer)$ \cPlainTrain{ or \sPoiVerify(\pppoi, \scommitmentModel, \scommitmentX, \scommitmentY, \sproofInference)}, aborting otherwise,
                and computes $\ssigReceipt \leftarrow \sSigSign(\skMHK, \scommitment \concat \scommitmentX \concat \scommitmentY \concat \ssigTrain \concat \ssigTrainingComputer\cPlainTrain{\text{ or } \sproofTrain} \concat  \ssigInferenceComputer\cPlainInf{\text{ or }\sproofInference} )$ and sends $\ssigReceipt$ to $\sPartyAuditor_i$.
                \item \lstep{camel:inf:clientverify} The \gls{r:auditrequester} \sPartyAuditor checks each of the following and aborts if any fails:
                \begin{itemize}[wide=0pt]
                    \item Verify that the $(\scommitment, \cPlainInf{\scommitmentX, \scommitmentY,} \ssigTrain, \ssigTrainingComputer\cPlainTrain{\text{ or }\sproofTrain}, \ssigInferenceComputer \cPlainInf{\text{ or } \sproofInference, \sY, \scomRandomnessY, \scomRandomnessX})$                   
                    received from each $\sPartyServer$ are consistent with each other. 
                    \item $\sSigVerify(\pkMHK, \scommitment \concat \scommitmentX \concat \scommitmentY \concat \ssigTrain \concat \ssigTrainingComputer\cPlainTrain{\text{ or } \sproofTrain} \concat  \ssigInferenceComputer\cPlainInf{\text{ or }\sproofInference} , \ssigReceipt)$ is a valid signature by $\pkMH$.
                    \item \cPlainInf{Verify that $\scommitmentX = \sPoCCommit(\pppoc, \sX, \scomRandomnessX)$ and $\scommitmentY = \sPoCCommit(\pppoc, y, \scomRandomnessY)$.}
                    \item The list of signatures $\sSigVerify(\pkIPI, \scommitment, \ssigTrain^i)$ for each $\sPartyClient_i$, and $\sSigVerify(\pkIC, \scommitment \concat \scommitmentX \concat \scommitmentY, \ssigInferenceComputer)$ \cPlainTrain{ or \sPoiVerify(\pppoi, \scommitmentModel, \scommitmentX, \scommitmentY, \sproofInference)}.
                \end{itemize}
            \end{protocol_steps}
        }
    }
\end{figure*}

\begin{figure*}
    \procb{Figure~\ref*{fig:protocol:camel}: Protocol $\sprotocolcamel$ (cont.)}{
        \parbox{\textwidth}{
            \fakeparagraph{Auditing:} The protocol proceeds as follows on a new instance of $\sabbid$ between computing parties \sPartyAuditComputer, a \gls{r:auditrequester} $\sPartyAuditor_j$ and the \gls{r:modelowner} $\sPartyModelHolder_k$:
            \begin{protocol_steps}[label=\textsf{A}.\arabic*]
                \item \lstep{camel:audit:clientinput} The \gls{r:auditrequester} $\sPartyAuditor_j$ inputs $({\sX}, {y})$ to $\sabbid$
                and broadcasts $(\scommitment, \scommitmentX, \scommitmentY, \ssigReceipt, \ssigTrain, \ssigTrainingComputer \cPlainTrain{\text{ or }\sproofTrain}, \ssigInferenceComputer \cPlainInf{\text{ or } \sproofInference}, \pkMHK, \sauditfunction, \aux)$ to all parties using $\sidealbroadcast$.
                \item \lstep{camel:audit:localchecks} All parties check that \pkMHK is a valid identity from $\sidealpki$, verify the \gls{r:modelowner} signature with $\sSigVerify(\pkMHK, \scommitment \concat \scommitmentX \concat \scommitmentY \concat \ssigTrain \concat \ssigTrainingComputer\cPlainTrain{\text{ or } \sproofTrain} \concat  \ssigInferenceComputer\cPlainInf{\text{ or }\sproofInference}, \ssigReceipt)$
                and check that $\sauditfunction\in \sauditfunctionSpace$. 
                Otherwise, each party aborts marking $\sPartyAuditor_j$ as malicious.
                \item \lstep{camel:audit:verifyauditor} \fakeparagraph{Verify Audit Requester:}
                    The \gls{r:auditrequester} $\sPartyAuditor_j$ runs $\sPoCEvalID(\pppoc, \scommitmentX, \ssIdeal{\sX}; \sX, \scomRandomnessX),
                    \sPoCEvalID(\pppoc, \scommitmentY, \ssIdeal{y}; y, \scomRandomnessY)$ with the \glspl{r:acomputer}
                    to prove to the \sPartyAuditComputer that its inputs $\sX$ and $y$ are consistent with $\scommitmentX$ and $\scommitmentY$.
                   If any of the checks fail, $\sPartyAuditComputer_i$ aborts marking $\sPartyAuditor_j$ as malicious.
                \item \fakeparagraph{Verify Inference:}\lstep{camel:audit:verifyinference}
                The \gls{r:modelowner} inputs the model ${\smodel}$ to $\sabbid$:
                \begin{itemize}[wide=0pt]
                    \item The \gls{r:modelowner} $\sPartyModelHolder_k$ runs $\sPoCEvalID(\pppoc, \scommitmentModel, \ssIdeal{\smodel}; \smodel, \scomRandomnessModel)$ with the \glspl{r:acomputer}
                    acting as the verifiers to proof that its model input is consistent with \scommitmentModel
                    from $\sPartyAuditor_j$. Each $\sPartyAuditComputer_i$ aborts marking $\sPartyModelHolder_k$ as malicious if verification fails.
                    \item Each \gls{r:acomputer} computes $\sSigVerify(\pkIC, \scommitment \concat \scommitmentX \concat \scommitmentY, \ssigInferenceComputer)$ \cPlainInf{ or \sPoiVerify(\pppoi, \scommitmentModel, \scommitmentX, \scommitmentY, \sproofInference)}.
                \end{itemize}
                \item \fakeparagraph{Verify Training:} \lstep{camel:audit:verifytraining}
                Each \gls{r:inputparty} $\sPartyClient_i$ inputs their dataset ${\strainset_i}$
                to \sabbid.
                \begin{itemize}[wide=0pt]
                    \item Each \gls{r:inputparty} $\sPartyClient_i$ performs $\sPoCEvalID(\pppoc, \scomtrainseti, \ssIdeal{\strainset_i}; \strainset_i, \scomRandomnessDataI)$
                with the \glspl{r:acomputer} acting as verifiers to proof that its input $\ssIdeal{\strainset_i}$ is consistent with $\scomtrainseti$.
                    $\sPartyAuditComputer_j$ also checks $\sSigVerify(\pkIPI, \scommitment, \ssigTrain)$ for each $\sPartyClient_i$.
                    If verification fails, $\sPartyAuditComputer_j$ aborts marking $\sPartyModelHolder_k$ as malicious.
                    \item Each \gls{r:acomputer} computes $\sSigVerify(\pkTC, \scommitment, \ssigTrainingComputer)$ \cPlainTrain{ or \sPotVerify(\pppot, \scommitment, \sproofTrain)}.
                \end{itemize}
                \item \lstep{camel:audit:function} The \glspl{r:acomputer} compute $\ssIdeal{o} \leftarrow \sabbAAuditID(\sauditfunction, \ssIdeal{\strainset_1}, \ldots, \ssIdeal{\strainset_\sNumParties}, \ssIdeal{\smodel}, \ssIdeal{\sX}, \ssIdeal{\sY}, \aux)$
                and use \sabbid to open $o$ at $\sPartyAuditor_j$.
            \end{protocol_steps}
        }
     }
\end{figure*}

\subsection{\oursystem Protocol}
\lsec{auditing_framework}
Our protocol lifts existing \gls{ppml} systems to the cryptographic auditing setting by augmenting training and inference to produce receipts that can later be used to verify the consistency of the training data, model, and prediction under audit.
Supporting real-world deployments with many potential inference \glspl{r:auditrequester} requires scaling independently of the number of inferences and \glspl{r:auditrequester} in the system.
At the same time, we want to minimize the state that \glspl{r:auditrequester} need to store beyond the received predictions.
In \oursystem, we achieve this through concise cryptographic receipts for training and inference, which allows \glspl{r:auditrequester} to efficiently store all material necessary to later verify the consistency of training data, model, inference sample, and prediction during auditing.
At the same time, receipts are cryptographically bound to specific inferences, i.e., \glspl{r:auditrequester} cannot generate audit requests for predictions they did not receive.
In the following, we give an overview of our protocol, \sprotocolcamel (c.f. \rfig{protocol:camel}) and its building blocks.
\ifdefined\isnotextended
We prove the security of our protocol in the real/ideal world paradigm~\cite{Canetti2000-plainmodel} in Appendix~\ref{sec:apx:proofs:composability} in the extended version of the paper~\cite{arc_full}.
\else
We prove the security of our protocol in the real/ideal world paradigm~\cite{Canetti2000-plainmodel} in Appendix~\ref{sec:apx:proofs:composability}.\fi

\fakeparagraph{Building Blocks.}
We construct our protocol from several cryptographic primitives, including a reactive \gls{abb} interface to abstract \gls{ppml} protocols.
In addition, we require secure point-to-point channels between parties that participate in the same phase and, in the auditing phase, a secure broadcast channel in order to achieve identifiable abort.
We assume that all parties (except for \glspl{r:auditrequester}) have a cryptographic identity which is set up through a \gls{pki}, and that \glspl{r:auditrequester} can access the (public) identities of the other parties through the \gls{pki}.
Our protocol makes use of standard signatures (\cfref{\rdef{signatures}} in Appendix~\ref{sec:apx:proofs:definitions}\citeextended) and a \acrfull{poc} (\cfref{\rdef{poc}}) which acts like a commitment, but admits significantly more efficient instantiations in the secure computation setting, as we discuss in the next section.
In the following, we will use commitment and \gls{poc} interchangeably. 

\fakeparagraph{Proof of Training/Inference.}
When training and inference are realized via secure computation, computational integrity follows directly from the guarantees of the underlying \gls{mpc} protocol.
However, supporting real-world deployment scenarios where either training or inference are computed centrally on plaintexts, requires explicit proofs of computational integrity.
There has been significant work on increasingly efficient proofs for both training~\cite{Garg2023-bn,Sun2023-zkdl} and inference~\cite{Kang2022-zkml,Chen2024-ng}, making use of advances in SNARKs and related proof techniques.
We model these as \cPlainTrain{\glsreset{zkpot}\gls{zkpot}} or \cPlainInf{\glsreset{zkpoi}\gls{zkpoi}} (\cfref{Definitions~\ref{def:zkp:training} and~\ref{def:zkp:inference} in Appendix~\ref{sec:apx:proofs:definitions}}\citeextended).
Recently, \gls{pol}~\cite{Jia2021-PoL} has emerged, aiming to provide an easier-to-generate alternative to \gls{zkpot} by relying only on heuristic assumptions rather than strong cryptographic assumptions~\cite{Fang2023-zq}.
However, \gls{pol} \emph{verification} requires access to the original training data and is computationally expensive, requiring many epochs of (re-)training.
Therefore, \gls{pol} is less attractive for our setting, where verification must be computed under MPC.

\fakeparagraph{Training \& Inference.}
The training and inference phase of \sprotocolcamel both provide a layer of consistency around calls to the underlying \gls{ppml} implementation (i.e., \sabbATrain or \sabbAInf).
While the inference computation is fully deterministic, training also requires randomness which we securely sample (i.e., \sidealrand~\cite{Blum1983-ek}) and commit to in order to prevent reordering attacks~\cite{Shumailov2021-dataorderingattacks}.
Otherwise, the two phases proceed  nearly identically:
\begin{itemize}[leftmargin=0pt,align=left,labelsep=5pt,labelwidth=!]
    \item \emph{Input Commitments} (\ref{step:train:inputparty}/\ref{step:camel:inf:input:client}).
    In addition to their inputs, parties must provide a commitment to their inputs and the protocol verifies the consistency of these commitments before proceeding.
    In the secure computation setting, this uses \sPoCEval which involves an (efficient) multi-party computation.
    In the plaintext setting, the computing parties can simply locally recompute the commitments.

    \item \emph{Computation Integrity} (\ref{step:train:tcomputer}/\ref{step:camel:inf:icomputer}).
    After computing the underlying \gls{ml} training or inference, the computing parties commit (in the secure setting, collaboratively under MPC) to the result and provide the result, commitment, and associated decommitment randomness to the output-receiving parties.
    As we later need to show that these outputs were the result of a valid computation, the computing parties attest to the integrity of the computation.
    In the secure setting, this can be achieved via a distributed signature, as at least one of the computing parties must be honest.
    In the plaintext setting, this requires a \gls{zkpot} or \gls{zkpoi}, as we cannot rely on a split-trust assumption for integrity.

    \item \emph{Input Integrity}  (\ref{step:train:ip:sign}/\ref{step:camel:inf:modelownerverify}).
    While the signature or proof tie the result to a valid computation, they do not provide sufficient guarantees about the inputs.
    Therefore, the input parties verify their inputs were used and provide signatures to attest to this.

    \item \emph{Output Consistency} (\ref{step:train:modelowner}/\ref{step:camel:inf:clientverify}).
    The output from the computation is provided to the output-receiving parties (in the plain) along with the associated receipt data.
    These parties then verify that the signatures and proofs are valid, and that the outputs (and decommitment randomness) they received matches the commitments in the receipt.

\end{itemize}

Note that providing a plaintext inference result to the client is essential for a useful inference service.
However, one could consider a variant of the protocol in which the \glspl{r:modelowner} only receive shares of the model, rather than the plaintext model.
While our protocol could be trivially extended to support this,
this would both unnecessarily complicate the notation and would require the long-term storage of secret shares and potentially complicated operations such as secret-share maintenance and re-sharing to new sets of entities~\cite{Schultz2010-mpss,Herzberg1995-pss,Maram2019-ao}, a complexity which we aim to avoid in our design.

The  receipt received by the \glspl{r:modelowner} after training comprises
commitments to the training data, the training randomness and the resulting model; the \glspl{r:inputparty}'s signatures; and either the signature from the computing parties or a proof of training.
During inference, the \glspl{r:modelowner} provide this training receipt instead of merely the model commitment.
As a result, the inference receipt is essentially an extension of the training receipt and includes the equivalent commitments and signatures (or proofs, where applicable) for both training and inference.
Therefore, the conciseness of the underlying commitments (i.e., \gls{poc}) is crucial to ensuring that the overhead imposed upon the client due to the need to store this receipt is minimized.

\fakeparagraph{Auditing}
During auditing, the client provides the receipt and inputs the prediction sample and result into the MPC computation.
Meanwhile, the \glspl{r:modelowner} and \glspl{r:inputparty} need to provide their respective inputs.
The protocol first confirms that the signatures (or proofs, where applicable) in the receipt are valid, in reverse order, i.e., beginning with the last signature generated at the end of inference.
Then, it uses \sPoCEval to verify the consistency of the provided inputs with the commitments in the receipt.
Finally, after all checks have passed, the protocol computes the audit function (i.e., \sabbAAudit).
Should any of the checks fail, the protocol aborts and identifies the party at fault.
During the auditing phase, we can rely on the fact that an honest client will only ever accept valid receipts, as the proof of computational integrity (or, in the secure computation setting, signatures) allow them to verify that the receipt was generated correctly.
At the same time, these checks prevent the adversary from constructing a valid malicious receipt that would incriminate an honest party.
As a result, any inconsistency between the receipt and the input data provided to the audit phase can be uniquely attributed to the party providing the corresponding auditing input.
For more details, we refer to the proof in Appendix~\ref{sec:apx:proofs:composability}\citeextended.

\section{Proof of Consistency}
\lsec{design:inputconsistency}
In our auditing protocol \sprotocolcamel, we assume access to a \Acrlong{poc} protocol \sPoCLabel that allows a party to commit to their (secret) inputs and later allows the parties to collaboratively check that a given (set of) secret shared\footnote{More precisely, in the representation used by \sabb.} values is consistent with the provided commitment.
Note that \gls{poc} might seem related to \gls{vss}, which guarantees that parties receive a valid sharing of a given value.
However, this is orthogonal to our requirement of ensuring consistency of \emph{inputs} across different phases and, therefore, different sharings of that value.
In theory, the consistency guarantees required for \oursystem could also be achieved straightforwardly by using standard commitments.
However, as we show in our evaluation, such approaches incur significant performance overheads, especially during verification, making practical deployment infeasible.

\subsection{Defining Proof-of-Consistency}
\lsec{design:consistency:approaches}
In the following, we provide the formal definition of \sPoCLabel and the properties it needs to achieve before discussing several approaches based on existing literature and highlighting their inherent limitations with regard to efficiency and succinctness.

\begin{definition}[\Acrlong{poc} Protocol]%
    \ldef{poc}
    A valid Proof-of-Consistency is an interaction between a Prover \sprover and a set of $\sNumPartiesGenericParty - 1$ Verifiers \sverifier.
    This protocol allows the verifiers to check that a vector $\ssIdeal{\sInputs} = (\ssIdeal{\sInputs_1}, \ldots, \ssIdeal{\sInputs_{\sNumSamples}})$
    stored in an ideal functionality $\sabb$ is consistent
    with a commitment $\scommitment$ %
    to $\sInputs = (\sInputs_1, \ldots, \sInputs_{\sNumSamples}) \in \sfield_p^{\sNumSamples}$.
    A Proof-of-Consistency is defined as a set of protocols (\sPoCSetup, \sPoCCommit, \sPoCEval) where:
    \begin{algos}
        \item $\sPoCSetup(1^\lambda, \sNumSamples) \rightarrow \pppoc$: prepares public parameters $\mathsf{pp}$ supporting inputs of size $\sNumSamples$.
        \item $\sPoCCommit(\pppoc, \sInputs, \sRandomness) \rightarrow \scommitment$:
        An algorithm in which the prover generates a commitment to (a vector of) inputs $\sInputs$ with randomness \sRandomness.
        \item $\sPoCEval(\pppoc, \scommitment, \ssIdeal{\sInputs}; \sInputs, \sRandomness) \rightarrow \{ 0, 1\}$:
        A protocol where the prover convinces the verifiers that the commitment $\scommitment$ is consistent with $\ssIdeal{\sInputs}$.
        Only the prover knows \sInputs and \sRandomness.

    \end{algos}
\end{definition}

\noindent
A valid Proof-of-Consistency should satisfy correctness, soundness and zero-knowledge, which we formally define in Appendix~\ref{sec:apx:proofs:consistency}.
In addition to the formal requirements,
a \sPoCLabel instantiation should yield succinct commitments and its protocols should be efficiently computable (specifically, require minimal MPC operations  in \sPoCEval) in order to be practical for \gls{ml}.
Succinctness is crucial for ensuring efficient communication and storage, especially when dealing with large input sizes and resource-constrained clients.
In the following, we discuss existing approaches and how they fall short in our setting. 
\ifdefined\isnotextended
We refer to Appendix~\ref{sec:alternativeapproaches} in the extended version of the paper~\cite{arc_full} for formal definitions of the corresponding protocols.
\else
We refer to Appendix~\ref{sec:alternativeapproaches} for formal definitions of the corresponding protocols.
\fi

\fakeparspacingtop
\fakeparagraph{Direct Commitments~\cite{Agrawal2021-rg,Segal2020-tu,Kilbertus2018-zl}.}
A straightforward approach to \sPoCLabel is to use a cryptographic commitment scheme
to instantiate \sPoCSetup and \sPoCCommit with \sComSetup and \sComCommit, respectively.
In \sPoCEval, the commitment is verified with respect to the secret shared inputs \ssIdeal{\sInputs} and decommitment \ssIdeal{\sRandomness} by computing \sComVerify using \sabb.
This typically requires recomputing the commitment under \gls{mpc}, because the usual implementation of \sComVerify is to re-compute the commitment $\scommitment^\prime \leftarrow \sComCommit(\ssIdeal{\sInputs}, \ssIdeal{\sRandomness})$ and checking that $\scommitment^\prime = \scommitment$.
Related work has suggested to use this protocol with commitments based on a collision-resistant hash function,
such as \mbox{SHA-2}~\cite{Kilbertus2018-zl}, \mbox{SHA-3}~\cite{Segal2020-tu} and
MPC-friendly constructions such as LowMCHash-256~\cite{Agrawal2021-rg}.
The advantage of this approach lies in its succinct commitment size which is typically constant.
However, despite its efficient storage needs, the hash-based approach incurs significant computational costs during verification.
This is primarily due to hashes relying on non-linear operations, which are expensive to compute in \gls{mpc}.

\fakeparspacingtop
\fakeparagraph{Homomorphic Commitments~\cite{Zheng2021-tb}.}
To mitigate the \gls{mpc} cost of \sPoCEval, one can rely on homomorphic commitments such as Pedersen commitments instantiated using an elliptic curve group~\cite{Zheng2021-tb, Pedersen1992-Commitments}.
Instead of calling \sComVerify for the full input vector \ssIdeal{\sInputs}, parties use the homomorphism to compute a linear combination of commitments to individual elements $\sinputSample_i$, trading off \gls{mpc} overhead with local computation.
As a result, parties only compute a single commitment $\tilde{\scommitment}^\prime = \sComCommit(\sum_i \gamma^i \cdot \ssIdeal{\sinputSample_i} )$ with \sabb in \sPoCEval and compare the result with $\tilde{\scommitment}^\prime = \sum_i \gamma^i \cdot \scommitment_i$.
Unfortunately, a downside of this approach is that commitments to individual elements must be stored, resulting in a size that is linear in $\mathopen|\sInputs\mathclose|$.
This approach results in a \sPoCEval that is asymptotically more efficient than the hash-based approach.
In practice, hash-based approaches remain more concretely efficient for very small inputs.
However,  the Pedersen commitment approach becomes more efficient already for moderate input sizes.

\subsection{\oursystem PoC Protocol}
\label{sec:ourprotocol}
The key insight of our efficient PoC protocol is that
 we do not actually need to compute \sComCommit in order to verify that the \ssIdeal{\sInputs} in \sabb matches the input \sInputs of \sComCommit.
Instead, we propose a protocol that allows the prover \sprover to convince the verifiers of this fact with a polynomial identity test.
Towards this, we first define a polynomial \mbox{$\spolynomial(B) \defeq \sum_{i=1}^d x_i \cdot B^i$}, i.e., interpreting the elements of $\sInputs \in \sfield_p^\sNumSamples$ as the coefficients of the polynomial.
The prover commits to $\spolynomial$ using a (homomorphic) polynomial commitment scheme~\cite{Kate2010-px} to obtain a constant-size commitment $\scommitment$.
In \sPoCEval, the parties then first collaboratively sample a point $\beta \sample \sfield_p$ and then evaluate the polynomial at $\beta$ (using \sabb) by computing \mbox{$\rho \defeq  \sum_{i=1}^d \ssIdeal{x_i} \cdot \beta^i$} and opening $\rho$.
This is cheap in \gls{mpc} because it only involves addition and scaling operations on the secret shares \ssIdeal{\sInputs} which can be executed locally.
The prover, who originally committed to $\sInputs$ with $\scommitment$, can now do a polynomial commitment opening proof to show that $\spolynomial(\beta)$ equals $\rho$.
The other parties verify this evaluation proof, which, if valid, implies that the polynomial in $\sabb$ is (with high probability) equal to the one committed to with $\scommitment$.
One caveat with this approach is that $\rho$ reveals information about $\sInputs$.
We can overcome this by generating and committing to a random polynomial $\spolynomial_\sProtocolRandomness$ at the beginning of \sPoCEval and use it to (additively) mask $\spolynomial$.
We can then evaluate and open $\spolynomial(\beta) + \spolynomial_\sProtocolRandomness(\beta)$ which is now  indistinguishable from random.
The prover and verifiers proceed with the evaluation proof as before, but on the combined commitment of $\spolynomial$ and $\spolynomial_\sProtocolRandomness$, using the homomorphic property of the polynomial commitment scheme.
In the following, we provide a formal definition:

\begin{protocol}[Consistency Check]
    \label{protocol:cc}
    Let $\sabb$ be an instance of an ideal MPC functionality over a field $\sfield_p$,
    let $\sidealrand$ be an ideal functionality that returns a random element from $\sfield_p$ and
    let $\sinputSamples = (\sinputSample_1, \ldots, \sinputSample_{\sNumSamples}) \in \sfield_p^{\sNumSamples}$ by the input of prover $\sprover$.
    Let $\ssIdeal{\sinputSamples} = (\ssIdeal{\sinputSample_1}, \ldots, \ssIdeal{\sinputSample_{\sNumSamples}})$
    be the input of the prover $\sprover$ to $\sabb$.
    Let $\PCScheme$ be a polynomial commitment scheme as in~\rdef{pc}\citeextended that is also homomorphic as in~\rdef{homomorphic_commitment}\citeextended.
The protocol \sprotocolCC works as follows:
\begin{algos}
    \item $\sCCSetup(1^\lambda, \sNumSamples) \rightarrow \pp$: Run $\pp \leftarrow \PCSetup(\sNumSamples)$, where \sNumSamples is the number of elements in the input.
    \item $\sCCCommit(\pp, \sInputs, \sRandomness) \rightarrow \scommitment$:
    The prover computes a polynomial commitment $\scommitment \leftarrow \PCCommit(\texttt{pp}, \spolynomial, \sRandomness)$ where $\spolynomial$ is defined as \mbox{$\spolynomial(z) = \sum_{i=1}^d x_i \cdot z^i$}.
    The prover outputs $\scommitment$.
    \item $\sCCEval(\pp, \scommitment, \ssIdeal{\sInputs}; \sInputs, \sRandomness) \rightarrow \{ 0, 1\}$:
    The protocol proceeds as follows:
    \begin{enumerate}[align=left,itemsep=2pt,left=0pt]
        \item The prover samples a masking value $\sProtocolRandomness \sample \sfield_p$ and commitment randomness $\sRandomness_\sProtocolRandomness \sample \sfield_p$ and computes a polynomial commitment
            $\scommitment_\sProtocolRandomness \leftarrow \PCCommit(\texttt{pp}, \spolynomial_\sProtocolRandomness, \sRandomness_\sProtocolRandomness)$ to a degree-0 polynomial \mbox{$\spolynomial_\sProtocolRandomness(z) = \sProtocolRandomness$.}
        The prover sends $\scommitment_\sProtocolRandomness$ to all parties and inputs $\sProtocolRandomness$ to $\sabb$.
        \item The parties invoke $\sidealrand$ to obtain a random challenge $\beta \sample \sfield_p$.
        \item \lstep{cc:rho} The parties invoke $\sabb$ to compute \mbox{$\ssIdeal{\rho} \defeq \ssIdeal{\sProtocolRandomness} + \sum_{i=1}^{\sNumSamples} \ssIdeal{\sinputSample_i} \cdot \beta^i$}
        and subsequently open $\rho$.
        \item \lstep{cc:prover} The prover $\sprover$ generates a proof $\pi \leftarrow \PCProve(\pp, \scommitment + \scommitment_\sProtocolRandomness, \spolynomial + \spolynomial_\sProtocolRandomness, \sRandomness + \sRandomness_\sProtocolRandomness, \beta, \rho)$ and sends $\pi$ to
        each verifier $\sverifier$.
        \item \lstep{cc:verifiers} Each verifier runs $\PCCheck(\pp, \scommitment \cdot \scommitment_\sProtocolRandomness, \beta, \rho, \pi)$. If verification passes, they output $1$, otherwise $0$.
    \end{enumerate}
\end{algos}

\end{protocol}

\noindent Intuitively, security follows from the fact that if the committed polynomial is not equal to the polynomial evaluated on the secret shares, then the prover can only open the commitment to $\rho$ with negligible probability.
We provide a formal security proof in Appendix~\ref{sec:apx:proofs:consistency} (\rlem{cc}).
The protocol \sprotocolCC can be extended to provide \gls{idabort}, denoted as \sprotocolCCID, by using a broadcast channel for the prover in~\rstep{cc:prover} and using an MPC protocol that provides identifiable abort ($\sabbid$). %

\fakeparagraph{Cost Analysis.}
Many polynomial commitment schemes have a constant storage overhead independent in the input size,
resulting in each party having to store only a single, constant-sized commitment for each input vector.
Our protocol can be instantiated with any homomorphic polynomial commitment scheme
and inherits the efficiency profile of the underlying scheme.
If instantiated with KZG polynomial commitments~\cite{Kate2010-px}, we achieve a constant storage overhead independent of the input size and a constant verification time.
Although the public setup parameters of KZG are of size $O(\sNumSamples)$,
we can consider them as system parameters and reuse them for the input of each party~\cite{Kate2010-px}.
Hence, our protocol only requires a storage overhead linear in the size of the input and the number of parties, i.e., $O(\sNumPartiesGenericParty + \sNumSamples)$.
If used with an inner-product argument-based polynomial commitment, the commitment size could also be made constant (e.g., a single \acrfull{pvc}).
However, the verification time would be linear in the input size~\cite{Bootle2016-ff}.
\fakeparagraph{EC-MPC.}
Prior work has observed that most secret-sharing-based MPC protocols for finite field arithmetic generalize to arithmetic circuits involving elliptic curve points~\cite{Ozdemir2022-aw,Smart2019-mpcec}.
Using such protocols with additional support for computations over an elliptic curve group $\sgroup$ of order $p$ (which we denote as \sabbec) when instantiating \sPoCLabel offers a significant improvement to performance.
Specifically, we can accelerate the execution of our \sPoCCommit algorithm under secure computation (i.e., the creation of output commitments).

\fakeparspacingtop
\fakeparagraph{Batch verification}
Our protocol $\sprotocolCC$ allows the verifier to check the integrity of the prover's input by verifying one pairing equation.
However, the verifier still needs to perform this check for each input party.
We can optimize this further for KZG commitments by leveraging their homomorphic property~\cite{Kate2010-px,Gabizon2019-plonk}~(\rdef{homomorphic_commitment}\citeextended).
Let $\scommitment_1, \ldots, \scommitment_{\sNumPartiesGenericParty}$ be the set of commitments and $\rho_1, \ldots, \rho_{\sNumPartiesGenericParty}$ the set of
target evaluations for each prover $\sprover_1,\ldots,\sprover_{\sNumPartiesGenericParty}$ at a common random point $\beta \in \sfield_p$ from~\rstep{cc:rho} of the consistency check.
The verifier first computes a random linear combination of the commitments as $\tilde{\scommitment} \defeq \sum_i^{\sNumPartiesGenericParty} \gamma^i \scommitment_i$
for a randomly sampled $\gamma \in \sfield_p$, as well as the corresponding evaluation $\tilde{\rho} \eqdef \sum_i^{\sNumPartiesGenericParty} \gamma^i \rho_i$
and aggregate proof $\tilde{\pi} \eqdef \sum_i^{\sNumPartiesGenericParty} \gamma^i \pi_i$.
The verifier can then check this aggregated commitment using $\mathsf{PC.Check}(\mathsf{pp}, \tilde{\scommitment}, \beta, \tilde{\rho}, \tilde{\pi})$.
This allows the verifier to check, in the optimistic case, only one pairing equation instead of $\sNumPartiesGenericParty$  at the cost of a negligible statistical error.
Security follows from the fact that the aggregated polynomial commitment $\tilde{\scommitment}$ will only agree with the aggregated evaluation point $\tilde{\rho}$
at a random point $\beta$ with negligible probability due to the Demillo-Lipton-Schwartz-Zippel Lemma~(\cite{Demillo1978-SZDL}).
If verification passes, this implies that all commitments open to the correct evaluation point with overwhelming probability.
If verification fails, this must mean that at least one of the commitments is inconsistent with high probability.
In this case, the verifier can proceed to check the commitments and proofs individually.

\subsecspacingtop
\section{Auditing Functions}

\begin{table}
    \centering
    \setlength{\tabcolsep}{3pt}
    \begin{tabular}{lccc}
        \toprule
        & \textsf{T} & \textsf{M} & \textsf{I} \\
        \midrule
        Data Validation & & & \\
        \quad Input Checks~\cite{Chang2023-mk, Lycklama2023-cx} & \CIRCLE & \Circle & \Circle \\
        \quad Sample Attribution~\cite{Koh2017-by, Khanna2019-FisherKernel, Jia2019-ShapleyValue, Shan2021-ad, Hammoudeh2022-influence_auditing} & \CIRCLE & \CIRCLE & \CIRCLE \\
        \quad Party attribution~\cite{Lycklama2022-CamelMLSafety,utrace} & \CIRCLE & \CIRCLE & \CIRCLE \\

        \midrule

        Model Validation & & & \\
        \quad Validation Sets~\cite{Choi2023-zb} & \Circle & \CIRCLE & \Circle \\
        \quad Feature Attribution~\cite{Ribeiro2016-lime, Lundberg2017-unifiedmodelpredictions, Jetchev2023-xorshap} & \LEFTcircle & \CIRCLE & \CIRCLE \\
        \quad Certification~\cite{Jovanovic2022-tz, Kilbertus2018-zl, Segal2020-tu} & \Circle & \CIRCLE & \CIRCLE \\

        \midrule

        Process Validation & & & \\
        \quad Algorithm Verific.~\cite{Jia2021-PoL,Garg2023-bn,Sun2023-zkdl,Kang2022-zkml} & \CIRCLE & \CIRCLE & \Circle \\
        \quad Constraint Verific.~\cite{Shamsabadi2022-us} & \CIRCLE & \CIRCLE & \Circle \\

        \bottomrule
    \end{tabular}
    \caption{
    A priori and post hoc algorithms from the \gls{ml} interpretability and safety literature along with whether they require the training data (\textsf{T}), the model (\textsf{M}) and the inference (\textsf{I}) as input.
    }
    \vspace{-0.5em}
    \ltab{auditingfunctions}
\end{table}

\subsecspacingbot
\lsec{functions}
{
\definecolor{blond}{rgb}{0.98, 0.94, 0.75}
\newcommand{\secret}[1]{{\setlength{\fboxsep}{1.5pt}\colorbox{blond}{#1}}}
\let\x\spredictionX
\let\y\spredictionY
\let\smodelOld\smodel
\newcommand{\spartyinputI}{\secret{$\spartyinput_1$}}
\newcommand{\spartyinputII}{\secret{$\spartyinput_2$}}
\newcommand{\spartyinputN}{\secret{$\spartyinput_N$}}
\newcommand{\spartyinputs}{\spartyinputI$, \ldots $\spartyinputN}
\renewcommand{\vec}[1]{\mathbf{#1}}

Until now, our discussion has centered around the cryptographic protocol that enables robust and secure audits of private models.
We now focus on how we realize the audit functionality enabled by our framework.
The algorithmic side of auditing for ML is an active area, and alternative instantiations that enable different properties exist or are actively being developed (see \rtab{auditingfunctions} for an overview).
In our work, we focus on post-hoc audits, specifically  accountability (sample/party attribution), explainability (feature attribution), and robustness \& fairness (certification).
While \oursystem's design is highly extensible with additional auditing functionalities (similar to how we support a wide range of \gls{ppml} systems), here we focus our discussion on auditing functions currently implemented in our framework.
\ifdefined\isnotextended
In Appendix~\ref{sec:functions_extra} in the extended version of the paper~\cite{arc_full}, we provide a more detailed description of each function.
\else
In Appendix~\ref{sec:functions_extra}, we provide a more detailed description of each function.
\fi

\vspace{-0.5em}
\subsection{Robustness \& Fairness}
\lsec{functions:robustnessfairness}
The community has devised a range of techniques to show that a model is robust against adversarial examples~\cite{Gehr2018-certifiedrobustness, Xu2020-certifiedrobustness}, i.e., that the model is stable to small variations in the input.
While \emph{global} robustness guarantees are more naturally realized as a priori checks, \emph{local robustness}~\cite{Cohen2019-cn},
which certifies that a model consistently produces the same prediction $y$ for all inputs within a radius $R$ around the original input $x$, naturally suits the post-hoc auditing setting we consider.
In \oursystem, we adapt the algorithm proposed by Jovanovic et al.~\cite{Jovanovic2022-tz} for \gls{fhe} to the \gls{mpc} setting.
The algorithm samples $n$ perturbed inputs around the input $x$ by adding Gaussian noise and obtaining predictions for these samples.
Finally, a statistical check is conducted to assess whether the obtained prediction $y$ remains invariant to these perturbations with high probability.
The output of the auditing function is a boolean indicating whether the model is locally robust with confidence $1-\alpha$.
We can extend the same technique to achieve fairness guarantees,
as there is a well-established connection between robustness and individual fairness~\cite{Dwork2011-fairness, Yurochkin2019-fairness, Ruoss2020-certifiedfairness, Jovanovic2022-tz}.
It is sufficient to change the closeness metric of the sampling procedure to generate perturbed inputs that are close to $x$ in terms of fairness (i.e., inputs that should be treated similarly have a small distance).

\vspace{-0.5em}
\subsection{Accountability}
\lsec{functions:accountability}

We consider two flavors of accountability mechanisms that attribute responsibility for decision made by a model:
\emph{sample} attribution identifies the influence of individual data samples on a prediction, while  \emph{party} attribution merely provides the relative influence of each \gls{r:inputparty}'s dataset, providing auditability while revealing less information.
A variety of methods to identify the impact of individual data samples exist~\cite{Koh2017-by, Khanna2019-FisherKernel, Jia2019-ShapleyValue, Shan2021-ad, Hammoudeh2022-influence_auditing}, however, some (e.g., influence functions~\cite{Koh2017-by}) require substantial computational resources (e.g., inverting the Hessian matrix of the loss function) which makes them prohibitively expensive under secure computation.
In \oursystem, we leverage an approach using \gls{f:knnshapley} values~\cite{Jia2019-ShapleyValue}, i.e., Shapley values of a KNN classifier on the training data's latent space representation.
As there exists a closed-form formulation of the Shapley values for KNN~\cite{Jia2019-ShapleyValue}, this allows an efficient MPC realization.
For party attribution, \oursystem uses an efficient unlearning approach~\cite{Lycklama2022-CamelMLSafety}.
The key idea here is that if a suspicious prediction $(x, y)$ was (at least partially) the result of data provided by a \gls{r:inputparty}, then excluding that party's data will lead to the absence (or weakening) of the suspicious prediction.
Rather than recomputing leave-out models from scratch, we use an algorithm~\cite{Lycklama2022-CamelMLSafety} that uses unlearning of the data of a party from the original model~\cite{Shan2021-ad},
which requires only a small number of training epochs until sufficient differences are detectable.

\vspace{-0.5em}
\subsection{Explainability}
\lsec{functions:explainability}
A wide range of methods has been proposed to explain the predictions of complex models~\cite{Ribeiro2016-lime, Shrikumar2017-deeplift, Lundberg2017-unifiedmodelpredictions}. 
Of these, we consider additive feature attribution methods~\cite{Lundberg2017-unifiedmodelpredictions} as particularly suitable for privacy-preserving auditing as they provide an attractive trade-off between leakage and utility. 
These methods highlight which features of a prediction sample are most influential for the prediction, even for complex \gls{ml} models. 
They achieve this by approximating the target 
model's behavior locally (around the given prediction) with a simple and explainable (e.g., linear) model.
In \oursystem, 
we leverage KernelSHAP~\cite{Lundberg2017-unifiedmodelpredictions} to approximate the local decision boundary of the classifier,
through a linear regression on the features. 
We sample points around the prediction sample and weigh them based on their distance to the sample as measured by the Shapley kernel.
As a result, the regression coefficients directly correspond to the Shapley values of the features.

} %

\subsecspacingtop
\section{Evaluation}
\subsecspacingbot
\lsec{eval}

In this section, we evaluate the performance of \oursystem
in the training, inference and auditing phases for different workloads and auditing functions.
We evaluate the overhead of our protocol when instantiated with different approaches to the consistency layer \sPoCLabel.
For training and auditing, we focus on the \gls{mpc} versions of our protocol, as these are the most established forms of verifiable \gls{ml} computation.

\fakeparspacingtop
\fakeparagraph{Implementation.}
Our implementation is based on MP-SPDZ~\cite{Keller2020-mpspdz}, a popular framework for MPC computation that supports a variety of protocols.
We extend MP-SPDZ with protocols for share conversion and elliptic curve operations on the pairing-friendly BLS12-377 curve~\cite{Bowe2018-tw}
provided by the \texttt{libff} library~\cite{libff}.
We use ECDSA signatures on the secp256k1 curve~\cite{curve-secp256k1} for which a distributed signing protocol
was previously implemented in MP-SPDZ~\cite{Dalskov2020-mm}.
For the evaluation proofs of the polynomial commitments, we use the implementation of the KZG polynomial commitment scheme~\cite{Kate2010-px} provided by Arkworks' \texttt{poly-commit} library~\cite{arkworks}.
We perform share conversion (\cfref{Appendix~\ref{sec:share_conversion}}\citeextended) to convert between $\sring_{2^{64}}$ and the scalar field $\sfield_{\text{BLS12-377}}$.
The MPC computations for \gls{ml} training, inference, and auditing functions are expressed in MP-SPDZ's domain-specific language.
We rely on the higher-level \gls{ml} primitives that MP-SPDZ provides that use mixed-circuit computation.
Note that we perform exact truncation instead of probabilistic truncation for fixed-point multiplication because the latter has recently been shown to be insecure~\cite{Li2023-ay}.

To compare the performance of our consistency layer to other approaches (\cfref{\ref{sec:design:consistency:approaches}}),
we additionally implement a version of \sPoCLabel based on the SHA3-256 cryptographic hash function denoted by \gls{a:sha3}
that internally uses the Bristol-fashion circuit implementation of the Keccak-f sponge function~\cite{nist_sha3}.
We also implement a version \gls{a:ped} based on Cerebro~\cite{Zheng2021-tb} that uses Pedersen commitments by adapting the open-source implementation provided by the authors.
We make the implementation of our protocol and the various consistency layers that we evaluate available as open-source\footnote{\href{https://github.com/pps-lab/arc}{github.com/pps-lab/arc}}.

\fakeparspacingtop
\fakeparagraph{Experimental Setup.}
We run \oursystem on a set of AWS \texttt{c5.9xlarge} machines running Ubuntu 20.04, each equipped with 36 vCPUs of
an Intel Xeon 3.6~Ghz processor and 72~GB of RAM.
For the BERT transformer model, we use \texttt{c5.24xlarge}, featuring 96 vCPUs of an Intel Xeon 3.6~Ghz processor and 192~GB of RAM, as it provides sufficient memory to fit the significantly larger model.
The machines are connected over a \gls{lan} through a 12~Gbps network interface with an average \gls{rtt} of $0.5$~ms.
We additionally perform our experiments in a simulated \gls{wan} setting using the \texttt{tc} utility to introduce an \gls{rtt} of $80$~ms and limit the bandwidth the $2$~Gbps.
We report the total wall clock time and the total communication cost in terms of the data sent by each party.
This includes the time and bandwidth required for the online phase and the preprocessing phase that sets up the correlated randomness necessary for the \gls{mpc} protocol.
We also report the storage overhead for which we apply log scaling as the overhead varies significantly between different \sPoCLabel approaches and settings.
In experiments in the \gls{wan} setting and those involving maliciously secure protocols, we estimate the \gls{ml} training operations based on 5 and 50 batches of gradient descent, respectively.
For the BERT transformer model, we extrapolate further due to the significant size of the model and training data.
For the related work, which has overhead linear in the size of the input, we also extrapolate results for some of the larger instances.

We evaluate the computational phases in the \gls{3pc} setting with a maliciously secure-with-abort protocol that combines SPDZ-wise redundancy with replicated secret sharing over a 64-bit ring~\cite{Dalskov2021-rc}.
We also evaluate the performance of a semi-honest protocol based on replicated secret sharing.
These protocols are representative of the most efficient MPC protocols in the malicious and semi-honest settings for \gls{ml} workloads.
We apply an optimization for the auditing phase that uses the fact that all inputs in this phase are authenticated using commitments.
This allows us to optimistically use a security-with-abort protocol and, only, if the protocol aborts, restart the computation with a less efficient identifiable-abort protocol
with the guarantee that this execution uses the same inputs.
We choose the \gls{3pc} setting because it allows for the most efficient MPC protocols, favoring
\gls{a:ped} and \gls{a:sha3} whose \sPoCEvalNoLabel relies more heavily on MPC computation.
Other settings such as \gls{2pc} or non-optimistically executing the auditing would require more expensive MPC protocols, resulting in a higher overhead for the computation.
This would increase the relative overhead of the related approaches compared to ours.

\fakeparagraph{Scenarios.}
We evaluate four auditing functions from the previous section, (i) \gls{f:robustness}, (ii) \gls{f:fairness} (\cfref{\rsec{functions:robustnessfairness}}), (iii) \gls{f:knnshapley} (\cfref{\rsec{functions:accountability}}) and
(iv) \gls{f:shap} (\cfref{\rsec{functions:explainability}}),
on the following models and datasets.
\begin{itemize}[label={},leftmargin=0pt,align=left,labelsep=0pt,labelwidth=!]
    \item \gls{sc:adult}: A logistic regression model with 3k parameters trained on the Adult~\cite{Becker1996-adult} binary classification task for 10 epochs to predict whether a person's income exceeds \$50k per year.
    \item \gls{sc:mnist}: A LeNet model consisting of 431K parameters, referred to as `model C' in prior work~\cite{Keller2022-quantizedtraining, Wagh2019-SecureNN}
    trained on the MNIST image classification task~\cite{Lecun1998-wc} for 20 epochs.
    \item \gls{sc:cifar}: A variant of AlexNet~\cite{Krizhevsky2012-alexnet} as used in Falcon~\cite{Wagh2021-Falcon}, comprising
    3.9 million parameters trained on the CIFAR-10 image classification task~\cite{Krizhevsky2009-cifar10} for 40 epochs.
    \item \gls{sc:qnli}: A BERT transformer model~\cite{Devlin2019-bert} with sequence length 128, comprising 85 million parameters, finetuned for one epoch on 2500 samples of the Stanford Question Answering reading comprehension dataset~\cite{Wang2018-glue,qnli}.
\end{itemize}

{
\newcommand*{\TextVCenter}[1]{%
  \text{$\vcenter{\hbox{#1}}$}%
}
\newcommand{\mpc}{\TextVCenter{\footnotesize{ (mpc)}}}
\newcommand{\local}{\TextVCenter{\footnotesize{ (local)}}}

\begin{table}[t]
    \centering
    \setlength{\tabcolsep}{5pt}
     \begin{tabular}{lccc}
        \toprule
        & Storage & Prover Comp. & Verifier Comp. \\ \midrule
        \gls{a:sha3} & $N$ & - & $d \cdot N \mpc$ \\ \midrule
        \gls{a:ped} & $d \cdot N$ & - & $d \cdot N \local + N \mpc$ \\ \midrule
        \textbf{Ours} & $N$ & $d\local$ & $N \mpc$ \\ 
        \bottomrule
    \end{tabular}
    \caption{
        Asymptotic computational complexity (in big-$O$ notation) and storage requirements of the consistency approaches we evaluate for a computation with $N$ input parties each with
        $d$ input elements.
        We differentiate between local and \gls{mpc} computation. 
        For \gls{a:sha3}, we consider Keccak-f operations. 
        For \gls{a:ped} and ours, we consider group operations.
    }
    \ltab{complexity}
\end{table}

}

\subsection{Evaluation Results}
We evaluate the overhead that \oursystem imposes on the training, inference and auditing phases.
The main overhead of the consistency layer in training and inference consists of two parts:
Verifying the inputs of the computation using \sPoCEval and, afterwards,
computing the output commitments using \sPoCCommit.
Other components, such as those related to the signatures are negligible in comparison:
distributed signing takes at most 300ms for the \gls{wan} and verifying a signature is a local operation taking 1ms.
\Glspl{r:auditrequester} only have to store a single ECDSA signature of 64 bytes for the \gls{r:modelowner} and each \gls{r:inputparty}, and a joint signature of 64 bytes for the \glspl{r:icomputer} and the \glspl{r:tcomputer}.
We provide an analysis of the asymptotic complexity of the different approaches in~\rtab{complexity} and focus on the concrete performance numbers for the rest of this section.

\newlength{\infplotheight}

\begin{figure}[!t]
    \includegraphics[width=1.0\linewidth]{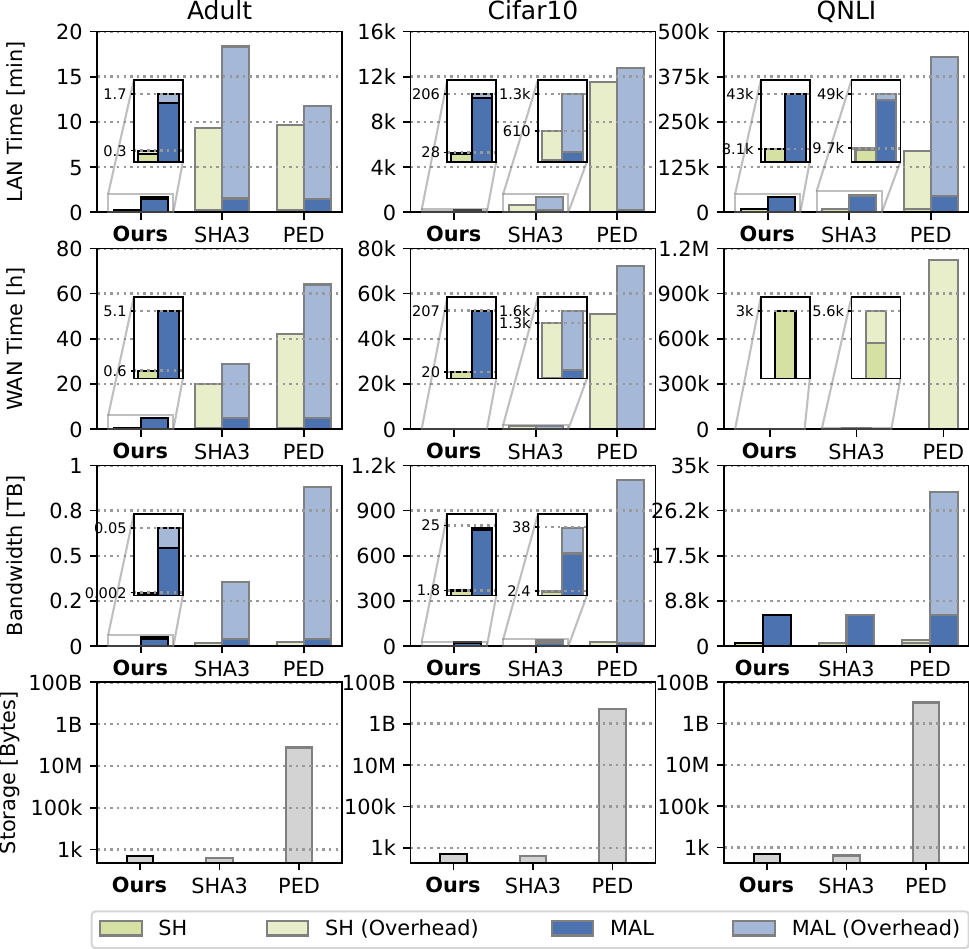} %
    \figcaptionvspace
    \caption{Evaluation of \oursystem comparing the approaches relative to a single epoch of \gls{ppml} training.}
    \figcaptionvspacebot
    \lfig{eval:training}
\end{figure}

\begin{figure}[t]
    \includegraphics[width=1.0\linewidth]{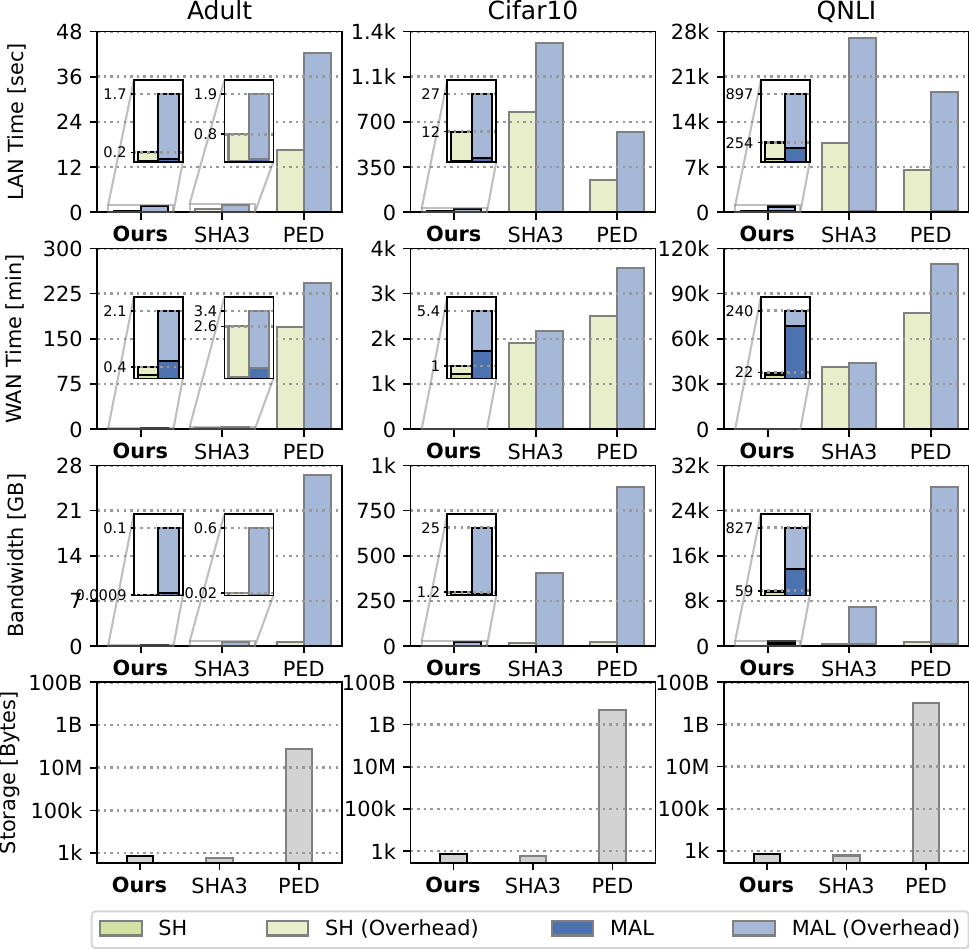} %
    \figcaptionvspace
    \caption{The overhead of our system's consistency protocol relative to a single \gls{ppml} inference for our three scenarios.}
    \figcaptionvspacebot
    \lfig{eval:inference}
\end{figure}

\begin{figure*}[!t]
    \includegraphics[width=1.0\textwidth]{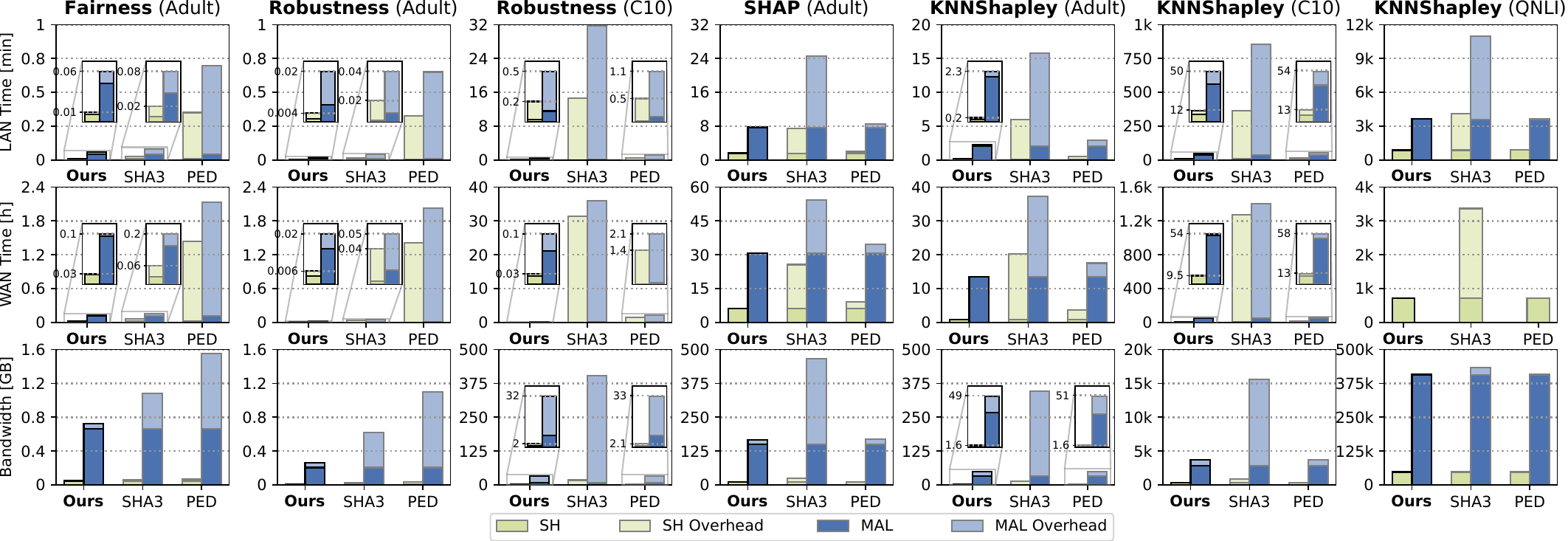} %
    \figcaptionvspace
    \caption{The overhead of \oursystem's consistency layer relative to the cost of the auditing function computation in \gls{mpc} for four different auditing functions across our three scenarios.}
    \figcaptionvspacebot
    \lfig{eval:auditing}
\end{figure*}

\fakeparspacingtop
\fakeparagraph{Training.}
We show the wall-clock time and bandwidth of a single training epoch in~\rfig{eval:training}, differentiating between the overhead induced by our framework and the cost of the underlying \gls{ppml} training.
In total, \gls{ppml} training \gls{sc:adult} takes \evalnum{119}~seconds, with consistency adding 5~seconds for ours, 545~seconds for \gls{a:sha3}, and 570~seconds for \gls{a:ped}.
For \gls{sc:mnist}, this is \evalnum{123} minutes (+~1.3~minutes, 2.9~hours, or 22~hours), for \gls{sc:cifar}, \evalnum{15} hours (+~4.1~minutes, 9.8~hours, or 8~days), and for \gls{sc:qnli}, this is \evalnum{5.4} days (+~11 minutes, 1.1 days, or 16 weeks), in the semi-honest LAN setting.
We refer to Appendix~\ref{sec:apx:evalextra}\citeextended for a full report of end-to-end training results and further results for \gls{sc:mnist} which have been omitted here due to space constraints.
Note that we do not evaluate  \gls{sc:qnli} training in the malicious WAN setting, as this is beyond the current state of the art for \gls{ppml}.
Since the bandwidth overhead is not significantly affected by network delays, we only present the bandwidth results once.
As storage is, additionally, also independent of the chosen MPC protocol, we do not differentiate between semi-honest and malicious settings for storage.

We observe that the overhead of the baseline approaches varies significantly for training.
The timing overhead induced by \gls{a:ped} is \evalnum{66-500x} compared to training but only \evalnum{0.1-26x} with \gls{a:sha3} in the \gls{lan} setting.
In the \gls{wan} setting the relative overhead further increases to \evalnum{at most three orders of magnitude} because of the large number of \gls{mpc} round-trips required to compute the operations related to hash functions and elliptic curve operations. 
In comparison, our consistency check protocol, which outperforms the related approaches across all configurations, introduces only \evalnum{0.001-1.35x} overhead in the \gls{lan} and less than \evalnum{1.02x} in the \gls{wan} setting.
This is because our protocol features more local computation than MPC computation and is therefore less impacted by the slowdown induced by network delays.
Overall, we conclude that the overhead of our approach, and also of \gls{a:sha3}, is effectively negligible in the context of PPML training.
In contrast, we observe that the \gls{a:ped} approach is infeasible for all but the simplest models. 
The primary cost of \gls{a:ped} is the time required to compute the individual Pedersen commitments to the model parameters; the overhead during verification is much smaller as this only involves computing a commitment for each of the three input parties. 
Finally, the bandwidth and storage required for \gls{a:ped} are significant as it needs to commit to each input element individually.
Meanwhile, the storage overhead of our approach is independent of the dataset and model sizes and similar to that of the hash-based approach with \evalnum{496} bytes compared to \evalnum{416} bytes for \gls{a:sha3}.

\fakeparspacingtop
\fakeparspacingtop
\fakeparagraph{Inference.}
Model inference is a significantly smaller operation than training, resulting in a larger relative consistency overhead  
As shown in~\rfig{eval:inference}, the consistency operations are at least an order of magnitude slower than the inference itself, even for our approach.
However, in absolute numbers, the overhead introduced by our approach is very small across all configurations, ranging from a few seconds to a few minutes for the large transformer model.
In contrast, the overhead for \gls{a:sha3} quickly becomes prohibitive for all but the smallest models, already requiring over thirty hours in the WAN setting for a single \gls{sc:cifar} inference.
\gls{a:ped} outperforms \gls{a:sha3} in some cases, it induces similarly prohibitive overheads across all configurations.

A significant fraction, \evalnum{35-66\%}, of the overhead in our approach is the result of the share conversion from the \gls{ppml} protocol's computation domain $\sring_{2^{64}}$ to the
scalar field domain $\sfield_{\text{BLS12-377}}$.
The conversion requires a bit decomposition and re-composition for each input parameter which is expensive and scales linearly in the input size.
Although this overhead is significant, it is small concretely, with \evalnum{$25$} seconds in the active security setting for \gls{sc:cifar}.
In the case that lower latency is required, the $\sring_{2^{64}}$ secret shares of the model can be cached on the inference servers after a single conversion to $\sfield_{\text{BLS12-377}}$ and verification with \sPoCEval.
When scaling to larger models, \gls{a:ped} and \gls{a:sha3} become prohibitively expensive concretely with a \evalnum{250-6000x} slowdown compared to a single inference.
As in training, we observe that the storage required for the \gls{r:auditrequester} receipts is comparable for our approach and 
\gls{a:sha3}, resulting in only \evalnum{720} bytes and \evalnum{608} bytes, respectively, per prediction.
Most importantly, these are independent of the model and training data size.
Meanwhile, \gls{a:ped} requires storage that is linear in the number of input elements, e.g., requiring \evalnum{5}GB for \gls{sc:cifar} and \evalnum{11}GB for \gls{sc:qnli}.
This results in \gls{a:ped} inducing prohibitive storage requirements for all but the simplest models.

\fakeparagraph{Auditing.}
We present the wall-clock time and bandwidth overhead for different auditing functions in~\rfig{eval:auditing}.
Note that we do not evaluate auditing on \gls{sc:qnli} for the malicious WAN setting, as the complexity of evaluating auditing functions on such a complex model in this setting is pushing the boundaries of what is possible with current PPML techniques.  
Across all settings, \oursystem significantly outperforms related approaches in terms of runtime, with a storage overhead comparable to the hash-based approach.
As we move to larger input sizes, for instance in the case of \gls{f:knnshapley} that considers the full training dataset, the main cost of our approach after share conversion is the \gls{msm} required to compute the opening proof of the polynomial commitment.
Each prover party must compute an \gls{msm} that is linear in the size of its input.
Due to the properties of KZG, the other parties only have to check one pairing equation per prover, which we can further reduce to a single pairing equation due to the batch verification (c.f. \rsec{ourprotocol}).
We also observe that, for larger models, \gls{a:ped} outperforms \gls{a:sha3} and approaches the performance of our approach.
This is primarily because there is no need for computing and storing commitments to the output during the auditing phase, sidestepping the major weaknesses of the \gls{a:ped} approach.
In fact, as model sizes increase, the high constant overhead of the Pedersen commitments becomes less noticeable.
However, our approach continues to outperform \gls{a:ped} both asymptotically (\cfref{\rtab{complexity}}) and also concretely across all configurations.

\noindent
In conclusion, we observe that our approach outperforms the related work both asymptotically and concretely across all configurations. 
While \gls{a:ped} and \gls{a:sha3} approach the performance of our solution in some phases (for certain configurations),
they remain prohibitively expensive from an end-to-end perspective.
Meanwhile, we demonstrate that \oursystem instantiated with our efficient \sPoCLabel protocol is highly practical across a wide range of settings, including scaling up to large and complex models that push the boundary of the current  state of the art in privacy preserving machine learning.

 \flushcolsend

\ifdefined\isnotanon
\vspace{-0.5em}
\section*{Acknowledgements}
We thank the anonymous reviewers for their insightful input and feedback.
We would also like to acknowledge our sponsors for their generous support, including Meta, Google, and SNSF through an Ambizione Grant No. PZ00P2\_186050.

\else
\fi

\newpage
\bibliographystyle{plain}
{\small

\bibliography{references, additional_references}

\flushcolsend
}

\appendix
\newpage
\appendix

\ifdefined\isnotextended
\vspace{1em}
\noindent
We refer to the extended version of the paper~\cite{arc_full} for the remaining appendices.
\else

\fi

\begin{myhideenv}

\section{Definitions}
\lsec{apx:proofs:definitions}

 \begin{definition}[Commitment Scheme]
     \ldef{commitments}
     A non-interactive commitment scheme consists of a message space $\scomMessageSpace$, randomness space $\scomRandomnessSpace$,
     a commitment space $\scomCommitmentSpace$
     and a tuple of polynomial-time algorithms $(\sComSetup,\sComCommit,\sComVerify)$
     defined as follows:
     \begin{algos}
         \item $\sComSetup(1^\lambda) \rightarrow \pp$: Given a security parameter $\lambda$, it outputs public parameters $\pp$.

         \item $\sComCommit(\pp, m, r) \rightarrow c$: Given public parameters $\pp$, a message $m \in \scomMessageSpace$ and randomness $r \in \scomRandomnessSpace$, it outputs a commitment $c$.

         \item $\sComVerify(\pp, c, r, m) \rightarrow \{0,1\}$: Given public parameters $\pp$, a commitment $c$, a decommitment $r$, and a message $m$, it outputs $1$ if the commitment is valid, otherwise $0$.
     \end{algos}
     A non-interactive commitment scheme has the following properties:
     \begin{algos}
         \item \fakeparagraph{Correctness.}
         For all security parameters $\lambda$, for all $m$ and for all $\pp$ output by $\sComSetup(1^\lambda)$, if $c = \sComCommit(\pp, m, r)$, then $\sComVerify(\pp, c, m, r) = 1$.

         \item \fakeparagraph{Binding.}
         For all polynomial-time adversaries $\mathcal{A}$, the probability
         \begin{equation*}
             \begin{split}
                 \Pr\bigl[\sComVerify(\pp, c, m_1, r_1) &= 1 \land \\
                 \sComVerify(\pp, c, m_2, r_2) &= 1 \land m_1 \neq m_2 : \\
                  \pp \leftarrow \sComSetup(1^\lambda), (c, r_1, r_2, & m_1, m_2) \leftarrow \mathcal{A}(\pp) \bigl]
             \end{split}
         \end{equation*}
         is negligible.

         \item \fakeparagraph{Hiding.}
         For all polynomial-time adversaries $\mathcal{A}$, the advantage
         \begin{equation*}
             \begin{split}
                 |\Pr[\mathcal{A}(\pp, c) = 1 : c &\leftarrow \sComCommit(\pp, m_1, r)] - \\
                 \Pr[\mathcal{A}(\pp, c) = 1 : c &\leftarrow \sComCommit(\pp, m_2, r)]|
             \end{split}
         \end{equation*}
         is negligible, for all messages $m_1, m_2$.
     \end{algos}
 \end{definition}

\begin{definition}[Polynomial Commitments~\cite{Kate2010-px}]
    \ldef{pc}
    Polynomial commitments enable a prover to commit to a polynomial in such a way that they can later reveal the polynomial's value at any particular point, with a proof that the revealed value is indeed correct.
    These commitments are notably used to construct succinct zero-knowledge proofs and verifiable computation protocols.
    A polynomial commitment scheme is a quadruple (\PCSetup, \PCCommit, \PCProve, \PCCheck) where
    \begin{algos}
        \item $\PCSetup(d) \rightarrow \texttt{pp}$: prepares the public parameters given the maximum supported degree of polynomials $d$ and outputting a common reference string $\texttt{pp}$.
        \item $\PCCommit(\texttt{pp}, \spolynomial, \sRandomness) \rightarrow c$: computes a commitment $c$ to a polynomial $\spolynomial$, using randomness $\sRandomness$.
        \item $\PCProve(\texttt{pp}, \scommitment, \spolynomial, \sRandomness, x, y) \rightarrow \pi$: The prover computes a proof $\pi$ using randomness $\sRandomness$ that $c$ commits to $\spolynomial$ such that $\spolynomial(x) = y$.
        \item $\PCCheck(\texttt{pp}, \scommitment, x, y, \pi) \rightarrow \{ 0, 1 \}$: The verifier checks that $c$ commits to $\spolynomial$ such that $\spolynomial(x) = y$.
    \end{algos}

    \noindent A polynomial commitment scheme is secure if it provides correctness, polynomial binding, evaluating binding, and hiding properties. 
    We refer to~\cite{Kate2010-px} for a formal definition of these properties.
\end{definition}
\begin{definition}[KZG Commitments~\cite{Kate2010-px}]
    \ldef{kzg}
    KZG commitments leverage bilinear pairings to create a commitment scheme for polynomials where the commitments have constant size.
    Let $\sgroup_1$, $\sgroup_2$ and $\sgroup_T$ be cyclic groups of prime order $p$ such with generators $\sgenerator_1 \in \sgroup_1$ and
    $\sgenerator_2 \in \sgroup_2$.
Let $e: \sgroup_1 \times \sgroup_2 \rightarrow \sgroup_T$ be a bilinear pairing, so that $e(\alpha \cdot \sgenerator_1, \beta \cdot \sgenerator_2) = \alpha \beta \cdot e(\sgenerator_1, \sgenerator_2)$.
The KZG polynomial commitment scheme for some polynomial $\spolynomial$ made up of coefficients $\spolynomial_i$ is defined by four algorithms:
    \begin{algos}
        \item $\PCSetup(d)$: Sample $\alpha \sample \sfield_p$ and output
        \begin{equation*}
            \texttt{pp} \leftarrow \left( \alpha \cdot \sgenerator_1, \ldots, \alpha^d \cdot \sgenerator_1, \alpha \cdot \sgenerator_2 \right)
        \end{equation*}
        \item $\PCCommit(\texttt{pp}, \spolynomial)$: Output $\scommitment = \spolynomial(\alpha) \cdot \sgenerator_1$, computed as
        \begin{equation*}
            \scommitment \leftarrow \sum_{i=0}^d \spolynomial_i \cdot (\alpha^i \cdot \sgenerator_1)
        \end{equation*}
        \item $\PCProve(\texttt{pp}, \scommitment, \spolynomial, x):$ Compute the remainder and quotient
        \begin{equation*}
            q(X),r(X) \leftarrow \left( \spolynomial(X) - \spolynomial(x) \right) / \left(X - x \right).
        \end{equation*}
        Check that the remainder $r(X)$ and, if true, output $\pi = q(\alpha) \cdot \sgenerator_1$, computed as $\sum_{i=0}^d \left( q_i \cdot (\alpha^i \cdot \sgenerator_1) \right)$.
        \item $\PCCheck(\texttt{pp}, \scommitment, x, y, \pi)$: Accept if the following pairing equation holds:
        \begin{equation*}
            e(\pi, \alpha \cdot \sgenerator_2 - x \cdot \sgenerator_2) = e(\scommitment - y \cdot \sgenerator_1, \sgenerator_2)
        \end{equation*}
    \end{algos}

    \noindent The security properties of KZG commitments fundamentally rely on the hardness of the polynomial division problem.
    The parameter $\alpha$ acts as a trapdoor and must be discarded after \texttt{PC.Setup} to ensure the binding property.
    Hence, we require a trusted setup to generate the public parameters and securely discard $\alpha$,
    which can be computed using MPC or,
    depending on the deployment, computed by the auditor acting as a trusted dealer.
    The hiding property relies on the discrete logarithm assumption, so if $\alpha$ is not discarded this breaks the binding property but not the hiding property.
    We refer to~\cite{Kate2010-px} for a detailed security analysis.

\end{definition}

\begin{definition}[Homomorphic Commitment Scheme~\cite{Bunz2018-mg}]
\ldef{homomorphic_commitment}
A homomorphic commitment scheme is a non-interactive commitment scheme such that
\scomMessageSpace, \scomRandomnessSpace and \scomCommitmentSpace are all abelian groups and
for all $m_1, m_2 \in \scomMessageSpace$ and $r_1, r_2 \in \scomRandomnessSpace$, we have
\begin{equation*}
    \begin{split}
        &\sComCommit(\pp, m_1 + m_2, r_1 + r_2) = \\
        &\sComCommit(\pp, m_1, r_1) + \sComCommit(\pp, m_2, r_2).
    \end{split}
\end{equation*}
\end{definition}

\noindent KZG commitments are homomomorphic, i.e.,
if $\scommitment_1$ and $\scommitment_2$ are commitments to polynomials $\spolynomial_1$ and $\spolynomial_2$,
then $\scommitment_1 + \scommitment_2$ is a commitment to polynomial $\spolynomial_1 + \spolynomial_2$.

 \begin{definition}[Digital Signature Scheme]
     \ldef{signatures}
     A digital signature scheme consists of a tuple of polynomial-time algorithms $(\sSigSetup, \sSigSign, \sSigVerify)$ defined as follows:
     \begin{algos}
         \item $\sSigSetup(1^\lambda) \rightarrow (\pk, \sk)$: Given a security parameter $\lambda$, it outputs a public key $\pk$ and a secret key $\sk$.

         \item $\sSigSign(\sk, m) \rightarrow \sigma$: Given a secret key $\sk$ and a message $m$, it outputs a signature $\sigma$.

         \item $\sSigVerify(\pk, m, \sigma) \rightarrow \{0,1\}$: Given a public key $\pk$, a message $m$, and a signature $\sigma$, it outputs $1$ if the signature is valid, otherwise $0$.

         \item $\sSigDistSign(\ssIdeal{\sk}, m) \rightarrow \ssIdeal{\sigma}$: Given a secret key share $\ssIdeal{\sk}$ and a message $m$, it outputs a signature share $\ssIdeal{\sigma}$. 
     \end{algos}

     \begin{algos}
         \item \fakeparagraph{Correctness.}
         For all $m$, and for $(\pk, \sk) \gets \sSigSetup(1^\lambda)$, if $\sigma = \sSigSign(\sk, m)$, then $\sSigVerify(\pk, m, \sigma) = 1$.
         \item \fakeparagraph{Unforgeability.}
         For all polynomial-time adversaries $\mathcal{A}$, 
         \begin{equation*}
                 \Pr\left[
                 \sSigVerify(\pk, m, \sigma) = 1 \,\middle|\, \begin{split}
                 (\pk, \sk) &\leftarrow \sSigSetup(1^\lambda) \\
                 (\sigma, m) &\leftarrow \mathcal{A}(\pk) \end{split}
                 \right]
         \end{equation*}
         is negligible, where $\mathcal{A}$ has not received $\sigma$ from a prior invocation of $\sSigSign(\sk, m)$.
     \end{algos}

 \end{definition}

 \begin{definition}[Proof-of-Training]
     \ldef{zkp:training}
     A valid Proof-of-Training is an interaction between a Prover protocol \sprover and Verifier protocol \sverifier.
     A public learning algorithm $\sAlgorithm$ (including hyperparameters), takes a training dataset $\strainset$
     and training randomness \strainrand as input and outputs a model $\smodel \leftarrow \sAlgorithm(\strainset, \strainrand)$.
     A Proof-of-Training is defined as a set of algorithms (\sPotSetup, \sPotProve, \sPotVerify) where:
     \begin{algos}
         \item $\sPotSetup(1^\lambda) \rightarrow \pppot$: A setup algorithm that outputs the public parameters $\texttt{pp}$ .
         \item $\sPotProve(\pppot, \scommitment, \strainset, \strainrand, \smodel, \scomRandomnessData, \scomRandomnessTrainrand, \scomRandomnessModel) \rightarrow \pi$:
         The prover generates a proof $\pi$ that $\smodel$ is computed as $\smodel \leftarrow \sAlgorithm(\strainset, \strainrand)$, and \scommitment{} is a commitment to $\strainset, \strainrand, \smodel$ under respective randomnesses $\scomRandomnessData, \scomRandomnessTrainrand, \scomRandomnessModel$. 
         \item $\sPotVerify(\pppot, \scommitment, \pi) \rightarrow \{ 0, 1 \}$:
         The verifier accepts if the proof $\pi$ is valid with respect to the commitment $\scommitment$ to the data,
         the training randomness and the model.
     \end{algos}

    \begin{algos}
         \item \fakeparagraph{Completeness.}
         For a security parameter $\lambda$, for all $\strainset, \strainrand, \smodel, \scomRandomnessData, \scomRandomnessTrainrand, \scomRandomnessModel$, $\pppot \gets \sPotSetup(1^\lambda)$, $\scommitment \gets \sPoCCommit(\pppot, \strainset, \strainrand, \smodel, \scomRandomnessData, \scomRandomnessTrainrand, \scomRandomnessModel)$, if $\pi \gets \sPotProve(\pppot, \scommitment, \strainset, \strainrand, \smodel, \scomRandomnessData, \scomRandomnessTrainrand, \scomRandomnessModel)$, then $\sPotVerify(\pppot, \scommitment, \pi) = 1$.
         \item \fakeparagraph{Soundness.} 
         The probability that any polynomial-time $\mathcal{A}$ outputs an accepting proof $\pi$ and either $\smodel$ was not generated by $\sAlgorithm(\strainset, \strainrand)$ or $c$ is not a valid commitment is negligible.
         \item \fakeparagraph{Zero-Knowledge.}
         For every verifier \sverifier there exists a \PPT simulator $\ssim$, which, when interacting with \sverifier and given inputs $\pppot$, its corresponding simulation trapdoor, $\pi$, and $c$, produces a computationally indistinguishable view from an interaction with \sprover. 
    \end{algos}
The definitions above straightforwardly extend to handle multiple training sets $\strainset_i$ with some minor syntactic changes. 
 \end{definition}

  \begin{definition}[Proof-of-Inference]
     \ldef{zkp:inference}
     A valid Proof-of-Inference is an interaction between a Prover protocol \sprover and Verifier protocol \sverifier.
     A Proof-of-Inference is defined as a set of algorithms (\sPoiSetup, \sPoiProve, \sPoiVerify) where:
     \begin{algos}
         \item $\sPoiSetup(1^\lambda) \rightarrow \pppoi$: A setup algorithm that outputs the public parameters $\texttt{pp}$ .
         \item $\sproofInference \leftarrow \sPoiProve(\pppoi, \scommitmentModel, \scommitmentX, \scommitmentY; \smodel, \sX, \sY, \scomRandomnessModel, \scomRandomnessX, \scomRandomnessY)$: 
         The prover generates a proof $\pi$ that $\sY$ is computed as $\smodel(\sX)$, and $\scommitmentModel, \scommitmentX, \scommitmentY$ are commitments to $\smodel, \sX, \sY$ under respective randomnesses $\scomRandomnessModel, \scomRandomnessX, \scomRandomnessY$. 
         \item $\sPoiVerify(\pppoi, \scommitmentModel, \scommitmentX, \scommitmentY, \pi) \rightarrow \{ 0, 1 \}$:
         The verifier accepts if the proof $\pi$ is valid with respect to the commitments to the model, the inference sample and the inference result.
     \end{algos}

\noindent A Proof-of-Inference satisfies \textbf{Completeness}, \textbf{Soundness}, \textbf{Zero-Knowledge}, which are defined as for Proof-of-Training. 
 \end{definition}

\end{myhideenv}

\begin{myhideenv}

\section{Security Proof for Auditing Protocol}
\lsec{apx:proofs:composability}
In the following, we provide a proof that $\sprotocolcamel$ securely instantiates $\sidealfun$.
Our proof follows the real/ideal-world paradigm~\cite{Canetti2000-plainmodel}, which considers the following two worlds:
\begin{algos}
    \item \fakeparagraph{In the real world,} the parties run the $\sprotocolcamel$ protocol to establish
    an auditing framework for \gls{ppml}.
    The adversary \sadversary can statically, actively, corrupt a subset of parties before the start of the protocol.

    \item \fakeparagraph{In the ideal world,} the honest parties send their inputs to the ideal functionality $\sidealfun$,
    which executes the behavior of a secure auditing framework.
    The ideal world defines the ideal behavior of the functionality that the protocol aims to emulate.
\end{algos}
A real-world protocol is secure if it manages to instantiate an ideal functionality in the ideal world.
To show that a protocol is secure, we must show that the adversary cannot distinguish between the real and the ideal world with high probability.
We can do this by defining a non-uniform probabilistic polynomial-time simulator $\ssim$ that interacts with the adversary \sadversary and the ideal functionality $\sidealfun$ in the ideal world
in such a way that \sadversary's view is indistinguishable when interacting with the protocol in the real world.

We model generic MPC operations in our framework as $\sabb$ to ensure our framework composes with any MPC protocol
and assume a secure randomness source $\sidealrand$.
We also model our assumption of a \acrlong{pki} with $\sidealpki$ that, upon initialization, sends each party its signing key \sk\xspace and the list of verification keys $\pk_i$ for the other parties in the system.
According to the sequential composition theorem, if a protocol securely computes a functionality in the $\sideal_\textsf{G}$-hybrid model for some functionality $\sideal_\textsf{G}$,
then it remains secure when composed with a protocol that securely computes $\sideal_\textsf{G}$~\cite{Canetti2000-plainmodel}.
We use this model to abstract the dependencies of our framework.
Additionally, we assume secure point-to-point communication channels and rely on an ideal broadcast channel $\sidealbroadcast$ to ensure all parties in the auditing phase receive the same commitments to achieve identifiable abort.
Finally, we allow the simulator to equivocate commitments it sent to the adversary with \sidealcrs
which outputs the trapdoor.

We are now ready to prove the security of our overall protocol (see~\rfig{protocol:camel}).
We recall our threat model, in which at least one computing party in each (non-plaintext) phase and at least one of the input-sending or output-receiving parties must be honest. 
Without loss of generality, we assume a one-to-one mapping between roles and parties.
For malicious parties, we already consider appropriate collusion, for honest parties it is straightforward to see that they learn only the union of their roles information and receive no other capabilities.

\begin{theorem}[Collaborative Auditing]
    \label{thm:camel:adv:collaborative}
    Given a set of $\sNumParties$ \glspl{r:inputparty} $\sPartyClient$,
    a set of $\sNumModelHolders$ \glspl{r:modelowner} $\sPartyModelHolder$,
    a set of $\sNumAuditors$ \glspl{r:auditrequester} $\sPartyAuditor$,
    a set of $\sNumServers$ \glspl{r:tcomputer} $\sPartyServer$,
    a set of $\sNumInferenceComputers$ \glspl{r:icomputer} $\sPartyInferenceComputer$,
    a set of $\sNumAuditComputers$ \glspl{r:acomputer} $\sPartyAuditComputer$,
    an adversary \sadversary who controls a set $\sPartyCorrupted = \{ \sPartyAll_i : i \in \Cset \}$
    where at least one of the input-sending or output-receiving parties is honest, i.e., $(\sPartyClient \cup \sPartyModelHolder \cup \sPartyAuditor) \setminus \sPartyCorrupted \neq \varnothing$,
    and (unless in the plaintext training setting) one $\sPartyServer$, 
    and (unless in the plaintext inference setting) one $\sPartyInferenceComputer$, 
    and one $\sPartyAuditComputer$ is honest.
    there exists a \PPT simulator $\ssim$ in the $(\sabb,\sidealrand,\sidealpki,\sidealcrs)$-hybrid model such that the distributions:
    \begin{equation*}
        \begin{gathered}
            \left\{ \ideal_{\sprotocolcamel, \ssim(\advAux),\sPartyCorrupted \cup \{ \sPartyAll_{\sPartyAuditor} \} } (\strainsetInputs, \spredictionInputs, \sauditInputs, \lambda) \right\}_{\strainsetInputs, \spredictionInputs, \sauditInputs, \advAux, \lambda} \\
            \approx \\
            \left\{ \real_{\sidealfun, \sadversary(\advAux), \sPartyCorrupted \cup \{ \sPartyAll_{\sPartyAuditor} \} } (\strainsetInputs, \spredictionInputs, \sauditInputs, \lambda) \right\}_{\strainsetInputs, \spredictionInputs, \sauditInputs, \advAux, \lambda}
        \end{gathered}
    \end{equation*}

    are computationally indistinguishable, where $\strainsetInputs$ is a list of training datasets for each \gls{r:inputparty},
    \spredictionInputs is a list of prediction feature vectors, \sauditInputs is a list of prediction feature vectors to audit,
    and $\advAux \in \{0,1 \}^*$ is an auxiliary input by the adversary to capture malicious strategy.

\end{theorem}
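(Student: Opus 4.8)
The plan is to establish security in the real/ideal paradigm by exhibiting a PPT simulator $\ssim$ that runs $\sadversary$ as a black box in the $(\sabb,\sidealrand,\sidealpki,\sidealcrs)$-hybrid model and produces a view computationally indistinguishable from a real $\sprotocolcamel$ execution. Since every honest-party interaction in the protocol goes through an ideal functionality ($\sabb$, $\sabbid$, $\sidealrand$, $\sidealpki$, $\sidealbroadcast$) or through $\sPoCEval$/$\sPoCEvalID$, the simulator can internally emulate all of them; the substantive content is therefore that $\ssim$ can (i) extract the effective inputs of corrupted parties and forward them to $\sidealfun$, (ii) fake honest-party messages without knowing their real inputs, and (iii) reproduce the abort-and-identify behavior of the real protocol. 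We build $\ssim$ phase by phase, case-splitting on the corruption pattern allowed by the threat model. During initialization, $\ssim$ emulates $\sidealpki$ (generating all signing/verification key pairs and revealing only the corrupted parties' signing keys) and emulates $\sidealcrs$/$\sPoCSetup$ so that it holds the commitment trapdoor and can later open any commitment it sends on behalf of an honest party to an arbitrary value.

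In the \emph{training} phase, $\ssim$ reads the corrupted input parties' datasets from the values they submit to $\sabb$ and forwards them to $\sidealfun$; for honest input parties it posts trapdoor commitments and simulates the $\sPoCEval$ transcripts using the zero-knowledge property of $\sPoCLabel$. When some training computer is honest, $\ssim$ simulates $\sabbARand$, $\sabbATrain$, and $\sSigDistSign$, and once $\sadversary$ has revealed whether the model owners receive the model or abort, it equivocates $\scommitmentModel$ and $\scomtrainrand$ to the model $\smodel$ obtained from $\sidealfun$ (when a model owner is corrupted) or to dummies, relaying $\sidealdeliver$/$\sidealabort$ to $\sidealfun$. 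In the plaintext-training case (all $\sPartyServer$ corrupted), $\ssim$ instead invokes the knowledge extractor of $\sPot$ on the broadcast proof to recover the claimed $(\strainsets,\strainrand,\smodel)$ and runs exactly the consistency check that $\sidealfun$ performs. The \emph{inference} phase is simulated analogously; the only new case is that when $\sadversary$ controls all input parties, all training computers, and a model owner, $\ssim$ must recover the locally forged $(\smodel^\prime,\strainset_k^\prime,\strainrand^\prime)$ (from $\sabb$ inputs, or from the $\sPot$ extractor in the plaintext case) and hand it to $\sidealfun$ via \rstep{camel:ideal:inf:askadv}, letting $\sidealfun$ re-derive the model identifier; otherwise $\ssim$ extracts the corrupted audit requester's sample from $\sabb$, simulates $\sabbAInf$, $\sSigDistSign$ (or $\sPoI$), and equivocates $\scommitmentX,\scommitmentY$ as needed. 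Whenever an $\sabb$ call is aborted or a signature/proof check fails at an honest party, $\ssim$ forwards $\sidealabort$ to $\sidealfun$.

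The \emph{auditing} phase is the delicate part. After the (adversarial) audit requester broadcasts the receipt over $\sidealbroadcast$, $\ssim$ performs the same signature- and proof-verification checks an honest $\sPartyAuditComputer$ would, simulates the $\sPoCEvalID$ executions (checking the audit requester's $(\sX,y)$, the model owner's $\smodel$, and each input party's $\strainset_i$ against the receipt commitments) and the $\sabbAAuditID$ call, obtains the audit output $o$ from $\sidealfun$, and has $\sabbid$ deliver $o$ to the audit requester; aborts in $\sabbid$ are forwarded to $\sidealfun$. The core lemma is that the identity reported on abort coincides in the two worlds. We argue: (a) by unforgeability of $\sSigLabel$ and binding of the polynomial commitment, $\sadversary$ cannot assemble a receipt that passes the honest audit computers' checks but contains commitments $\scomtrainseti,\scommitmentModel,\scommitmentX,\scommitmentY$ not genuinely produced by the corresponding (possibly honest) parties, so an honest party is never incriminated; and (b) by soundness of $\sPoCEvalID$, if a corrupted party feeds $\sabbid$ an input inconsistent with the binding commitment carried in the receipt, the honest audit computers reject with overwhelming probability, and because the receipt chain (each input party's dataset commitment and signature, the training computers' signature or proof-of-training, then at inference the inference commitments and the inference computers' signature or proof-of-inference, all wrapped in the model owner's receipt signature) binds each commitment to exactly one party, the faulty party is uniquely determined and is precisely the one $\ssim$ names to $\sidealfun$ via $(\sidealabort,\sPartyAll)$. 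Computational indistinguishability then follows from a short hybrid argument: replace, one component at a time, honest parties' real commitments by trapdoor-equivocated ones (hiding of $\sPoCCommit$), real $\sPoCEval$/$\sPoCEvalID$ transcripts by simulated ones (zero-knowledge of $\sPoCLabel$), and in the plaintext settings real $\sPot$/$\sPoI$ proofs by simulated ones (their zero-knowledge); the $\sabb$, $\sabbid$, $\sidealrand$, $\sidealpki$, and $\sidealbroadcast$ parts are reproduced perfectly since $\ssim$ is playing them.

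The main obstacle is establishing the uniqueness-of-attribution claim (b) together with the no-incrimination claim (a): one has to follow the entire receipt chain and show that each link binds a single party's contribution, so that a failed $\sPoCEvalID$ at that link is attributable to exactly one party, while simultaneously ruling out that $\sadversary$ can splice together commitments and signatures so as to shift blame onto an honest party. A secondary technical point is that the extractors for $\sPot$/$\sPoI$ invoked in the plaintext settings must be straight-line (or any rewinding must be shown admissible within this composition) for $\ssim$ to remain PPT, which is why those proofs are taken to be knowledge-sound non-interactive arguments rather than merely sound interactive ones.
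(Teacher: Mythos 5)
Your proposal matches the paper's proof essentially point for point: the paper likewise constructs the simulator through a sequence of hybrids that (i) equivocate honest parties' commitments using the $\sidealcrs$ trapdoor together with hiding and zero-knowledge of the underlying $\sPoCLabel$, (ii) extract corrupted parties' effective inputs from $\sabb$ (or via the $\sPot$/$\sPoI$ knowledge extractors in the plaintext settings) while case-splitting on the corruption pattern—most notably the four honest/corrupt client-versus-model-owner cases at inference and the special case in which $\sadversary$ controls all $\sPartyClient$, all $\sPartyServer$, and a $\sPartyModelHolder$—and (iii) carry the identifiable-abort argument through the audit phase using signature unforgeability, commitment binding, and soundness of $\sPoCEvalID$ to argue that the party named on abort coincides in the two worlds. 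What you flag as the separate "core lemma"—that the receipt chain uniquely attributes each commitment to a single party and that no honest party can be incriminated—is exactly the reasoning the paper folds into its audit-phase hybrids (local-checks hybrid and the $\sPoCEvalID$ hybrid) and summarizes informally in the protocol-overview discussion of auditing, so framing it explicitly is a matter of exposition rather than a different route.
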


\begin{proof}
    We will define a simulator $\ssim$ through a series of subsequent modifications to the real
    execution, so that the views of $\sadversary$ in any two subsequent executions are computationally indistinguishable.
    Without loss of generality, we assume that if the simulator receives inconsistent values from some of the parties that should be the same according to the real protocol for training and inference, the simulator aborts.
    Similarly, should any signature verification fail, we assume the simulator aborts.
    During training and inference, the simulator simply forwards any aborts to the ideal functionality, during auditing, more care must be taken to achieve identifiable abort.
    We highlight arguments that are only relevant for the plaintext training setting in \cPlainTrain{olive}, and for the plaintext inference setting in \cPlainInf{blue}.

    \begin{hybrid}
        \item The view of $\sadversary$ in this hybrid is distributed exactly as the view of $\sadversary$ in $\real$.
        \item In this hybrid, the real execution is emulated by a simulator that knows the real inputs of the honest parties $\strainset_i$ for $i \notin \Cset$
        and runs a full execution of the protocol with $\sadversary$, which includes emulating the ideal interactions for training and inference and the auditing interactions.
        The view of the adversary in this hybrid is the same as the previous one.

        \item In this hybrid, we replace the commitments to the inputs in \rstep{train:inputparty} (training input phase) with dummy data.
        \cPlainTrain{unless we are in the case of plaintext training, where \ssim can simply forward the honest inputs sent by the ideal functionality to the \sadversary.}
        The simulator generates to all-zero training sets and associated commitments and decommitments for any honest $\sPartyClient_i$ and uses these as inputs for $\sabb$ and \sPoCEval.
        The view of the adversary in this hybrid is the same as the previous one because of the hiding property of the commitments and the zero-knowledge property of \sPoCEval.

        \item \label{proof:train:input} In this hybrid, the simulator replaces the \rstep{train:tcomputer} (model computation) with the result from the ideal functionality. 
        In the non-plaintext setting, \ssim can extract the training set $\strainset_i$ for any potential malicious $\sPartyClient_i$ via $\sabb$ and input these to \sidealfun to receive $\sidealmodelid$.    
        If \sadversary controls a \sPartyModelHolder, \ssim will also receive  $\smodel$ and can forward it to \sadversary.
        If the adversary is not involved in training at all, \ssim still receives \sidealmodelid from the ideal functionality.
        \ssim knows the commitment randomness from $\sabb$ \cPlainTrain{(or \sidealrand)} and can use this to either commit to the training randomness and either the actual model (if known) or an all-zero dummy model.           
        \ssim stores the tuple $(\sidealmodelid, \scommitment)$ internally,
        communicates \scommitment to the adversary (corresponding to the out-of-band communication in the real-world),
        and then emulates the last three steps of \rstep{train:tcomputer} with the above commitments and data.
        
        \cPlainTrain{In the plaintext setting,
        if there is no honest \gls{r:tcomputer}, \ssim needs to emulate the protocol and extract the model and (malicious) training data from \sproofTrain, which it can then use as inputs for the ideal functionality.
        Alternatively, if there is an honest \gls{r:tcomputer}, \ssim receives the malicious $(\strainset_i, \scomRandomnessDataI)$ directly from \sadversary and can input these to \sidealfun to receive $\sidealmodelid$.
        Note that, for any potential dishonest \glspl{r:tcomputer}, \ssim will receive the honest inputs from the ideal functionality, which it is free to forward in this setting.
        If there are any malicious \glspl{r:modelowner} or \glspl{r:tcomputer}, \ssim will also receive \smodel, generate a commitment to the model \scommitment and store the tuple $(\sidealmodelid, \scommitment)$ internally. 
        It communicates \scommitment to the adversary (corresponding to the out-of-band communication in the real-world).
        It can then emulate the remaining steps of \rstep{train:tcomputer}, unless the only malicious parties are the input parties (which are not involved in the remainder of \rstep{train:tcomputer}).}

        The replacement of \rstep{train:ip:sign} and \rstep{train:modelowner} follow trivially from the security of the underlying signature schemes and commitments.

        \item \lstep{proof:camel:inference} In this hybrid, we adapt Step~\ref{step:camel:inf:input:client} 
        (inference input) between the simulator and the adversary.
        As \sidealfun only allows audits of predictions that were actually made in the inference phase, \ssim has to ensure that the internal state of \sidealfun matches with that of the real protocol.
        However, as there might not always be an honest party present in each stage of the protocol, the adversary can, in some scenarios, generate valid predictions locally.
        We consider the four different scenarios that \ssim must handle:
        \begin{algos}
            \item \fakeparagraph{Honest \gls{r:auditrequester} and \gls{r:modelowner}:} 
            In this case, the (honest) request must be for a previously trained model, and we need to consider the case where at least one \gls{r:icomputer} is malicious. 
             In the non-plaintext setting, \ssim can either input and commit to an all-zero model or re-use a potential existing commitment to an all-zero model (if the adversary was involved in the training for this model id). 
            \cPlainInf{In the plaintext inference setting, the simulator always learns the actual model. Note that the adversary might already hold a commitment that should correspond to this model,  but instead is a commitment to an all-zero model.
            Using the trapdoor from $\sidealcrs$, \ssim can use equivocation to arrive at a decomitmment randomness that matches the commitment with the actual model.}          
            \item \fakeparagraph{Corrupt \gls{r:auditrequester} and honest \gls{r:modelowner}:}
            Either,  \ssim $\sX$ and \scommitmentModel from $\sadversary$, or (if $\sadversary$ controls all \gls{r:icomputer} and skips~\rstep{camel:inf:input:client}), \ssim receives a request for $\scommitmentModel$.
            The simulator finds the model information $(\sidealmodelid, \scommitment)$  from its internal storage corresponding to \scommitmentModel and aborts if it cannot find it. Note that, in the real protocol, an honest model holder also aborts if it does not have a corresponding model, and \ssim internal storage captures the view of an honest model holder.
            \ssim then forwards $(\sPartyModelHolder_k, \sidealmodelid, \sX)$ to \sidealfun to receive $y$ which it uses to simulate \sabb.
            \cPlainInf{In the plaintext setting: \sidealfun sends the model to \ssim and uses this for the \glspl{r:icomputer}.}
            The view is indistinguishable from the previous hybrid, because the simulator uses the prediction from the real model $\sY$ from $\sidealfun$, and because the \sidealfun will only abort if \sPartyAuditor sent a \scommitmentModel that does not exist which corresponds to the behavior in the real protocol.
            \item \fakeparagraph{Honest \gls{r:auditrequester} and corrupt \gls{r:modelowner}:}
            The simulator receives $\smodel, \scommitment$ from \sadversary.
            \ssim verifies that all training signatures $\ssigTrain$ and the \gls{r:tcomputer} signature \ssigTrainingComputer are valid signatures with respect to the commitments and otherwise aborts \sidealfun.
            \ssim also receives $\sidealmodelid$ from \sidealfun and finds the model information $(\sidealmodelid, \scommitment)$ from its internal storage corresponding to \scommitmentModel.
            The validity of \ssigTrainingComputer guarantees $(\sidealmodelid, \scommitment)$ exists in internal storage, because there is always at least one honest \gls{r:tcomputer}.
            \cPlainTrain{In the plaintext training setting, the \sadversary could have sent a different model \smodel that does not correspond to \sidealmodelid, because it has locally generated additional training runs.
            Therefore, \ssim now has to make sure that \sidealfun has the necessary internal state to send the right prediction to the honest client.
            \sidealfun will ask in~\rstep{camel:ideal:inf:askadv} for a model and training data from \ssim.
            \ssim responds with the training datasets and the model using the extractor guaranteed by knowledge-soundness of the proof of training \sproofTrain.
            }
            \cPlainInf{The adjustments for plaintext inference are the same as in previous cases.}

            \item \fakeparagraph{Corrupt \gls{r:auditrequester} and corrupt \gls{r:modelowner}:} In the non-plaintext setting, the simulator receives both the model and the sample from $\sadversary$ through $\sabb$ which is sufficient to emulate the protocol. \cPlainInf{Note that, in the case of plaintext inference, the \sadversary can generate predictions locally using \gls{r:icomputer}, which does not influence this Hybrid, but will become relevant later.}
            
        \end{algos}

        \item In this hybrid, we adapt Step~\ref{step:camel:inf:icomputer}, \ref{step:camel:inf:modelownerverify} and \ref{step:camel:inf:clientverify} (inference computation) between the simulator and the adversary.
        In the non-plaintext setting, as \ssim can simulate the operations through \sabb, we only discuss what happens when the \glspl{r:icomputer} inputs or opens values.
        If the \gls{r:auditrequester} is malicious, the simulator uses the $\sY$ received from \sidealfun to generate and open the commitments \scommitmentX and \scommitmentY.
        Otherwise, \ssim uses dummy inputs for \sX and \sY to generate commitments.
        The view in this scenario is indistinguishable from the view in the previous hybrid, because of the hiding property of the commitments.
        Note that \ssim can access the commitment randomness $\scomRandomnessX, \scomRandomnessY$ through simulating \sabbARand.
        If \sadversary controls the \gls{r:modelowner}, \ssim receives $\sidealmodelid$ from \sidealfun and sends $(\ssigReceipt, \scommitmentX^\prime, \scommitmentY^\prime)$ to \sadversary,
        where $\scommitmentX^\prime, \scommitmentY^\prime$ are two randomly sampled group elements.
        \cPlainInf{In the case of plaintext inference, $\smodel, \sX$ and $y$ are leaked to the \sadversary if it controls any \gls{r:icomputer} and, as a result, the simulator receives $\smodel, \sX$ and $y$ from $\sidealfun$, and can follow the protocol honestly on the inputs from the real protocol. If the adversary only controls the client or the model holder, the proof proceeds nearly identically to the non-plaintext case.}
        The replacement of \ref{step:camel:inf:modelownerverify} and \ref{step:camel:inf:clientverify} follow trivially from the security of the underlying signature schemes and commitments.

        \item \label{hybrid:camel:audit:verifyauditor} In this hybrid, we adapt the local checks of the audit verification phase in Step~\ref{step:camel:audit:clientinput} and~\ref{step:camel:audit:localchecks} to ensure \sidealfun receives the proper abort signal.
        A malicious \gls{r:auditrequester} sends $(\scommitment, \scommitmentX, \scommitmentY, \ssigReceipt, \ssigTrain, \ssigTrainingComputer \cPlainTrain{\text{ or }\sproofTrain}, \ssigInferenceComputer \cPlainInf{\text{ or } \sproofInference}, \pkMHK, \sauditfunction, \aux)$ to \ssim simulating \sidealbroadcast.
        The simulator checks that \pkMH is a valid identity from $\sidealpki$, that the signatures $\ssigTrain, \ssigReceipt$ are valid, that \ssigTrainingComputer \cPlainTrain{ or \sproofTrain} is valid, that \ssigInferenceComputer \cPlainInf{ or \sproofInference} is valid, and that $\sauditfunction \in \sauditfunctionSpace$.
        If the checks do not pass, $\ssim$ aborts $\sidealfun$ with $(\sidealabort, \sPartyAuditor)$.
        Otherwise, if a party $\sPartyAll$ controlled by $\sadversary$ aborts,
        \ssim forwards $(\sidealabort, \sPartyAll)$ to \sidealfun.
        This strategy for \ssim is valid because the broadcast channel allows the simulator to verify whether the corrupted \glspl{r:acomputer} have aborted rightfully or not.
        If the corrupted \gls{r:acomputer} aborts while the simulator's checks pass, the party must be malicious.
        Hence, the view is indistinguishable from the previous hybrid.

        \item In this hybrid, we adapt the \gls{zkpoc} of the \gls{r:auditrequester} (\rstep{camel:audit:verifyauditor}), \gls{r:modelowner} (\rstep{camel:audit:verifyinference}) and \glspl{r:inputparty} (\rstep{camel:audit:verifytraining}).
        Each corrupted party $\sPartyAll$ sends its inputs to $\sabbid$ and broadcasts a proof of consistency to \ssim through \sidealbroadcast as part of \sPoCEvalID.
        \ssim checks that the proof is valid with respect to the commitments \scommitment and, if not, aborts $\sidealfun$ with $(\sidealabort, \sPartyAll)$.
        The view is indistinguishable from the previous hybrid, because \sidealfun only sends identifiable aborts to \gls{r:acomputer}. There, this behavior is consistent with the real protocol.

        \item In this hybrid, we adapt the simulator to use dummy inputs in Steps~\ref{step:camel:audit:verifyauditor}, \ref{step:camel:audit:verifyinference} and~\ref{step:camel:audit:verifytraining}.
        \ssim uses all zeroes for all honest audit inputs as input to \sPoCEvalID~ in all steps.
        The view is indistinguishable from the previous hybrid, due to the zero-knowledge property of \sPoCLabel.
        Note that the simulator still uses the output provided by the real parties in~\rstep{camel:audit:function}, which we address in Hybrid~\ref{proof:audit:output}.

        \item \label{proof:audit:output} In this hybrid, we replace the output of the auditing function computation in~\rstep{camel:audit:function} with the output of \sidealfun.
        We begin discussing the setting where neither training nor inference were in the plaintext setting.
        In case \gls{r:auditrequester} is honest, the \sidealfun responds with the correct result and \sadversary receives no output.
        In case the \gls{r:auditrequester} is dishonest, \ssim has to ensure \sadversary receives the correct output.
        The simulator finds the model information $(\sidealmodelid, \scommitment)$ from its internal storage corresponding to \scommitmentModel.
        \ssim forwards $(\sidealaudittrigger, \sPartyModelHolder_j, \sidealmodelid, \sauditfunction, \spredictionX, \spredictionY, \aux)$ to \sidealfun to receive the real audit function output \sauditfunctionoutput.
        We now discuss why \sidealmodelid exists in \ssim's internal storage.
        Due to the validity of \ssigTrainingComputer, \ssim is guaranteed to find \sidealmodelid, because at least one of the \glspl{r:icomputer} is honest.
        In addition, \sidealfun will not abort because it has internally stored the $(\sPartyModelHolder_k, \sidealmodelid, \sX, \sY)$ because of the validity of \ssigInferenceComputer, which implies that the simulator has been involved in the prediction through an honest \gls{r:icomputer}.
        However, in the plaintext setting, two additional cases may occur:
        \begin{itemize}[wide=0pt]
            \item \fakeparagraph{In the case of \cPlainTrain{plaintext training} and \cPlainInf{plaintext inference}:}
            \ssim may not find a commitment while all proofs and signatures are valid, because \sadversary may have locally computed an extra training and inference.
            In this case, \ssim can extract all the auditing inputs from the proof of training \sproofTrain and the proof of inference \sproofInference.
            If either \sproofTrain or \sproofInference is invalid, the simulator can abort the malicious \gls{r:auditrequester} and \sidealfun, because
            \sidealfun only allows predictions from models that resulted from training on the \glspl{r:inputparty}' datasets.
            If both proofs are valid and \ssim can successfully extract the auditing inputs from the proofs, \ssim can then use these values to compute \sauditfunctionoutput locally (without \sidealfun), as no honest parties need to send inputs or receive output.
            In the case of an abort by the \glspl{r:acomputer}, \ssim simulates an abort at \sidealfun by sending an audit request with $\sauditfunction \notin \sauditfunctionSpace$.
            \item \fakeparagraph{In the case of \cPlainInf{plaintext inference}:}
            \sidealfun may not have the requested prediction stored in its internal state.
            In this case, \ssim can still forward the audit request from the malicious \gls{r:auditrequester} because
            the functionality does not require to have issued an inference in the case that the \sadversary controls the \gls{r:modelowner} and all \gls{r:icomputer} (\cfref{\rstep{camel:audit:exception_plaintextinf}}).
            If \sproofInference is invalid, the simulator can abort the malicious \gls{r:auditrequester} and \sidealfun, because
            \sidealfun only allows valid predictions from models.
        \end{itemize}
        The view is indistinguishable from the previous hybrid, because it uses the honest output from the ideal functionality.

        \item This hybrid is defined as the previous one, with the only difference being that the simulator now does not receive the inputs of the honest parties.
        Because the simulator no longer relied on receiving inputs from the honest parties, the view of the adversary is perfectly indistinguishable from the previous hybrid.

    \end{hybrid}
\end{proof}

\end{myhideenv}

\section{Consistency Check}
\lsec{apx:proofs:consistency}

In the following, we proof that our approach ($\sprotocolCC$) fulfills the properties for a \acrlong{poc} (\rdef{poc}), which are defined below.
We then provide formal definitions for the strawman constructions discussed in~\rsec{design:consistency:approaches}. 
In addition, we briefly discuss how to efficiently realize $\sabbec$ for Pedersen Vector Commitments.

A valid Proof-of-Consistency as in~\rdef{poc} satisfies the following properties:
    \begin{algos}
        \item \fakeparagraph{Correctness:}
        If \ssIdeal{\sInputs} is a valid input of \sInputs to $\sabb$ and \scommitment is a valid commitment to \sInputs computed as
        $\sPoCCommit(\pppoc, \sInputs, \sRandomness)$ for all public parameters $\pppoc \leftarrow \sPoCSetup(\smash{1^\lambda}, \sNumSamples)$ and randomness $\sRandomness \smash{\sample} \scomRandomnessSpace$,
        then $\sPoCEval(\pppoc, \scommitment, \ssIdeal{\sInputs}; \sInputs, \sRandomness) = 1$ with overwhelming probability.
        \item \fakeparagraph{Soundness:}
        If there exists no \sInputs such that $\sabb$ holds $\ssIdeal{\sInputs}$ and $\scommitment = \sPoCCommit(\pppoc, \sInputs, \sRandomness)$ for all public parameters \pppoc and for randomness $\sRandomness \smash{\sample} \scomRandomnessSpace$,
        then for all $\lambda \in \mathbb{N}$, and for all polynomial-time adversaries \sadversary on input $(\advAux, \ssIdeal{\sInputs}, \sRandomness, \scommitment))$, the probability that $\sPoCEval(\pppoc, \scommitment, \ssIdeal{\sInputs}; \sInputs, \sRandomness) = 1$ is negligible in $\lambda$.
        \item \fakeparagraph{Zero-knowledge:}
        For every probabilistic polynomial-time interactive machine $\sverifier^\prime$ that plays the role of the verifiers,
        there exists a probabilistic polynomial-time algorithm \ssim such that for any $\ssIdeal{\sInputs}$, randomness \sRandomness and $\scommitment = \sPoCCommit(\pppoc, \sInputs, \sRandomness)$
        the transcript of the protocol between $\sverifier^\prime$ and \sprover and the output of \ssim on input (\ssIdeal{\sInputs}, \sRandomness, \scommitment) are computationally indistinguishable.
    \end{algos}

\noindent
We now prove that our protocol for input consistency satisfies these properties.
\begin{lemma}
    \llem{cc}
    $\sprotocolCC$ is a \acrlong{poc} (\rdef{poc}).
\end{lemma}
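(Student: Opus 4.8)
We must establish the three properties listed just before the statement: correctness, soundness, and zero-knowledge for $\sprotocolCC$ with respect to~\rdef{poc}. \emph{Correctness} is a direct unrolling of an honest run: when $\ssIdeal{\sInputs}$ is a valid sharing of $\sInputs$ in $\sabb$ and $\scommitment = \PCCommit(\pp, \spolynomial, \sRandomness)$ with $\spolynomial(z) = \sum_{i=1}^{\sNumSamples} x_i z^i$, the value opened in~\rstep{cc:rho} is $\rho = \sProtocolRandomness + \sum_{i=1}^{\sNumSamples} x_i \beta^i = (\spolynomial + \spolynomial_\sProtocolRandomness)(\beta)$ since $\spolynomial_\sProtocolRandomness$ is the constant polynomial $\sProtocolRandomness$; by the homomorphic property of $\PCScheme$ (\rdef{homomorphic_commitment}) the combined commitment $\scommitment + \scommitment_\sProtocolRandomness$ opens to $\spolynomial + \spolynomial_\sProtocolRandomness$ under randomness $\sRandomness + \sRandomness_\sProtocolRandomness$, so the proof produced in~\rstep{cc:prover} is accepted in~\rstep{cc:verifiers} by completeness of $\PCScheme$, and $\sCCEval$ outputs $1$ (with overwhelming probability, or with probability one if $\PCScheme$ is perfectly complete).

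For \emph{soundness}, which I expect to be the main obstacle, the plan is a reduction to the (evaluation-)binding of $\PCScheme$ together with the DeMillo--Lipton--Schwartz--Zippel lemma~\cite{Demillo1978-SZDL}. Assume an adversarial prover makes $\sCCEval$ accept with non-negligible probability while the soundness hypothesis holds, i.e., the value $\sInputs^{\ast}$ held by $\sabb$ — which is well-defined and fixes $\spolynomial^{\ast}(z) = \sum_i x_i^{\ast} z^i$ — is such that $\scommitment$ is not a commitment to $\spolynomial^{\ast}$. In $\sCCEval$ the prover fixes $\scommitment_\sProtocolRandomness$ and its $\sabb$-input $\sProtocolRandomness^{\ast}$ before the uniform challenge $\beta$ is drawn by $\sidealrand$, and by correctness of $\sabb$ the opened value is $\rho = \sProtocolRandomness^{\ast} + \spolynomial^{\ast}(\beta)$. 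An accepting transcript means the prover output a valid evaluation proof for $\scommitment + \scommitment_\sProtocolRandomness$ at $\beta$ with value $\rho$; by (knowledge-)soundness of $\PCScheme$ this commitment is then bound to a fixed polynomial $\spolynomial^{\dagger}$ of degree at most $\sNumSamples$ with $\spolynomial^{\dagger}(\beta) = \rho$, since otherwise we obtain two valid openings at $\beta$ and break evaluation binding. Because $\spolynomial^{\dagger}(z) - \sProtocolRandomness^{\ast} - \spolynomial^{\ast}(z)$ has degree at most $\sNumSamples$ and is fixed before $\beta$ is sampled, it is either identically zero or vanishes at $\beta$ with probability at most $\sNumSamples/|\sfield_p|$; hence, except with negligible probability, $\spolynomial^{\dagger} = \spolynomial^{\ast} + \sProtocolRandomness^{\ast}$, and the homomorphism then forces $\scommitment$ to open to $\spolynomial^{\ast}$, contradicting the hypothesis. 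The delicate point is ensuring the freshly sent masking commitment $\scommitment_\sProtocolRandomness$ cannot be used to "compensate" for an inconsistent $\scommitment$; I would argue at the level of the combined commitment $\scommitment + \scommitment_\sProtocolRandomness$, using that in the soundness experiment the prover is handed only $\scommitment$, so a preimage of $\scommitment + \scommitment_\sProtocolRandomness$ (which an accepting proof forces it to know) would, via the homomorphism, yield a preimage of $\scommitment$ consistent with $\spolynomial^{\ast}$.

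For \emph{zero-knowledge}, I would build the simulator $\ssim$ that, given $(\ssIdeal{\sInputs}, \sRandomness, \scommitment)$ and black-box access to a possibly malicious verifier, outputs an indistinguishable transcript without knowing $\sInputs$. The key observation is that the prover's only messages are $\scommitment_\sProtocolRandomness$, the $\sabb$-transcript opening $\rho$, and the evaluation proof $\pi$. Since $\sProtocolRandomness$ is uniform and independent, $\rho = \sProtocolRandomness + \spolynomial^{\ast}(\beta)$ is uniform in $\sfield_p$ in the real run; so $\ssim$ sets $\scommitment_\sProtocolRandomness$ to a commitment to $0$, samples $\rho \sample \sfield_p$, runs the $\sabb$-simulator to produce a transcript opening this $\rho$, and invokes the hiding/zero-knowledge simulator of $\PCScheme$ (\rdef{pc}) to produce $\pi$ for the statement that $\scommitment + \scommitment_\sProtocolRandomness$ evaluates to $\rho$ at $\beta$. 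A standard sequence of hybrids — swapping in turn the masking commitment (hiding of $\PCScheme$), the value $\rho$ (identically distributed in both worlds), the $\sabb$-transcript ($\sabb$ security), and $\pi$ (zero-knowledge of $\PCScheme$) — shows the simulated transcript is computationally indistinguishable from the real one, completing the proof of~\rlem{cc}.
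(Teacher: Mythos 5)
Your proof follows the paper's structure point-for-point: correctness from $\sabb$-correctness plus $\PCScheme$-completeness and the homomorphism; soundness from binding of $\PCScheme$ plus the Schwartz--Zippel lemma at the random challenge $\beta$; zero-knowledge via the uniform mask $\sProtocolRandomness$ and the hiding of $\PCScheme$. The only substantive divergence is the zero-knowledge simulator: you keep the given $\scommitment$, sample $\rho$ uniformly, commit $\scommitment_\sProtocolRandomness$ to $0$, and invoke an evaluation-proof simulator for $\PCScheme$ to produce $\pi$, whereas the paper's simulator instead draws $\sNumSamples$ fresh random coefficients, re-commits to that random polynomial (producing a \emph{new} $\scommitment$), and generates $\pi$ honestly for it, relying on hiding to argue indistinguishability of the resulting commitment. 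Yours is arguably the more direct construction for a definition that hands $\scommitment$ to $\ssim$ as input; the paper's sidesteps the need to assume an explicit opening-proof simulator.

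On soundness, the ``delicate point'' you flag is real and your sketch does not quite close it: knowing an opening of $\scommitment + \scommitment_\sProtocolRandomness$ to $\hat{\spolynomial}$ together with the opening $q$ of $\scommitment_\sProtocolRandomness$ that the prover itself chose only yields, by homomorphism, an opening of $\scommitment$ to $\hat{\spolynomial} - q$, which equals $\spolynomial^{\ast}$ only when $q$ is the degree-$0$ polynomial $\sProtocolRandomness$. Nothing in $\sprotocolCC$ forces a malicious prover to commit to a constant in $\scommitment_\sProtocolRandomness$ (the same $\pp$ supports degree up to $\sNumSamples$), so a prover who holds an opening of $\scommitment$ to some polynomial $\spolynomial' \neq \spolynomial^{\ast}$ and who knows the value stored in $\sabb$ can set $q = \sProtocolRandomness + \spolynomial^{\ast} - \spolynomial'$, making the combined commitment open to $\hat{\spolynomial}$ so the evaluation proof passes even though $\scommitment$ does not commit to $\spolynomial^{\ast}$. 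Be aware, however, that the paper's own proof has exactly the same blind spot: it argues only that the combined commitment must open to $\hat{\spolynomial}$ and then concludes, without explaining why this constrains $\scommitment$ alone. So you have correctly pinpointed a weakness that the published argument shares rather than resolved it.
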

\begin{proof}
    The protocol $\sprotocolCC$ in Protocol~\ref{protocol:cc} satisfies the properties of a \gls{poc}:
    \begin{algos}
        \item \fakeparagraph{Completeness:}
        From the correctness of the MPC protocol, it holds that $\rho = \sProtocolRandomness + \sum_{i=1}^{\sNumSamples} \sinputSample_i \cdot \beta^i$.
        Further, the opening proof of the polynomial commitment $(\scommitment \cdot \scommitment_\sProtocolRandomness)$ also evaluates
        to $\rho$ at $\beta$ due to the homomorphic property of the scheme.
        The verifiers accept because of the completeness of the polynomial commitment scheme.
        \item \fakeparagraph{Soundness:}
        Let \sProtocolRandomness be a random value, $\hat{\spolynomial} = \spolynomial + \spolynomial_\sProtocolRandomness$ be the polynomial defined as in the protocol as
        $\hat{\spolynomial}(z) = \sProtocolRandomness + \sum_{i=1}^{\sNumSamples} x_i \cdot z^i$,
        let $\ssGeneric{\sProtocolRandomness}$ be a secret-sharing of $\sProtocolRandomness$ and
        let $\ssGeneric{\sInputs}$ be a secret-sharing of \sInputs.
        If the verifiers do not hold a valid secret-sharing $\ssGeneric{\sProtocolRandomness}$ or $\ssGeneric{\sInputs}$,
        then the MPC protocol in~\rstep{cc:rho} aborts.
        Otherwise, the correctness of the MPC protocol guarantees that a valid secret-sharing of $\ssGeneric{\sProtocolRandomness}$ and $\ssGeneric{\sInputs}$ implies that
        $\rho$ equals $\sProtocolRandomness + \sum_{i=1}^{\sNumSamples} \sinputSample_i \cdot \beta^i$ and that $\hat{\spolynomial}(\beta) = \rho$ or the protocol aborts.
        Let $\scommitment^\prime$ be a polynomial commitment such that $\scommitment^\prime \neq \PCCommit(\texttt{pp}, \hat{\spolynomial}, \sRandomness + r_\sProtocolRandomness)$.
        Then, from the polynomial binding property of the polynomial commitment scheme, either $\scommitment^\prime$ is a commitment to a different polynomial $\spolynomial^\prime$ or
        the verifiers reject the proof in~\rstep{cc:verifiers} with overwhelming probability.
        In the case that $\scommitment^\prime$ is a commitment to a different polynomial $\spolynomial^\prime$, the verifiers only accept $\PCCheck(\pp, \scommitment^\prime, \beta, \rho, \pi)$
        if $\spolynomial^\prime$ agrees with $\hat{\spolynomial}$ at point $\beta$ because of the evaluation binding property of the polynomial commitment.
        Because $\beta$ was sampled uniformly at random, from the Demillo-Lipton-Schwartz-Zippel Lemma~\cite{Demillo1978-SZDL}, it holds that:
        \begin{equation*}
            \Pr\left[ \PCCheck(\pp, \scommitment^\prime, \beta, \rho, \pi) = 1 \right] \leq \frac{\sNumSamples}{p}\enspace,
        \end{equation*}
        which can be made negligible by choosing a suitably large $p$. Thus, the verifiers reject with overwhelming probability.

        \item \fakeparagraph{Zero-knowledge:}
        The simulator $\ssim$ works as follows: It samples $\sNumSamples$ random coefficients that define the polynomial $\spolynomial$
        and one random coefficient to define the polynomial $\spolynomial_\sProtocolRandomness$.
        Then, it samples a random point $\beta \smash{\sample} \sfield_p$ and runs the MPC simulator to produce the transcript for the computation of $\rho$.
        Finally, it samples $\sRandomness, \sRandomness_\sProtocolRandomness \smash{\sample} \sfield_p$ and
        computes $\scommitment = \PCCommit(\pp, \spolynomial, \sRandomness)$, $\scommitment_\sProtocolRandomness = \PCCommit(\pp, \spolynomial_\sProtocolRandomness, \sRandomness_\sProtocolRandomness)$
        and $\pi = \PCProve(\pp, \scommitment \cdot \scommitment_\sProtocolRandomness, \spolynomial + \spolynomial_\sProtocolRandomness, \sRandomness + \sRandomness_\sProtocolRandomness, \beta, \rho)$ and outputs $(\scommitment, \scommitment_\sProtocolRandomness, \pi, \rho, \beta)$.
        The indistinguishability with the real execution follows from the fact that $\rho$ is uniformly distributed in $\sfield_p$
        because of \sProtocolRandomness,
        the properties of the MPC protocol and the hiding property of the polynomial commitment scheme.
    \end{algos}

\end{proof}

\begin{myhideenv}
\section{Alternative Approaches}
\label{sec:alternativeapproaches}
\subsection{Direct Commitment}
\begin{protocol}[Strawman Consistency Check]
\label{protocol:cc:hash}
Let $\sabb$ be an instance of an ideal MPC functionality over a field $\sfield_p$,
let $\sidealrand$ be an ideal functionality that returns a random element from $\sfield_p$ and
let $\sinputSamples = (\sinputSample_1, \ldots, \sinputSample_{\sNumSamples}) \in \sfield_p^{\sNumSamples}$ by the input of prover $\sprover$.
Let $\ssIdeal{\sinputSamples} = (\ssIdeal{\sinputSample_1}, \ldots, \ssIdeal{\sinputSample_{\sNumSamples}})$
be the input of the prover $\sprover$ to $\sabb$.
Let $(\sComSetup,\sComCommit,\sComVerify)$ be a commitment scheme as in~\rdef{commitments}.
In order to improve the readability, we slightly abuse notation by allowing vector inputs to commitments, with the understanding that, where the commitment scheme does not accept vectors, we concatenate the elements in their natural order and, where necessary, apply a hash to match the input domain of the commitments.
The protocol \textsf{CC1} works as follows:
\begin{algos}
    \item $\textsf{CC1.Setup}(1^\lambda, \sNumSamples) \rightarrow \pp$: Run $\pp \leftarrow \sComSetup()$ where \sNumSamples is unused.
    \item $\textsf{CC1.Commit}(\pp, \sInputs, \sRandomness) \rightarrow \scommitment$:
    The prover computes a commitment $\scommitment \leftarrow \sComCommit(\texttt{pp}, \sInputs_1, \sRandomness_1)$ and outputs \scommitment.
    The prover outputs $\scommitment$.
    \item $\textsf{CC1.Check}(\pp, \scommitment, \ssIdeal{\sInputs}; \sInputs, \sRandomness) \rightarrow \{ 0, 1\}$:
    The protocol proceeds as follows:
    \begin{enumerate}
        \item \sprover inputs the randomness \sRandomness to \sabb.
        \item The parties invoke $\sabb$ to compute $\sComVerify(\pp, \ssIdeal{\scommitment^\prime}, \ssIdeal{\sInputs}, \ssIdeal{\sRandomness})$ and output the result.
    \end{enumerate}
\end{algos}

\end{protocol}

\subsection{Homomorphic Commitments}
The following is a formalization of Cerebro's security check.
In this protocol, we make explicit that the input randomness $\mathbf{\sRandomness}$ and the output $\mathbf{\scommitment}$ of \sComCommit are lists of randomness and commitments, respectively, by highlighting them in bold.

\begin{protocol}[Cerebro's Consistency Check~\cite{Zheng2021-tb}]
    \label{protocol:cc:cerebro}
    Let $\sabb$ be an instance of an ideal MPC functionality over a field $\sfield_p$,
    let $\sidealrand$ be an ideal functionality that returns a random element from $\sfield_p$ and
    let $\sinputSamples = (\sinputSample_1, \ldots, \sinputSample_{\sNumSamples}) \in \sfield_p^{\sNumSamples}$ by the input of prover $\sprover$.
    Let $\ssIdeal{\sinputSamples} = (\ssIdeal{\sinputSample_1}, \ldots, \ssIdeal{\sinputSample_{\sNumSamples}})$
    be the input of the prover $\sprover$ to $\sabb$.
    Let $(\sComSetup,\sComCommit,\sComVerify)$ be a commitment scheme as in~\rdef{commitments} that is also homomorphic~(\rdef{homomorphic_commitment}).
    The protocol \textsf{CC2} works as follows:
    \begin{algos}
        \item $\textsf{CC2.Setup}(1^\lambda, \sNumSamples) \rightarrow \pp$: Run $\pp \leftarrow \sComSetup()$ where \sNumSamples is unused.
        \item $\textsf{CC2.Commit}(\pp, \sInputs, \mathbf{\sRandomness}) \rightarrow \mathbf{\scommitment}$:
        The prover computes a list of $d$ commitments $\mathbf{\scommitment} \leftarrow \{ \sComCommit(\texttt{pp}, \sinputSample_1, \sRandomness_1), \ldots, \sComCommit(\texttt{pp}, \sinputSample_d, \sRandomness_d) \}$.
        The prover outputs $\scommitment$.
        \item $\textsf{CC2.Check}(\pp, \mathbf{\scommitment}, \ssIdeal{\sInputs}; \sInputs, \mathbf{\sRandomness}) \rightarrow \{ 0, 1\}$:
        The protocol proceeds as follows:
        \begin{enumerate}
            \item \sprover inputs the randomness $\mathbf{\sRandomness}$ to \sabb.
            \item The parties invoke $\sidealrand$ to obtain a random challenge $\beta \sample \sfield_p$.
            \item The parties invoke $\sabb$ to compute \mbox{$\ssIdeal{\tilde{x}} \defeq \sum_{i=1}^{\sNumSamples} \ssIdeal{\sinputSample_i} \cdot \beta^{i-1}$}
            and \mbox{$\ssIdeal{\tilde{r}} \defeq \sum_{i=1}^{\sNumSamples} \ssIdeal{\sRandomness_i} \cdot \beta^{i-1}$}
            \item The parties invoke $\sabb$ to compute $\ssIdeal{\scommitment^\prime} \leftarrow \sComCommit(\pp, \ssIdeal{\tilde{x}}, \ssIdeal{\tilde{r}})$
            and open $\scommitment^\prime$.
            \item Each verifier computes \mbox{$\tilde{\scommitment} = \sum_{i=1}^d \beta^{i-1} \cdot \scommitment_i$} and checks that $\tilde{\scommitment} = \scommitment^\prime$.
            If verification passes, they output $1$, otherwise $0$.
        \end{enumerate}
    \end{algos}

\end{protocol}

\end{myhideenv}

\begin{myhideenv}

\section{Share Conversion}
\lsec{share_conversion}
Our protocols require the computation domain of the MPC protocol to be the scalar field of the elliptic curve.
However, efficiently implementing these cryptographic operations requires using specific instantiations of MPC schemes.
Specifically, the consistency check protocol needs to be evaluated in $\sfield_p$ (with EC extensions, where appropriate),
where $p$ is a large prime. 
However, the same field-based MPC protocols are suboptimal for ML computations, which benefit significantly from ring-based arithmetic available in many MPC settings~\cite{Dalskov2021-rc,Koti2021-Swift,Barak2020-quantizedinference,Keller2022-quantizedtraining}.
Even where ring arithmetic is not available, \gls{mpc} computations can be realized with significantly smaller fields for the same security level (e.g., 128-bit modulus vs 256-bit modulus).
A naive approach that uses the same computation domain for both the consistency check and the \gls{ml} computation would introduce significant overhead to the underlying \gls{ml}.
While existing works have observed this issue, they have so far failed to address it.
Fortunately, we can convert a secret value $x$ in one arithmetic domain $\sring_{\smod}$ to another arithmetic domain $\sring_{\smod^\prime}$.
We do this by
decomposing $\ssA{x}$ into $\ell$ bits and recomposing the bits in $\sring_{\smod^\prime}$ where $\ell \ll \smod,\smod^\prime$. %
\begin{align*}
    (\ssBit{x_{\ell-1}},\ldots,\ssBit{x_0}) &\leftarrow \sdecomp_{\sring_{\smod}}(\ssA{x}) \\
    \ssB{x} &\leftarrow \scomp_{\sring_{\smod^\prime}}(\ssBit{x_{\ell-1}},\ldots,\ssBit{x_0})
\end{align*}
\noindent
Converting between arithmetic domains and binary domains is a common technique in advanced \gls{mpc} implementations to facilitate more efficient computation of non-linear functionality~\cite{Mohassel2018-gy,Rotaru2019-dabits,Escudero2020-hh,Catrina2010-kl}.
We can use these techniques to instantiate \sdecomp and \scomp.
If $\sring_{\smod}$ is a field $\sfield$ such that $2^{\ell + \kappa} < |\sfield|$ where $x \in [0, 2^\ell]$ and $\kappa$ is the security parameter.
Let $r$ be a random value such that $r = \sum_{i=0}^{\ell + \kappa} r_i \cdot 2^i \in [0, 2^{\ell+\kappa}]$ and let \ssField{r} be the sharing of $r$ in $\sfield$
and let $(r_{m-1},\ldots,r_0) \in \sring_2^\ell$ be the sharing of bits of $r$.
$r$ is typically generated in an offline phase, for instance using \glspl{edabit}~\cite{Escudero2020-hh}.
The parties can evaluate $\sdecomp(x)$ by locally computing $\ssA{\epsilon} \leftarrow \ssA{x} - \ssA{r}$ and opening $\epsilon$.
Note that $\epsilon$ statistically hides $x$ because the statistical distance between the distributions of $\epsilon$ and $r$ is negligible in $\kappa$.
The shares of the bits $(\ssBit{x_{\ell-1}},\ldots,\ssBit{x_0})$ can be computed by adding $(\ssBit{r_{\ell-1}},\ldots,\ssBit{r_0})$ to $(\epsilon_{\ell-1},\ldots,\epsilon_{0})$ using a binary adder.
The parties can compose the bits in $\sring_{\smod^\prime}$ using a second random value $r^\prime$ secret-shared in $\sring_{\smod^\prime}$ as \ssB{r^\prime} with secret-shared bits $(\ssBit{r^\prime_{\ell-1}},\ldots,\ssBit{r^\prime_0})$.
Parties compute the masked bits $\ssBit{\epsilon^\prime_i}$ by adding $\ssBit{\epsilon_i} + \ssBit{r^\prime_i}$ using a binary adder.
They then open the bits to construct $\epsilon^\prime = \sum_{i=0}^{\ell-1} \epsilon^\prime_i \cdot 2^i$ and get $\ssB{x}$ by locally computing $\epsilon^\prime - \ssB{r^\prime}$.

This process performs a logical conversion for a secret $x \in [0, 2^\ell]$,
    but we must take extra care when performing an arithmetic conversion that also supports negative numbers, i.e., $x \in [-2^{\ell-1}, 2^\ell]$.
These numbers may take more than $\ell$ bits to represent in binary, which requires additional steps to convert between the two domains.
Fortunately, we can solve this by shifting the number up by $2^{\ell-1}$ and shifting it back after conversion, because this ensures that the number is represented in $\ell$ bits.
The main overhead of the protocol comes from computing the binary addition circuit, which requires at most $(\ell + \log n) \cdot (n-1)$ AND gates for $n$ parties,
in addition to the cost of generating the \glspl{edabit}, which can be computed in a preprocessing phase.
The previous method works for general secret-sharing-based MPC protocols,
but some protocols allow a more efficient technique when $\sring_{\smod}$ is a ring~\cite{Mohassel2018-gy,Dalskov2021-rc}.
This technique is called share splitting and allows much more efficient computation of the bit decomposition by
utilizing the additional structure of the $\sring_{\smod}$ compared to prime fields.

\end{myhideenv}

\begin{myhideenv}

\ifdefined\isnotextended
\else
    \begin{table*}[h!]\centering
\begin{tabular}{ccccccc}
\toprule
\multirow{2}{*}{Dataset} & \multirow{2}{*}{Network} & \multirow{2}{*}{MPC} & \multirow{2}{*}{Training Time} & \multicolumn{3}{c}{Consistency Overhead} \\
\cmidrule{5-7}
& & & & \textbf{Ours} & \gls{a:sha3} & \gls{a:ped} \\
\midrule
\multirow{4}{*}{\gls{sc:adult}} & \multirow{2}{*}{LAN} & \multirow{1}{*}{SH} & 2m & 4.8s & 9.1m (114x) & 9.5m (119x)  \\

 &  & \multirow{1}{*}{MAL} & 14.7m & 14.4s & 16.8m (70x) & 10.3m (43x)  \\

 & \multirow{2}{*}{WAN} & \multirow{1}{*}{SH} & 5.9h & 24s & 19.6h (2932x) & 1.7d (6218x)  \\

 &  & \multirow{1}{*}{MAL} & 2.1d & 2.2m & 23.9h (646x) & 2.5d (1596x)  \\

\midrule\multirow{4}{*}{\gls{sc:mnist}} & \multirow{2}{*}{LAN} & \multirow{1}{*}{SH} & 2h & 1.3m & 2.9h (136x) & 22.5h (1074x)  \\

 &  & \multirow{1}{*}{MAL} & 14.6h & 3.9m & 5.6h (86x) & 22.6h (348x)  \\

 & \multirow{2}{*}{WAN} & \multirow{1}{*}{SH} & 4.3d & 2.8m & 2.3w (8293x) & 33.6w (121462x)  \\

 &  & \multirow{1}{*}{MAL} & 6.1w & 29.9m & 2.5w (849x) & 47.7w (16117x)  \\

\midrule\multirow{4}{*}{\gls{sc:cifar}} & \multirow{2}{*}{LAN} & \multirow{1}{*}{SH} & 15.2h & 4.2m & 9.8h (141x) & 1.1w (2777x)  \\

 &  & \multirow{1}{*}{MAL} & 5.3d & 13.3m & 19.1h (86x) & 1.3w (952x)  \\

 & \multirow{2}{*}{WAN} & \multirow{1}{*}{SH} & 4.6w & 8.7m & 7.5w (8758x) & 302.8w (351571x)  \\

 &  & \multirow{1}{*}{MAL} & 48.6w & 1.6h & 8.1w (842x) & 429.9w (44631x)  \\

\midrule\multirow{4}{*}{\gls{sc:qnli}} & \multirow{2}{*}{LAN} & \multirow{1}{*}{SH} & 5.4d & 11.1m & 1.1d (140x) & 16w (14589x)  \\

 &  & \multirow{1}{*}{MAL} & 4.2w & 28.9m & 2.9d (144x) & 38.2w (13291x)  \\

 & \multirow{1}{*}{WAN} & \multirow{1}{*}{SH} & 17.3w & 21.8m & 15.8w (7327x) & 6681w (3094988x)  \\

\bottomrule
\end{tabular}
\caption{Overhead of consistency approaches we evaluate relative to (extrapolated) end-to-end training. Multipliers in parentheses are slowdown over ours. Time is given in seconds (s), minutes (m), hours (h), days (d) and weeks (w).}
\ltab{e2e_training}
\end{table*}

\begin{figure}[t]
    \centering
    \begin{subfigure}[t]{0.45\columnwidth}
        \centering
        \includegraphics[width=1.2\textwidth]{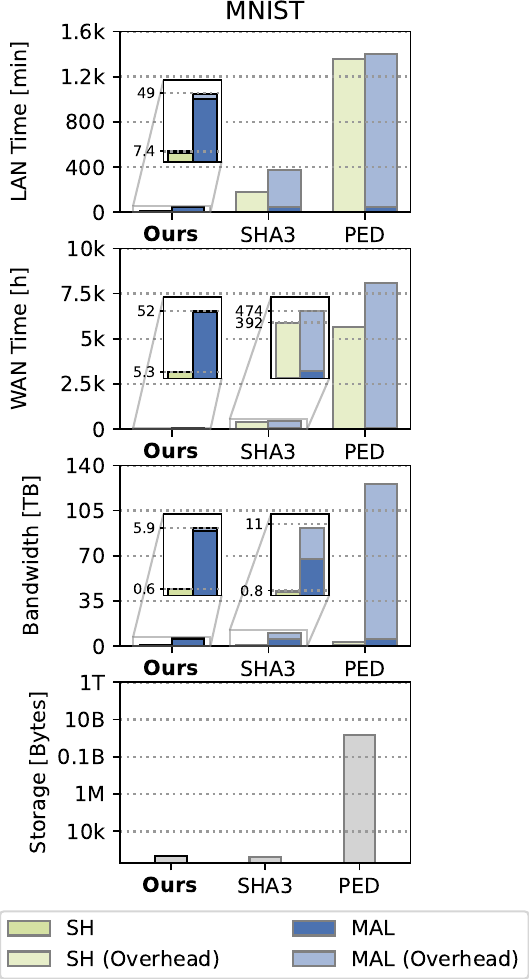}
        \caption{Training}
        \lfig{mnist:training}
    \end{subfigure}
    \hfill
    \begin{subfigure}[t]{0.45\columnwidth}
        \centering
        \raisebox{0.65cm}{\includegraphics[width=0.9\textwidth]{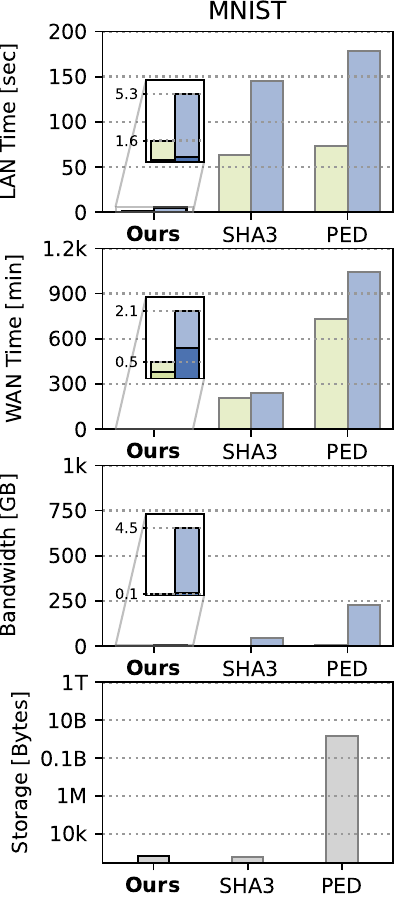}}
        \caption{Inference}
        \lfig{mnist:inference}
    \end{subfigure}
    \caption{The overhead of our system’s consistency protocol for \gls{sc:mnist}.}
    \lfig{mnist}
\end{figure}

\begin{figure}[t]
    \centering
    {\includegraphics[width=0.9\columnwidth]{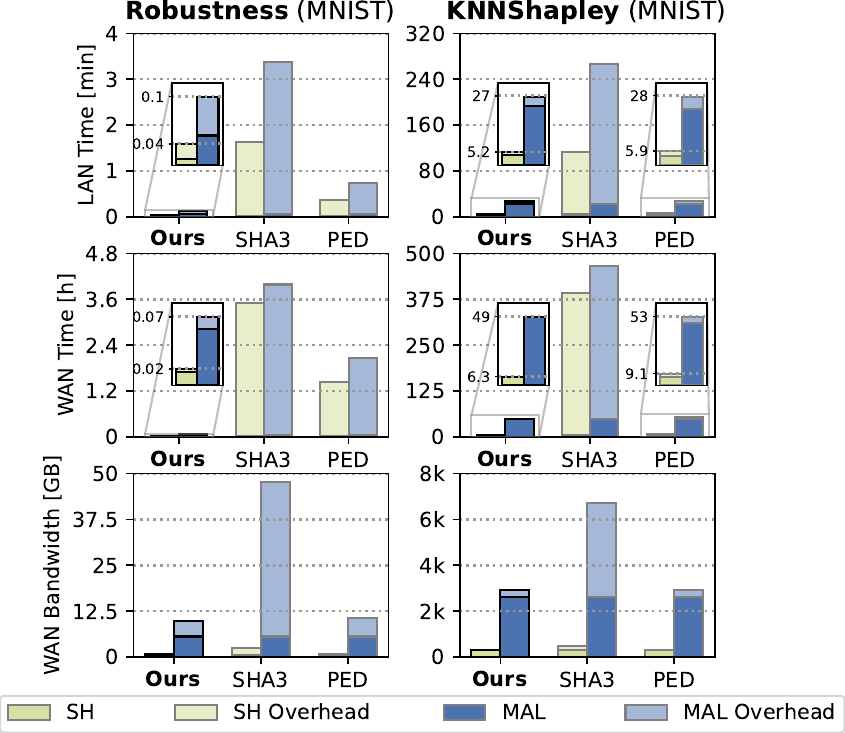}}
    \caption{The overhead of our system’s consistency protocol relative to the cost of the auditing function computation for \gls{sc:mnist}.}
    \lfig{mnist:auditing}
\end{figure}
\fi

\section{Additional Evaluation Results}
\lsec{apx:evalextra}
\raggedcolsend
We present evaluation results for end-to-end training and for \gls{sc:mnist} in this appendix.

\subsection{End-to-End Training}
We report the overhead of the consistency approaches compared to end-to-end training for \gls{sc:adult}, \gls{sc:mnist}, \gls{sc:cifar} and \gls{sc:qnli} in~\rtab{e2e_training}.
The overhead of our approach compared to training is at most \evalnum{0.04x} in the LAN setting and at most \evalnum{0.0011x} in the WAN setting.
This is because our approach largely consists of local computation as opposed to \gls{mpc} computation.

\subsection{MNIST Results}
The cost of a single epoch of \gls{ppml} training and inference for \gls{sc:mnist} compared to the overhead of the consistency approaches are shown in~\rfig{mnist}.
The overhead of our approach is less than \evalnum{1\%} in the \gls{wan} setting, which is in line with results for the other settings in~\rsec{eval}.
The storage overhead for the client for \gls{a:sha3} and our approach is 608 bytes and 720 bytes, respectively, which is the same as for the other scenarios.
The linear storage overhead of \gls{a:ped} results in a concrete storage requirement of \evalnum{1.5}GB.
\rFig{mnist:auditing} shows that our consistency layer has a small overhead compared to the execution of the auditing function.

\end{myhideenv}

\begin{myhideenv}

\section{Auditing Function Definitions}
\lsec{functions_extra}

\subsecspacingbot
\lsec{algorithms_functions}

\newcommand{\algorithmmargin}{}

{
\definecolor{blond}{rgb}{0.98, 0.94, 0.75}
\newcommand{\secret}[1]{{\setlength{\fboxsep}{1.5pt}\colorbox{blond}{#1}}}

\let\x\spredictionX
\renewcommand{\spredictionX}{\secret{$\x$}\xspace}

\let\y\spredictionY
\renewcommand{\spredictionY}{\secret{$\y$}\xspace}

\let\smodelOld\smodel
\renewcommand{\smodel}{\secret{$\smodelOld$}\xspace}

\newcommand{\spartyinputI}{\secret{$\spartyinput_1$}\xspace}
\newcommand{\spartyinputII}{\secret{$\spartyinput_2$}\xspace}
\newcommand{\spartyinputN}{\secret{$\spartyinput_N$}\xspace}

\newcommand{\spartyinputs}{\spartyinputI$, \ldots $\spartyinputN}

\renewcommand{\vec}[1]{\mathbf{#1}}

We provide a detailed description of the auditing functions presented in~\rsec{functions} in this appendix.
We discuss the key algorithmic components of each function and discuss the computational overheads associated with their secure evaluation in \gls{mpc}.
Each function receives, as secret inputs, a list of the sample \x, the prediction \y, the model $\smodelOld$ and the training datasets $\spartyinput_1,\ldots,\spartyinput_N$.
Additionally, each function takes a set of function-specific public auxiliary parameters.
Secret variables in the descriptions of the functions are highlighted in \secret{yellow},
    while unused secret inputs are shaded in light grey.

\subsection{Robustness \& Fairness}
\lsec{functions:apx:robustness_fairness}

The algorithm used to achieve post-hoc robustness and fairness guarantees is based on randomized smoothing,
which is a technique to certify robustness to adversarial perturbations that can be applied to any classifier~\cite{Cohen2019-cn}.
Jovanovic et al.~\cite{Jovanovic2022-tz} present an efficient \gls{fhe} variant of randomized smoothing to certify local robustness (and fairness) guarantees, which we adapt to the \gls{mpc} setting.
In particular, a model $\smodelOld$ is considered robust if it consistently outputs the same prediction $y$ within a radius $R$ around the input $x$,
where the distance is measured in the $p$-norm:
\begin{equation*}
   \label{eq:robustness}
   \forall \delta, \lVert \delta \rVert_p < R: \quad   \smodelOld(x + \delta) = y.
\end{equation*}

Randomized smoothing transforms a classifier into a \emph{smoothed} classifier that returns
the most probable prediction $y$ under noisy perturbations of $x$.
Since the prediction \spredictionY is already known at post-hoc auditing time,
our adaptation focuses solely on the second step of the algorithm, which certifies the statistical soundness of the prediction.
This process can be divided into two key steps:
In \textsc{SampleUnderNoise}, the algorithm samples $n$ perturbed inputs around $\spredictionX$ by adding noise sampled from a Gaussian distribution with
covariance matrix $\Sigma$,
and obtains predictions for each of these samples.
Afterwards, in \textsc{BinPValue}, a one-sided binomial test is conducted to determine whether the predicted label \spredictionY remains invariant across these perturbations with high probability.

\algorithmmargin
\begin{algorithm}
\begin{algorithmic}[1]
    \Function{CertifyRS}{\spredictionX, \spredictionY, \smodel, $\tau$, $\Sigma$, $n$, $\alpha$}
        \State \secret{$counts$} $\leftarrow$ \Call{SampleUnderNoise}{$\smodel$, $\spredictionX$, $n$, $\Sigma$}
        \State $pv$ $\leftarrow$ \Call{BinPValue}{\secret{$counts$}[\spredictionY], $n$, $\tau$}
        \State  \Return{$pv \leq \alpha$}
    \EndFunction
\end{algorithmic}
\end{algorithm}
\algorithmmargin
\noindent To certify that a classifier robustly predicts \spredictionY for an $\ell_2$ ball of radius $R$ around the input $\spredictionX$,
we sample noise from an isotropic Gaussian distribution, i.e., $\Sigma = \sigma \cdot I$.
In \textsc{BinPValue}, we then show that the probability for the classifier to output \spredictionY on the perturbed set is at least $\tau$ with a confidence level of $1-\alpha$,
where $\tau$ is the Gaussian CDF $\Phi$ of $R / \sigma$.

\algorithmmargin
\begin{algorithm}
\begin{algorithmic}[1]
    \Function{$f_\texttt{\gls{f:robustness}}$}{\spredictionX, \spredictionY, \smodel, {\color{gray}\spartyinputs}, $R$, $\sigma$, $n$, $\alpha$}
        \State $\tau$ $\leftarrow$ $\Phi(R / \sigma) \qquad \Sigma \leftarrow \sigma I$
        \State \Return \Call{CertifyRS}{$\spredictionX, \spredictionY, \smodel$, $\tau$, $\Sigma$, $n$, $\alpha$}
    \EndFunction
\end{algorithmic}
\end{algorithm}
\algorithmmargin
\noindent Evaluating the Gaussian-CDF is efficient because all of the parameters are public, allowing for precomputation.
The primary computational overhead arises from generating predictions for the perturbed samples, which requires $n$ model inferences.

A similar approach, based on randomized smoothing, can be applied to individual fairness,
leveraging the well-established link between robustness and individual fairness~\cite{Mukherjee2020-indivfairnessdata, Yurochkin2019-fairness, Ruoss2020-certifiedfairness}.
Specifically, we adopt the formalization of individual fairness by Dwork et al.~\cite{Dwork2011-fairness},
which defines fairness in terms of the Lipschitz continuity of a classifier $\smodelOld$ relative to a distance measure $\sfairDistance$ on the input space:
\begin{equation}
    \label{eq:ind-fairness}
    d_\sdomainY(\smodelOld(x), \smodelOld(x^\prime)) \leq L \cdot \sfairDistance(x, x^\prime) \quad \text{for all } x, x^\prime \in \sdomainX
\end{equation}
where $L > 0$ is a Lipschitz constant and $d_\sdomainY$ is a distance metric in the output space.
The fair metric encodes domain-specific intuition about which samples should be treated similarly by the \gls{ml} model.
Following previous work~\cite{Jovanovic2022-tz,Mukherjee2020-indivfairnessdata, Yurochkin2019-fairness}, we consider input space metrics of the form:
\begin{equation}
    \sfairDistance(x, x^\prime) = (x - x^\prime)^\intercal \Theta (x - x^\prime)
\end{equation}
where $\Theta$ is a symmetric positive definite matrix that encodes the domain-specific interpretation of similarity.
Each entry $\Theta_{i,j}$ in this matrix quantifies the difference between attributes $i$ and $j$ for which two individuals are considered similar.
The parameters $\Theta$ and $L$ should be set by a data regulator or domain expert based on the specific fairness requirements of the application~\cite{Dwork2011-fairness}.
Instead of sampling noise from an isotropic Gaussian, we now incorporate noise from a Gaussian with the covariance matrix $\Sigma = \Theta^{-1}$,
    while setting the prediction probability threshold $\tau = \Phi(\sqrt{1/L})$.
\algorithmmargin
\begin{algorithm}
\begin{algorithmic}[1]
    \Function{$f_\texttt{\gls{f:fairness}}$}{\spredictionX, \spredictionY, \smodel, {\color{gray}\spartyinputs}, $L$, $\Theta$, $n$, $\alpha$}
        \State $\tau$ $\leftarrow$ $\Phi(\sqrt{1 / L}) \qquad \Sigma \leftarrow \Theta^{-1}$
        \State \Return \Call{CertifyRS}{$\spredictionX, \spredictionY, \smodel$, $\tau$, $\Sigma$, $n$, $\alpha$}
    \EndFunction
\end{algorithmic}
\end{algorithm}
\algorithmmargin

\noindent The overhead of this function is similar to that of the robustness function, as the primary cost remains in generating predictions for the perturbed samples within \gls{mpc}.

\subsection{Accountability}
\lsec{functions:apx:accountability_party}
We consider two flavors of accountability: sample attribution~\cite{Koh2017-by} and party attribution~\cite{Lycklama2022-CamelMLSafety}.
The former identifies the influence of individual data samples on a prediction, while the latter attributes responsibility to a \gls{r:inputparty}.

\fakeparagraph{Sample-level Attribution.}
A variety of methods exists to identify the impact of individual data samples on a model, but not all of them are equally amendable to secure computation.
In \oursystem, we leverage a method based on \gls{f:knnshapley} values~\cite{Jia2019-ShapleyValue} which
computes the Shapley values of a KNN classifier on the training data's representation in the model's latent space.
These Shapley values have a closed-form solution that is relatively straightforward to compute, making them well-suited for \gls{mpc}.

{
\newcommand{\alphaI}{\secret{$\alpha_{1}$}}
\newcommand{\alphaII}{\secret{$\alpha_{2}$}}
\newcommand{\alphaN}{\secret{$\alpha_{|D|}$}}

\newcommand{\yalphaI}{\secret{$y_{\alpha_{1}}$}}
\newcommand{\yalphaII}{\secret{$y_{\alpha_{2}}$}}
\newcommand{\yalphaN}{\secret{$y_{\alpha_{|D|}}$}}

\newcommand{\salphaI}{\secret{$s_{\alpha_{1}}$}}
\newcommand{\salphaII}{\secret{$s_{\alpha_{2}}$}}
\newcommand{\salphaN}{\secret{$s_{\alpha_{|D|}}$}}

\newcommand{\lookup}{\secret{$Z$}}
\newcommand{\lookupi}[1]{\lookup[#1]}
The \gls{f:knnshapley} algorithm estimates the contribution of each training sample to a prediction by leveraging distances in latent space through a recursive process involving the $K$ nearest neighbors.
First, the algorithm computes the $L_2$ distance between each training data sample and the prediction $\spredictionX$ in
their latent representations, typically at the model's final layer before the output.
Next, the training samples are sorted in ascending order based on their proximity to $\spredictionX$,
resulting in a list of indices $(\alphaI, \ldots \alphaN)$.
With these indices, the Shapley values can be computed recursively in a single pass.
Each sample's Shapley value is computed based on the value of the previous sample and whether the sample's label matches the prediction $\spredictionY$.

    \begin{algorithm}
        \begin{algorithmic}[1]
            \Function{$f_\texttt{\gls{f:knnshapley}}$}{\spredictionX, \spredictionY, \smodel,\spartyinputs, $K$}
                \State \secret{$D$} $\gets \bigcup_{i=1\ldots N}$ \secret{$\spartyinput_i$}
                \State \secret{$dists$} $\gets$ \Call{LatentSpaceDistances}{\smodel, \spredictionX, \secret{$D$}}
                \State $(\alphaI, \ldots \alphaN) \gets$ \Call{SortedIndices}{\secret{$dists$}}

                \State \lookup $ \gets (\mathbbm{1}{\scriptstyle[\yalphaI = \spredictionY]}, \mathbbm{1}{\scriptstyle[\yalphaII = \spredictionY]}, \ldots \mathbbm{1}{\scriptstyle[\yalphaN = \spredictionY]})$

                \State $\salphaN \gets \frac{\lookupi{|D|}}{N}$

                \For{$ i \gets |D| - 1, 1$}
                    \State \secret{$s_{\alpha_i}$} $ \leftarrow $ \secret{$s_{\alpha_{i+1}}$}$ + (\lookupi{i} - \lookupi{i+1}) \cdot \frac{min(K, i)}{K \cdot i}$
                \EndFor

                \State \Return $[(\alphaI, \salphaI), \ldots (\alphaN, \salphaN)]$

            \EndFunction
        \end{algorithmic}
    \end{algorithm}
}

\noindent The primary overhead of $f_\texttt{\gls{f:knnshapley}}$ is computing the latent representations of the training data and the prediction sample;
however, the representations of the training data can be re-used across multiple audits.

\fakeparagraph{Party-level Attribution.}
Our system includes an algorithm designed to attribute responsibility to parties for model predictions without revealing information about individual data samples.
The function $f_\texttt{\gls{f:camel}}$ is based on the \gls{f:camel} algorithm~\cite{Lycklama2022-CamelMLSafety}.
The key idea of \gls{f:camel} is that if a suspicious prediction $(\spredictionX, \spredictionY)$
is influenced by data from a particular \gls{r:inputparty}, then removing that party’s data will weaken or eliminate the model's behavior for the suspicious prediction.
This approach approximates leave-out models by employing unlearning techniques to remove a party's data from the original model, $\smodel$.
To unlearn the data of party $i$, we use an efficient unlearning technique~\cite{Shan2021-ad}, where the labels of the party's data points are replaced with a uniform probability vector.
This represents the model's output when it is uncertain about its predictions~\cite{Vyas2018-jt, Lee2018-wr}.
After $E$ epochs of finetuning, the loss on the unlearned models, denoted as \secret{$\smodelOld_{-i}$}, becomes sufficient to detect outliers for the prediction $(\spredictionX, \spredictionY)$.
We then compute an influence score using the \gls{mad}, which is a robust dispersion measure.
Parties with a \gls{mad} score exceeding the threshold $\tau$ are flagged as potentially malicious.

\algorithmmargin
\begin{algorithm}[h!]
\begin{algorithmic}[1]
   \Function{$f_\texttt{\gls{f:camel}}$}{\spredictionX, \spredictionY, \smodel, \spartyinputs, $E, \tau$}

   \State \secret{$scores$} $\gets [\;]$
   \For{$i \gets 1, N$}
       \State \secret{$\spartyinput_{-i}$} $\gets \bigcup_{i \neq j}$ \secret{$\spartyinput_{j}$}
       \State \secret{$\smodelOld_{-i}$} $\gets$ \Call{UnlearnParty}{\smodel, \secret{$\spartyinput_i$}, \secret{$\spartyinput_{-i}$}, $E$}
       \State \secret{$S_i$} $\gets \ell(\smodel_{-i}(\spredictionX), \spredictionY)$ \Comment{compute loss}
       \State \secret{$scores$}$.append($\secret{$S_i$}$)$
   \EndFor
   \State \secret{$mads$} $\gets$ \Call{MADScores}{\secret{$scores$}}
   \State \secret{$outliers$} $\gets$ \Call{Outliers}{\secret{$mads$}, $\tau$} \Comment{data holder idx}
   \State \Return  \secret{$outliers$}
   \EndFunction
\end{algorithmic}
\end{algorithm}
\algorithmmargin

\noindent The primary overhead of $f_\texttt{\gls{f:camel}}$ is the unlearning,
    which requires $N \cdot E$ training epochs,
    where $N$ is the number of parties and $E$ represents the number of epochs needed to unlearn a party's dataset.
However, similar to the sample-level attribution function, this process is independent of the prediction sample \spredictionX and, therefore,
can be computed once and cached for multiple audits.
The majority of the remaining overhead of $f_\texttt{\gls{f:camel}}$ is
    $N$ forward passes of \spredictionX, which is significantly cheaper in comparison.

\subsection{Explainability}
\lsec{functions:apx:explainability}

Post-hoc model explanations explain a prediction \spredictionY, based on the features of the input \spredictionX.
In \oursystem, we adapt the \gls{f:shap}~\cite{Lundberg2017-unifiedmodelpredictions} method,
a post-hoc additive feature attribution method that offers local explanations for individual predictions
by computing Shapley values $\sfeature_0,\ldots,\sfeature_{n_x - 1}$ for the $n_x$ features.
\gls{f:shap} computes these by approximating the local decision boundary using a linear model
on a set of feature indicator vectors $\sproximity^\prime \in \{ 0, 1 \}^{n_x}$, i.e.,
\begin{equation*}
\label{eq:explainmodel}
\sexplainmodel(\sproximity^{\prime}) = \sfeature_0 + \sum_{i=1}^{n_x} \sfeature_i \sproximity_i^{\prime}
\end{equation*}
where $\sexplainmodel$ represents the explanation model.
The feature indicator vectors $\sproximity^\prime$ are used to sample variations of the input \spredictionX,
where a `1' in the vector indicates the presence of the corresponding feature from \spredictionX, and a `0' signifies its absence.

\algorithmmargin
    \begin{algorithm}[h]
        \begin{algorithmic}[1]
            \Function{$f_\texttt{\gls{f:shap}}$}{\spredictionX, \spredictionY, \smodel, \spartyinputs, $K$}

                \State \secret{$D$} $\gets \bigcup_{i=1\ldots N}$ \secret{$\spartyinput_i$} $\qquad \quad$ $n_{x} \gets$ \Call{NumFeatures}{\spredictionX}

                \State $\{\sproximity_1, \sproximity_2, \ldots \sproximity_K\} \gets \Call{SampleCoalitions}{K, n_{x}}$

                \State \secret{$\hat{\vec{y}}$} $\gets$ \Call{CreateMarginals}{\spredictionX, \spredictionY, \smodel, \secret{$D$}, $\vec{z}_1, \ldots \vec{z}_K$}

                \State $Z \gets [ \mathbf{1}^K  \; \vert \; \vec{z}_1^\intercal, \vec{z}_2^\intercal, \ldots \vec{z}_K^\intercal]$ \Comment{$Z \in \{0, 1 \}^{K \times n_x + 1}$}

                \State $W \gets diag(\pi(\vec{z}_1), \pi(\vec{z}_2), \ldots \pi(\vec{z}_K))$ \Comment{Weighting $\pi$}

                \State $[$\secret{$\phi_0$}, \ldots \secret{$\phi_{n_x}$}$]$ $\gets (Z^\intercal W  Z)^{-1} \; Z^\intercal W$ \secret{$\vec{\hat{y}}$} %

                \State \Return $[$\secret{$\phi_0$}, \ldots \secret{$\phi_{n_x}$}$]$

            \EndFunction

        \end{algorithmic}
    \end{algorithm}
    \algorithmmargin

The algorithm first generates the feature indicator vectors $\sproximity^\prime$ using the \textsc{SampleCoalitions} function.
For each sampled vector, the algorithm computes the model's prediction $\hat{y}$ on the corresponding sample, \secret{$x^\prime$}, represented by the indicator vector $\sproximity^\prime$.
This prediction step is carried out by the \textsc{CreateMarginals} function, which involves integrating over the marginal distributions of the features absent in \secret{$x^\prime$}.
The algorithm computes $\secret{$\hat{y}$}$ by averaging the predictions for the entire training dataset where each present feature, i.e., $\sproximity_i^\prime = 1$ is masked by the feature $\spredictionX_i$ from the prediction sample.
This results in a forward pass of the full training dataset for each sample $\sproximity^\prime$.
Finally, the algorithm fits the linear model $\sexplainmodel$ on the indicator vectors $\left\{ \sproximity^\prime \right\}$ using the Shapley weighting:

\begin{equation*}
    \pi(\vec{z}_k) = \frac{n_{x} - 1}{\binom{n_{x}}{|\vec{z}_k|} |\vec{z}_k| (n_{x}-|\vec{z}_k|)}.
\end{equation*}
By weighting the samples using this kernel, the parameters of the fitted linear model $\sexplainmodel$ correspond to the Shapley values of the features
at the decision boundary of the model around \spredictionX.

The function $f_\texttt{\gls{f:shap}}$ computes the linear model using its normal equation,
which is efficient under secure computation because only the prediction vector \secret{$\hat{\vec{y}}$} contains private data.
The primary computational overhead is computing the output of the model on the perturbed samples in \textsc{CreateMarginals}, which requires $K \cdot |~D~|$ inferences,
because of the need to integrate over the marginal distribution.

} %

\end{myhideenv}

\end{document}